\documentclass[american,aps,pra,reprint,superscriptaddress,longbibliography]{revtex4-1}

\usepackage{amsmath,amssymb}

\usepackage{bbm}
\usepackage{color}
\usepackage{graphicx}
\usepackage[american]{babel}
\usepackage[utf8]{inputenc}
\usepackage{times}
\usepackage{epstopdf}\DeclareGraphicsRule{.tif}{png}{.png}{`convert #1 `basename #1 .tif`.png}
\usepackage{braket} 				
\usepackage{amsthm}

\newcommand{\ketbra}[2]{|#1\rangle\!\langle#2|}

\definecolor{mygrey}{gray}{0.35}
\definecolor{myblue}{rgb}{0.2,0.2,0.8}
\definecolor{myzard}{cmyk}{0,0,0.05,0}
\definecolor{mywhite}{rgb}{1,1,1}
\definecolor{myred}{rgb}{0.9,0.1,0.}
\usepackage[colorlinks=true,citecolor=myblue,linkcolor=myblue,urlcolor=myblue]{hyperref}

\newtheoremstyle{customStyle1}  
{0pt}       
{0pt}       
{\normalfont}   
{\parindent}        
{\em}  
{. --}   	 
{.5em}       
{\thmname{#1}\thmnumber{ #2}\thmnote{ (#3)}}  

\theoremstyle{customStyle1}

\newtheorem{theorem}{Theorem}
\newtheorem{lemma}[theorem]{Lemma}
\newtheorem{definition}[theorem]{Definition}
\newtheorem{corollary}[theorem]{Corollary}

\newtheorem{proposition}[theorem]{Proposition}

\ifx\proof\undefined
\newenvironment{proof}[1][\protect\proofname]{\par
	\normalfont\topsep6\p@\@plus6\p@\relax
	\trivlist
	\itemindent\parindent
	\item[\hskip\labelsep\scshape #1]\ignorespaces
}{%
\endtrivlist\@endpefalse
}
\providecommand{\proofname}{Proof}
\fi

\newtheorem*{theorem*}{Theorem}
\newtheorem*{lemma*}{Lemma}
\newtheorem*{corollary*}{Corollary}
\newtheorem*{observation*}{Observation}
\newtheorem*{proposition*}{Proposition}

\newcommand{\tr}{\operatorname{\bf{tr}}} 
\newcommand{\mini}[1]{\underset{#1}{\operatorname{\bf{min}}}} 
\newcommand{\maxi}[1]{\underset{#1}{\operatorname{\bf{max}}}} 
\newcommand{\infi}[1]{\underset{#1}{\operatorname{\bf{inf}}}} 
\newcommand{\supr}[1]{\underset{#1}{\operatorname{\bf{sup}}}} 
\newcommand{\id}{\mathbbm{1}}

\let\oldforall\forall
\renewcommand{\forall}{\quad \oldforall}


\begin{document}
\title{Quantifying dynamical coherence with dynamical entanglement }

\author{Thomas Theurer}
\thanks{These two authors contributed equally}
\affiliation{Institute of Theoretical Physics and IQST, Universität Ulm, Albert-Einstein-Allee
	11, D-89069 Ulm, Germany}

\author{Saipriya Satyajit}
\thanks{These two authors contributed equally}
\affiliation{Indian Institute of Technology Bombay, Mumbai 400076, India}

\author{Martin B. Plenio}
\affiliation{Institute of Theoretical Physics and IQST, Universität Ulm, Albert-Einstein-Allee
	11, D-89069 Ulm, Germany}

\begin{abstract}
	Coherent superposition and entanglement are two fundamental aspects of non-classicality. Here we provide a quantitative connection between the two on the level of operations by showing that the dynamical coherence of an operation upper bounds the dynamical entanglement that can be generated from it with the help of additional incoherent operations. 
	In case a particular choice of monotones based on the relative entropy is used for the quantification of these dynamical resources, this bound can be achieved. In addition, we show that an analog to the entanglement potential exists on the level of operations and serves as a valid quantifier for dynamical coherence. 
\end{abstract}

\maketitle


{\em Introduction.} -- 
In the last decades growing evidence has emerged that quantum technologies are able to outperform their classical counterparts, e.g., in communication~\cite{Bennett1984,Gisin2002,Gisin2007} and computation~\cite{Deutsch1985,Nielsen2010,Watrous2018}, but also in sensing~\cite{Degen2017} and metrology~\cite{Wineland1992,Huelga1997}. These operational advantages originate from non-classical traits of quantum physics which are thus considered resources. One of the most important example of such a resource is entanglement~\cite{Plenio2007,Horodecki2009} {which describes correlations between spatially separated systems that are without classical equivalent. Yet, there exist situations in which it is neither natural nor sufficient to describe non-classicality with entanglement alone. This is for example the case if one considers non-composite systems which have no natural concept of locality, whence superposition is considered to be a quantum resource~\cite{Sudarshan1963,Glauber1963a,Glauber1963b}. 
It is important to understand if and how these different notions of non-classicality are connected. As shown in Refs.~\cite{Kim2002,Xiang-bin2002}, optical non-classicality is a prerequisite for the creation of entanglement by beam splitters, and there exists a simple relation between squeezing and the entanglement that can be generated from it by passive optical elements~\cite{Wolf2003}. This finding was used in Ref.~\cite{Asboth2005} to measure optical non-classicality via its entanglement potential, i.e., the amount of two-mode entanglement that can be generated from the field using passive linear optics, auxiliary classical states, and ideal photo detectors. In general, local superposition can be faithfully converted into (multilevel) entanglement using only operations that cannot create these  superpositions~\cite{Vogel2014,Killoran2016,Theurer2017,Regula2018}. In a similar spirit, the activation of coherence and discord into entanglement was studied, e.g., in Refs.~\cite{Piani2011,Gharibian2011,Streltsov2011}. Unified approaches to these three resources were recently presented in Refs.~\cite{Egloff2018,Zhou2019}.

Resource theories have proven beneficial for the systematic study of the various notions of non-classicality present in quantum states~\cite{Vedral1997,Horodecki2003,Aberg2006,Gour2008,Horodecki2013,Brandao2013,Grudka2014,Baumgratz2014,DelRio2015}, partly because they allow to quantify resources in an objective manner. This quantification is achieved with the help of resource measures that satisfy physically motivated constraints such as monotonicity under operations that cannot create the investigated resource. Using such resource measures, it was shown in Ref.~\cite{Streltsov2015} that local superposition in the form of coherence has not only a qualitative, but also a quantitative connection to entanglement:  
for a large class of commonly used resource measures, the amount of entanglement that can be generated from a local state with the help of incoherent operations is upper bounded by the coherence of that state and, for two specific measures, the two quantities coincide. Moreover, also in the case of coherence theory, the entanglement potentials (as measured with any entanglement measure) can be used to define valid coherence measures. 
These results where expanded in Refs.~\cite{Ma2016,Zhu2017PRA,Zhu2017,Zhu2018,Regula2018,Ren2020}.
In this work, we generalize the findings of Ref.~\cite{Streltsov2015} to \textit{operations}. 

As mentioned above, a resource theory of quantum states allows to describe the resources present in \textit{states} in an operationally meaningful way. However, if we speak about operational advantages granted by quantum technologies, we intend to perform certain \textit{tasks} better than it is possible with classical devices. To do this, static resources in the form of resource states have to be converted into dynamical resources by combining them with free operations. Thus, ultimately, we are interested in the quantification of dynamical resources in the form of quantum operations~\cite{Theurer2019}. For this, often quantities such as the resource generation capacity, i.e., the achievable increase in static resources, or the resource cost, i.e., the minimal amount of static resources necessary to simulate the operation in combination with free operations, are employed~\cite{Eisert2000,Collins2001,Nielsen2003,Bennett2003,Mani2015,Xi2015,Garcia2016,Bu2017,Dana2017}. Yet, this approach cannot be used to quantify all relevant properties of quantum operations, e.g.,  their ability to detect coherence~\cite{Theurer2019}. This is one of the reasons why resource theories of operations have been considered recently~\cite{Coecke2016,Zhuang2018,Theurer2019,Yuan2019,Oszmaniec2019,Takagi2019,Wang2019,Liu2020,Liu2019,Seddon2019,Gour2019a,Bauml2019,Gour2019b,Wang2019b,Saxena2019,Gonda2019,Li2020,Takagi2020,Hsieh2020}. As described above, another reason is that it seems to be natural to quantify the value of operations directly without a detour via the quantification of states and finally, since states can be identified with their preparation channels, resource theories of operations are a unifying framework. 

Our work begins with an introduction of the basic framework of dynamical resources, followed by our main results, their discussion, and a conclusion. Detailed proofs are deferred to the Supplemental Material (SM)~\cite{SM}.

\begin{figure}[tb!]
	\centering
	\includegraphics[width=1\linewidth]{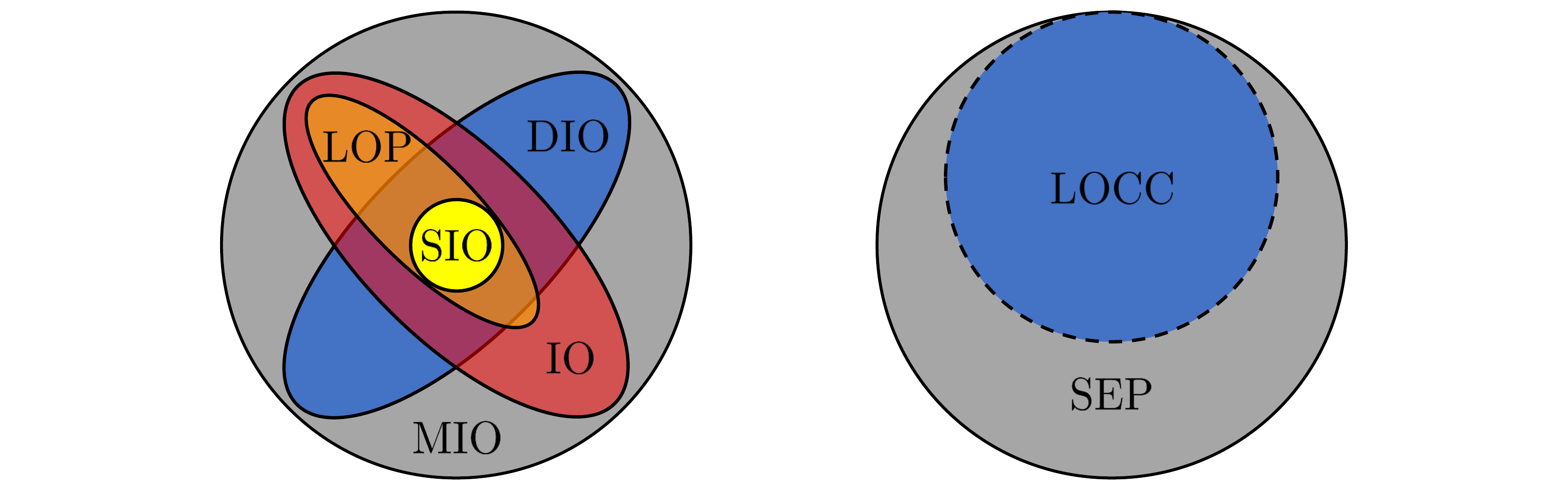}
	\caption{{\bf Inclusion relations of the considered operations.} On the left, different sets of  operations considered free in coherence theories as well as their relations are shown and on the right, we show sets of operations that cannot create entanglement. Since LOCC is not closed, we show its boundary with a dashed line. The closure of LOCC, which we denote by $\rm\overline{LOCC}$, also includes this boundary.}
	\label{fig::VennDiagrams}
\end{figure}

{\em Basic framework.} --
In the following, we denote by $\mathcal{D}_A/\mathcal{I}_A$ the set of quantum states/incoherent states on system $A$ and by $\mathcal{W}_{A|B}$ the set of separable states with respect to a bi-partition into $A$ and $B$. For ease of notation, we define the incoherent states as those diagonal in the computational basis~$\left\{\ket{i}\right\}$. For composite systems, the incoherent pure states are the tensor products of the incoherent pure states of the subsystems. To denote quantum operations, i.e., completely positive and trace preserving (CPTP) maps, we use large Greek letters and, if necessary, denote the parties on which they act by superscripts, i.e., $\Theta^{AB}$ represents a CPTP map acting on $AB$. In principle, such an operation can have different input and output dimensions, therefore $\Theta^{A_{\rm out}B_{\rm out}\leftarrow A_{\rm in}B_{\rm in}}$ would be more appropriate. However, to avoid an unnecessarily lengthy notation, we will not write down the inputs and outputs explicitly but instead, whenever we concatenate operations, demand implicitly that the dimensions fit. For the same reason, we suppress the superscripts if they are clear from the context. 
After these introductory comments, we move to a precise definition of dynamical coherence and entanglement. To begin with, we list sets of operations that, in their relevant contexts, are considered to be free of dynamical coherence or entanglement. 
We depict them in Fig.\ref{fig::VennDiagrams}, for overviews see Refs.~\cite{,Plenio2007,Horodecki2009,Streltsov2017,Chitambar2019}. 

The set of maximally incoherent operations (MIO)~\cite{Aberg2006,Liu2017,Diaz2018} is the maximal set of operations that maps $\mathcal{I}$ to itself. If such an operation can be decomposed into Kraus operators that preserve $\mathcal{I}$ individually, it is called an incoherent operation (IO)~\cite{Baumgratz2014}. The subset of MIO that cannot make use of coherence is called DIO (dephasing-covariant incoherent operations ~\cite{Meznaric2013,Liu2017,Chitambar2016,Marvian2016,Chitambar2016b}) and if this holds again individually for any element of a Kraus decomposition, we speak about strictly incoherent operations (SIO)~\cite{Winter2016}. The set LOP is a superset of SIO and a subset of IO, and denotes the set of free operations (on the wire) in the framework of local operations and physical wires~\cite{Egloff2018}. 
The total dephasing operation (with respect to the incoherent basis) will be denoted by $\Delta$ and is contained in all of these sets. In addition, all of these sets can exactly prepare the states within $\mathcal{I}$, which we therefore call free in this context. 

Moving on to entanglement, we consider the set of local operations and classical communication (LOCC)~\cite{Bennett1996}, the closure of LOCC ($\rm{\overline{LOCC}}$)~\cite{Chitambar2014}, and the set of separable operations (SEP)~\cite{Rains1997,Vedral1998}, which are the maximal set of operations that map separable states to separable states in a complete sense, i.e., even if they are applied to subsystems. Similarly, the states that can be prepared by these sets of operations, i.e., their free states, are exactly the separable ones. 

Each set of operations listed above leads to a different dynamical resource theory (and also to a different static one).   As detailed in the corresponding references, depending on the context, there are valid arguments to consider each of these sets of operations as free. Once the choice of free operations is made, all operations that are not contained in it are called dynamically coherent or entangled \textit{with respect to this specific choice of free operations}.
Our results are largely independent of the specific choices, which shows that there is a deep connection between dynamical coherence and entanglement. Hence, we use $\mathcal{C}$ to represent either SIO, LOP, IO, DIO, or MIO.  Similarly, $\mathcal{E}$ is a placeholder for $\rm LOCC, {\rm\overline{LOCC}}$, or  $\rm{SEP}$. To make clear which partition we are considering, we write, e.g., $\mathcal{E}^{A|B}$ to denote a bi-partition into $A$ and $B$. 

To quantify dynamical resources, we use functions $D(\rho,\sigma)$ (on quantum states $\rho$, $\sigma$) that are jointly convex, contractive under CPTP maps, and zero if and only if $\rho$ is equal to $\sigma$. We call these functions divergences. 
Note that the relative entropy $S(\rho,\sigma)=\tr \left(\rho \log \rho \right))-\tr \left(\rho \log \sigma\right)$, the trace distance $\|\rho-\sigma \|_1$, and many R\'enyi  entropies qualify as divergences. To include destructive measurements into our framework, we associate with positive operator-valued measures (POVMs) given by elements $\Pi_n$ the CPTP maps
\begin{align}\label{eq:CPTPfromPOVM}
	M(\rho)=\sum_n \tr(\Pi_n \rho) \ketbra{x_n}{x_n},
\end{align}
where the states $\ket{x_n}$ are orthonormal. 
We then consider a destructive measurement in $\mathcal{E}$ iff the associated CPTP map with $\ket{x_n}=\ket{n}^A\otimes \ket{n}^B$ is in $\mathcal{E}$. Analogously, with $\ket{x_n}=\ket{n}$, we define destructive measurements in $\mathcal{C}$. 
For quantum instruments $I$ allowing us to do subselection according to a variable $n$, i.e., we obtain with probability $p_n=\tr(\Lambda_n(\rho))$ an output $\rho_n=\Lambda_n(\rho)/p_n$, we use the same construction to define a CPTP map
\begin{align}\label{eq:CPTPfromInstrument}
\tilde{I}(\rho)=\sum_n \Lambda_n(\rho)\otimes\ketbra{x_n}{x_n}.
\end{align}
Treating subselection in this way, we can reduce our analysis to trace preserving operations, since now it is always possible to implement the subselection at a later point with a free measurement. 
This has the additional advantage that the ability to apply or not apply subselection according to a specific variable, which depends on the precise circumstances under which an experiment is realized, has a direct reflection in our framework. 

Let $D$ be a divergence and $\mathcal{S}$ either $\mathcal{D}$ or a subset thereof. As described in Refs.~\cite{Cooney2016,Leditzky2018,Gour2019c,Liu2020,Liu2019}, one can then define the following quantities on operations $\Theta^A,\Lambda^A$:
\begin{align}
D^\mathcal{S}(\Theta, \Lambda)=\supr{\sigma \in \mathcal{S}}\ D \left( \left(\Theta^A\otimes\id^E\right)\sigma, \left(\Lambda^A\otimes\id^E\right)\sigma \right), 
\end{align}
where the optimization over states is understood to include an optimization over different dimensions of an auxiliary system $E$ and (as explained above) $A$ denotes the combination of the (potentially composite) input and output systems of $\Theta$ and $\Lambda$.
Analogously, we also define their measured versions as
\begin{align}
&D^{\mathcal{S}, \mathcal{M}}(\Theta, \Lambda) =\supr{M \in \mathcal{M}}\supr{\sigma \in \mathcal{S}}\ D \left( M\left(\Theta\otimes\id\right)\sigma, M\left(\Lambda\otimes\id\right)\sigma \right), \nonumber
\end{align}
where $\mathcal{M}$ denotes the set of CPTP maps associated with a set of POVMs as defined in of Eq.~(\ref{eq:CPTPfromPOVM}). In this work, we are mainly interested in the case where this is either the set of free destructive measurements ($\mathcal{M}={\rm free}$) within a given resource theory or the set of all destructive measurements ($\mathcal{M}={\rm all}$). With a bit of abuse of notation, we also write $D^{\mathcal{S},{\rm no}}$ for $D^\mathcal{S}$ from here on, indicating that no measurement was included. Let us note here the well known fact that the supremum over the states is always achievable for the dimension of the auxiliary space equal to the input dimension of $\Theta$ 
(which is a simple consequence of joint convexity of $D$ and the Schmidt decomposition). Note also that there exist examples of $D$ for which $D^{\mathcal{S},{\rm no}}=D^{\mathcal{S},{\rm all}}$, e.g., if $D$ equals the trace distance, while for others, such as $D$ equaling the relative entropy, this is not true: the measured relative entropy is equal to the relative entropy if and only if the two arguments commute~\cite{Fuchs1996}. We will see later why these two examples are of special interest to us. 
Motivated by Refs.~\cite{Liu2020,Liu2019,Gour2019a}, the quantities introduced above allow us to define the following functionals.
\begin{definition}\label{def:Meas}
	Let $\mathcal{M}\in\{\rm free, all, no\}$, where $\rm free$ denotes the set of destructive measurements which are free \textit{within the set of operations $\mathcal{E}$}. Then, for a divergence $D$ and $\mathcal{S} \in \left\{\mathcal{D},\mathcal{W}_{AE_A|BE_B}\right\}$, we define
	\begin{align}
		&E_{\mathcal{E}^{A|B},D}^{\mathcal{S},\mathcal{M}}\left(\Theta\right):=\infi{\Lambda\in \mathcal{E}^{A|B}}\ D^{\mathcal{S},\mathcal{M}}(\Theta, \Lambda). \nonumber
	\end{align}
	In complete analogy, we define
	\begin{align}
		&C_{{\mathcal{C}},D}^{\mathcal{S},\mathcal{M}}\left(\Theta\right):=\infi{\Lambda\in \mathcal{C}}\ D^{\mathcal{S},\mathcal{M}}(\Theta, \Lambda) \nonumber
	\end{align} 
	for $\mathcal{S}\in\{\rm \mathcal{D},\mathcal{I}\}$ and $\mathcal{M}\in \{\rm free,all, no \}$.
\end{definition}
As we will show now, the functionals defined above are so-called resource measures and monotones on the level of operations, which was also partially proven before in Refs.~\cite{Liu2020,Liu2019,Gour2019a} (for the cases where no measurements are included).
We call a functional $F$ from quantum operations to the non-negative real numbers a resource monotone if $F$ is monotonic under concatenation with free operations, i.e., if $F(\Theta)\ge F\left(\Phi_2\left(\Theta\otimes\id\right)\Phi_1\right)$ for all  $\Phi_i$ that are free within the resource theory and for all $\Theta$ CPTP. Due to the construction in Eq.~(\ref{eq:CPTPfromInstrument}), we will assume that also the $\Phi_i$ are deterministic. 
If $F$ is in addition faithful, i.e., zero if and only if $\Theta$ is free, we call it a resource measure.

\begin{proposition}\label{prop:monotones}
	Let $\mathcal{S}$ denote either the set of all quantum states or the set of free quantum states, $\mathcal{M} \in  \left\{ \rm{all,free, no} \right\}$, $\tilde{\mathcal{M} }\in  \left\{ \rm{all, no} \right\}$, $\tilde{\mathcal{E}} \in \left\{\rm{\overline{LOCC}}, \rm{SEP} \right\}$, and $\tilde{\mathcal{C}} \in \{\rm LOP, IO ,MIO \}$.
	Then 
	\begin{align*}
		E_{\tilde{\mathcal{E} }^{A|B},D}^{\mathcal{S},\mathcal{M}}\left(\Theta\right),\ C_{{\rm MIO},D}^{\mathcal{S},\mathcal{M}}\left(\Theta\right),\ 
		C_{{\mathcal{C}},D}^{\mathcal{D},\tilde{\mathcal{M}}}\left(\Theta\right),\ \text{and}\ C_{{\tilde{\mathcal{C} }},D}^{\mathcal{D},{\rm free}}\left(\Theta\right)
	\end{align*}
	are convex resource measures. The remaining functionals from Def.~\ref{def:Meas}  are convex resource monotones, with those defined via destructive DIO/SIO measurements vanishing on all operations.
\end{proposition}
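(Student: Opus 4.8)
The plan is to verify convexity, monotonicity, faithfulness (for the measures), and the degeneracy of the DIO/SIO-measurement functionals in turn, treating the entanglement family $E$ and the coherence family $C$ in parallel, since the only analytic inputs are joint convexity and contractivity of $D$, together with the closure properties of the free sets. Throughout I write $F$ for a generic functional from Def.~\ref{def:Meas}.

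\emph{Convexity.} For fixed $M\in\mathcal{M}$, $\sigma\in\mathcal{S}$ and $\Lambda$, the map $\Theta\mapsto D\!\left(M(\Theta\otimes\id)\sigma,M(\Lambda\otimes\id)\sigma\right)$ is convex because $\Theta\mapsto M(\Theta\otimes\id)\sigma$ is affine and $D$ is jointly convex; taking a supremum preserves convexity, and the same estimate with $\sup_x(f+g)\le\sup_x f+\sup_x g$ shows that $D^{\mathcal{S},\mathcal{M}}$ is in fact \emph{jointly} convex in $(\Theta,\Lambda)$. To push the infimum through a mixture $\Theta=p\Theta_1+(1-p)\Theta_2$, I would take $\Lambda_i$ that are $\varepsilon$-optimal for $\Theta_i$, use convexity of the relevant free set to conclude $\Lambda:=p\Lambda_1+(1-p)\Lambda_2$ is again free, and apply joint convexity of $D^{\mathcal{S},\mathcal{M}}$; letting $\varepsilon\to0$ gives convexity of $F$. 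This works verbatim for every entry of the list since each of SIO, LOP, IO, DIO, MIO, $\overline{\mathrm{LOCC}}$, SEP is convex.

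\emph{Monotonicity.} Write $\Theta'=\Phi_2(\Theta\otimes\id_R)\Phi_1$ with $\Phi_1,\Phi_2$ free and deterministic, and for a free $\Lambda$ set $\Lambda':=\Phi_2(\Lambda\otimes\id_R)\Phi_1$; since each free set is closed under composition and under tensoring with an identity, $\Lambda'$ is free. For any $M'\in\mathcal{M}$, $\sigma'\in\mathcal{S}$ I would substitute $\tilde\sigma:=(\Phi_1\otimes\id_E)\sigma'$, which still lies in $\mathcal{S}$ (trivially when $\mathcal{S}=\mathcal{D}$, and because free operations map free states to free states when $\mathcal{S}$ is the free set), and absorb the trailing operation into the measurement, $M'':=M'(\Phi_2\otimes\id_E)$, which is again a \emph{free} destructive measurement because it is a composition of free maps still outputting the orthonormal register states. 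Then $M'(\Theta'\otimes\id)\sigma'=M''(\Theta\otimes\id)\tilde\sigma$ and likewise for $\Lambda$, so $D\!\left(M'(\Theta'\otimes\id)\sigma',M'(\Lambda'\otimes\id)\sigma'\right)\le D^{\mathcal{S},\mathcal{M}}(\Theta,\Lambda)$; taking suprema over $M',\sigma'$ and then the infimum over $\Lambda$ gives $F(\Theta')\le F(\Theta)$. For $\mathcal{M}=\mathrm{no}$ the same estimate holds with the absorption step replaced by contractivity of $D$ under $\Phi_2\otimes\id$.

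\emph{Faithfulness.} ``Free $\Rightarrow F=0$'' is immediate from $\Lambda=\Theta$. For the converse I would use compactness of the free set and lower semicontinuity of $D^{\mathcal{S},\mathcal{M}}(\Theta,\cdot)$ to produce a minimiser $\Lambda^\ast$ with $M(\Theta\otimes\id)\sigma=M(\Lambda^\ast\otimes\id)\sigma$ for all admissible $M,\sigma$, and then argue that the available pairs $(\sigma,M)$ are tomographically complete, forcing $\Theta=\Lambda^\ast$ and hence $\Theta$ free. A maximally entangled $\sigma$ recovers the Choi states when $\mathcal{S}=\mathcal{D}$, and product inputs already suffice when $\mathcal{S}$ is the free set; for $\mathcal{M}\in\{\mathrm{all},\mathrm{no}\}$ completeness of the measurements is clear, which handles $C_{\mathcal{C}}^{\mathcal{D},\tilde{\mathcal{M}}}$ for every $\mathcal{C}$. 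The delicate case, and the main obstacle, is $\mathcal{M}=\mathrm{free}$: I would show that for $\tilde{\mathcal{E}}\in\{\overline{\mathrm{LOCC}},\mathrm{SEP}\}$ the free destructive measurements already contain all local product measurements in arbitrary bases, which are complete by local tomography, and that for $\tilde{\mathcal{C}}\in\{\mathrm{LOP},\mathrm{IO},\mathrm{MIO}\}$ the free destructive measurements again form an informationally complete family (for IO and MIO essentially every POVM qualifies, since its measurement channel always outputs incoherent states and admits a rank-one, incoherence-preserving Kraus form). This completeness is exactly what separates the measures from the remaining monotones.

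\emph{Vanishing for DIO/SIO measurements.} A free destructive measurement in DIO or SIO is dephasing-covariant with classical output, hence equals $M\Delta$ and reads only the populations of its input. Restricting to free inputs, I would take $\Lambda=\Delta\Theta\Delta$, which acts as a classical stochastic channel on the populations and therefore lies in $\mathrm{SIO}\subseteq\mathrm{DIO}$; on incoherent $\sigma$ it reproduces the diagonal of $(\Theta\otimes\id)\sigma$ exactly, so no such $M$ can distinguish $\Theta$ from $\Lambda$ and the functional vanishes for every $\Theta$. I expect the complementary observation to be worth recording as well: with unrestricted inputs the same $\Lambda$ fails on coherent states (it cannot match a population read-off after, e.g., a basis change), so these functionals remain genuine, non-faithful monotones rather than degenerate objects.
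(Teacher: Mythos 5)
Your convexity and monotonicity arguments coincide with the paper's (restrict the infimum to $\Phi_2(\Lambda\otimes\id)\Phi_1$, absorb $\Phi_1$ into the state and $\Phi_2$ into the measurement, use contractivity when no measurement is present), so those parts are fine. The genuine gap is in your faithfulness argument for the coherence measures evaluated on \emph{free} input states, i.e.\ $C_{\rm MIO,D}^{\mathcal{I},\mathcal{M}}$. Your strategy is uniformly ``the admissible pairs $(\sigma,M)$ are tomographically complete, forcing $\Theta=\Lambda^\ast$'', and you assert that ``product inputs already suffice when $\mathcal{S}$ is the free set.'' That is true for the entanglement measures, because separable states span the full operator space (and indeed gives a slightly cleaner, uniform treatment than the paper's case split between the maximality of SEP and the Choi-state trick for $\overline{\rm LOCC}$). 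But it is false for coherence: incoherent states span only the diagonal subspace, even with a reference system attached, so $M(\Theta\otimes\id)\sigma=M(\Lambda^\ast\otimes\id)\sigma$ for all $\sigma\in\mathcal{I}$ only yields $\Theta\Delta=\Lambda^\ast\Delta$, which does not force $\Theta=\Lambda^\ast$. The paper closes this case by a different mechanism: MIO is by definition the \emph{maximal} set preserving $\mathcal{I}$, so a non-free $\Theta$ sends some incoherent state to a coherent one, which no $\Lambda\in\rm MIO$ can reproduce; the two outputs are then distinct states and faithfulness of $D$ (plus informational completeness of the allowed measurements) finishes the job. Your argument as written does not establish faithfulness of $C_{\rm MIO,D}^{\mathcal{I},\mathcal{M}}$; you need this maximality step (or the equivalent observation that $\Theta\Delta=\Lambda^\ast\Delta$ with $\Lambda^\ast\in\rm MIO$ already implies $\Theta\in\rm MIO$).

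On the DIO/SIO-measurement functionals you deviate from the paper: you prove vanishing only for incoherent inputs via $\Lambda=\Delta\Theta\Delta\in\rm SIO$, and you flag that for $\mathcal{S}=\mathcal{D}$ the same $\Lambda$ fails. The paper asserts vanishing for all of these functionals, arguing that DIO measurement statistics are determined by output populations and that SIO/DIO implement arbitrary population transformations; note that this matching argument also implicitly requires the output populations of $\Theta$ to be a function of the input populations, which holds only when the optimization is restricted to $\mathcal{I}$. Your caveat therefore identifies a real subtlety rather than a flaw in your own reasoning, but as a proof of the proposition as stated it leaves the $\mathcal{S}=\mathcal{D}$ instances of the vanishing claim unaddressed; you should either supply an argument for that case or make explicit that the claim is being read with $\mathcal{S}=\mathcal{I}$.
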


\begin{figure}[tb!]
	\centering
	\includegraphics[width=\linewidth]{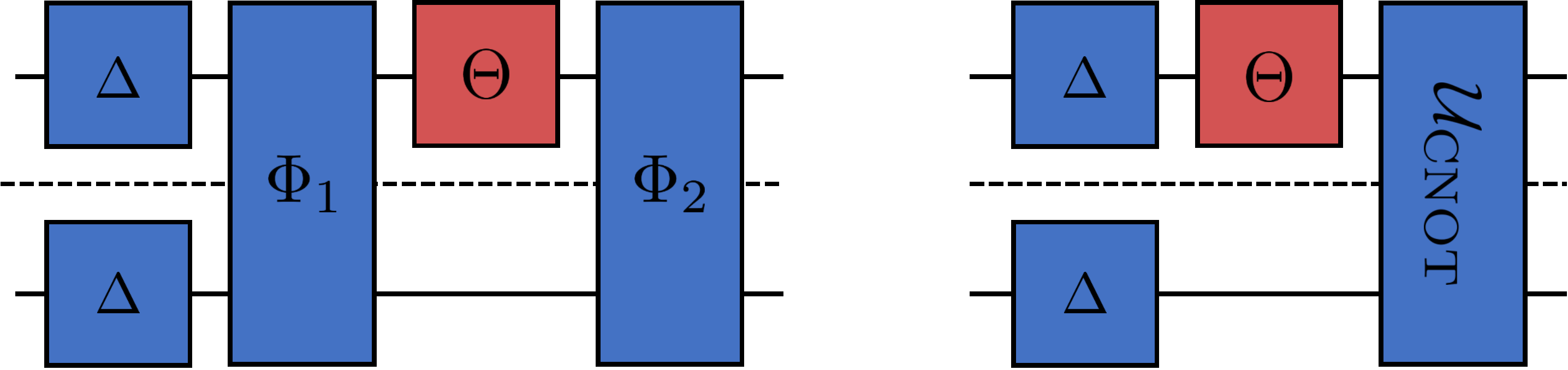}
	\caption{{\bf Converting coherence to entanglement.} The amount of dynamical entanglement created by the setup on the left (where $\Phi_i$ are operations in $\mathcal{C}$) is upper bounded by the coherence of the operation $\Theta$, if related monotones are used for the quantification (see Thm.~\ref{thm:inequality}). The dashed line represents the spatial separation with respect to which we take our bi-partition into parties $A$ and $B$, which are represented by the solid lines. On the right, we show an optimal setup in case we are considering two relative entropy based measures (see Thm.~\ref{thm:main}).}
	\label{fig::generalScheme}
\end{figure}
A detailed discussion of the relation of these monotones is provided in the SM.
Of special interest is the case of $D$ equal to the trace distance. Then, the measures have a direct operational interpretation in the single-shot regime: they are proportional to the best bias achievable in the guessing game where one has to distinguish the given operation from the least distinguishable free operation~\cite{Matthews2009,Lami2018} (with the help of the states $\mathcal{S}$ and the outcomes of the measurements $\mathcal{M}$). Therefore, if we consider the trace distance based measure with free states and free destructive measurements, this represents the usefulness of the operation under consideration within the given resource theory: an operation that is barely distinguishable from a free operation using other free operations and states can only lead so a very small operational advantage~\cite{Smirne2017,Theurer2019,Milz2019}, which is the reason why we focused on destructive measurements in Def.~\ref{def:Meas}. An example of such a measure is the NSID measure considered in Ref.~\cite{Theurer2019}. 

{\em Main results.} --
We begin by showing that the dynamical coherence with respect to $\mathcal{C}$ upper bounds the dynamical entanglement with respect to $\mathcal{E}$ that can be generated from it using the setup depicted in Fig.~\ref{fig::generalScheme} on the left, where $\Theta$ is the operation under investigation and  $\Phi_i$ operations in $\mathcal{C}$. 
\begin{theorem}\label{thm:inequality}
	Let $\Phi_i\in \mathcal{C}$. Then	
	\begin{align}
		&C_{{\mathcal{C}},D}^{\mathcal{I},{\rm no}}\left(\Theta\right)   \ge E_{{\mathcal{E}}^{A|B},D}^{\mathcal{W},{\rm no}}\left(\Phi_2\left(\Theta^A\otimes \id^B\right)\Phi_1\Delta\right) \nonumber
	\end{align}
	and
	\begin{align}
		&C_{{\mathcal{C}},D}^{\mathcal{I},{\rm all}}\left(\Theta\right)   \ge E_{{\mathcal{E}}^{A|B},D}^{\mathcal{W},{\rm all}}\left(\Phi_2\left(\Theta^A\otimes \id^B\right)\Phi_1\Delta\right). \nonumber
	\end{align}
\end{theorem}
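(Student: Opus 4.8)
The plan is to lift the state‑level argument of Ref.~\cite{Streltsov2015} to operations: approximate $\Theta$ from within $\mathcal{C}$, substitute the approximant into the circuit of Fig.~\ref{fig::generalScheme}, and show that this yields a channel which is free for entanglement at a divergence‑cost controlled by the dynamical coherence of $\Theta$. Fix $\varepsilon>0$ and $\mathcal{M}\in\{\mathrm{no},\mathrm{all}\}$, and choose $\Lambda\in\mathcal{C}$ with $D^{\mathcal{I},\mathcal{M}}(\Theta,\Lambda)\le C_{\mathcal{C},D}^{\mathcal{I},\mathcal{M}}(\Theta)+\varepsilon$ (such a $\Lambda$ exists since $C_{\mathcal{C},D}^{\mathcal{I},\mathcal{M}}$ is an infimum over $\mathcal{C}$). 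Write $\Psi:=\Phi_2(\Theta^A\otimes\id^B)\Phi_1\Delta$ for the channel on the right‑hand side and $\Xi:=\Phi_2(\Lambda^A\otimes\id^B)\Phi_1\Delta$ for the candidate free channel.

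\emph{Step 1: $\Xi\in\mathcal{E}^{A|B}$.} I expect this to be the crux. The leading total dephasing $\Delta=\Delta^A\otimes\Delta^B$ makes the $AB$‑part of the input to $\Phi_1$ diagonal in the product basis, and since every operation in $\mathcal{C}$ maps $\mathcal{I}$ into itself the $AB$‑register stays diagonal after $\Phi_1$, after $\Lambda^A\otimes\id^B$, and after $\Phi_2$. Hence $\Delta\Phi_i\Delta=\Phi_i\Delta$ and all intermediate states are incoherent on $AB$, so that
\begin{align*}
	\Xi=(\Delta\Phi_2\Delta)\circ(\Lambda^A\otimes\id^B)\circ(\Delta\Phi_1\Delta).
\end{align*}
The middle factor is a tensor product of a channel on $A$ with a channel on $B$, hence LOCC. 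Each outer factor acts as a classical channel, $\ketbra{jk}{jk}\mapsto\sum_{pq}r^{(i)}_{jk,pq}\ketbra{pq}{pq}$ with $r^{(i)}_{jk,pq}=\tr\!\big(\ketbra{pq}{pq}\Phi_i(\ketbra{jk}{jk})\big)$, which is realized by LOCC: both parties measure their registers in the computational basis, exchange the outcomes $j$ and $k$, jointly sample $(p,q)$ from $r^{(i)}_{jk,\cdot}$ using shared randomness, and re‑prepare $\ket p$ and $\ket q$. As LOCC is closed under composition, $\Xi\in\mathrm{LOCC}^{A|B}\subseteq\overline{\mathrm{LOCC}}^{A|B}\subseteq\mathrm{SEP}^{A|B}$, which covers all admissible choices of $\mathcal{E}$. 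This is exactly where the input dephasing is indispensable: without it $\Phi_1$ could be a generalized CNOT --- an element of $\mathcal{C}$ that is maximally entangling --- and $\Xi$ would fail to be free.

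\emph{Step 2: $D^{\mathcal{W},\mathcal{M}}(\Psi,\Xi)\le D^{\mathcal{I},\mathcal{M}}(\Theta,\Lambda)$.} Fix $\sigma\in\mathcal{W}_{AE_A|BE_B}$ and put $\omega:=(\Phi_1\otimes\id)(\Delta\otimes\id)\sigma$. By the same diagonalization as in Step 1, applied on the $AB$‑factor, $\omega$ is classical on $A$: $\omega=\sum_a\ketbra{a}{a}^A\otimes\tilde\nu_a$ with $p_a:=\tr\tilde\nu_a$ and $\nu_a:=\tilde\nu_a/p_a$ a state on $R:=BE_AE_B$. Writing $M$ for an element of $\mathcal{M}$ (the identity if $\mathcal{M}=\mathrm{no}$) and $N_a(\cdot):=M(\Phi_2\otimes\id)(\cdot\otimes\nu_a)$, linearity together with joint convexity of $D$ give
\begin{align*}
	D\big(M(\Psi\otimes\id)\sigma,\,M(\Xi\otimes\id)\sigma\big)
	&=D\Big(\sum_a p_a\,N_a\big(\Theta(\ketbra{a}{a})\big),\ \sum_a p_a\,N_a\big(\Lambda(\ketbra{a}{a})\big)\Big)\\
	&\le\sum_a p_a\,D\big(N_a\Theta(\ketbra{a}{a}),\,N_a\Lambda(\ketbra{a}{a})\big).
\end{align*}
Each $N_a$ is CPTP, and for $\mathcal{M}=\mathrm{all}$ it is itself a destructive measurement (appending a fixed state, applying $\Phi_2$, and measuring amounts to a POVM on the output of $\Theta$); since $\ketbra{a}{a}$ is incoherent, contractivity of $D$ (for $\mathcal{M}=\mathrm{no}$) resp.\ the definition of $D^{\mathcal{I},\mathrm{all}}$ (for $\mathcal{M}=\mathrm{all}$) bounds each summand by $D^{\mathcal{I},\mathcal{M}}(\Theta,\Lambda)$. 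As $\sum_a p_a=\tr\omega=1$, this gives $D\big(M(\Psi\otimes\id)\sigma,M(\Xi\otimes\id)\sigma\big)\le D^{\mathcal{I},\mathcal{M}}(\Theta,\Lambda)$, and taking the supremum over $\sigma$ and over $M\in\mathcal{M}$ proves the claim.

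Combining the two steps, $E_{\mathcal{E}^{A|B},D}^{\mathcal{W},\mathcal{M}}(\Psi)\le D^{\mathcal{W},\mathcal{M}}(\Psi,\Xi)\le D^{\mathcal{I},\mathcal{M}}(\Theta,\Lambda)\le C_{\mathcal{C},D}^{\mathcal{I},\mathcal{M}}(\Theta)+\varepsilon$ for $\mathcal{M}\in\{\mathrm{no},\mathrm{all}\}$, and letting $\varepsilon\to0$ yields the two displayed inequalities. The only genuinely delicate point is Step 1 --- understanding why the initial total dephasing confines $\Xi$ to LOCC even though the $\Phi_i$ may themselves create entanglement --- while Step 2 is a routine contractivity/convexity chain.
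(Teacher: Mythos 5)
Your proof is correct and follows essentially the same route as the paper's: the key step in both is that $\Phi_2(\Lambda^A\otimes\id^B)\Phi_1\Delta$ for $\Lambda\in\mathcal{C}$ is an LOCC (measure-in-the-incoherent-basis, communicate, re-prepare with shared randomness) channel, combined with the observation that the leading $\Delta$ reduces separable inputs to classical mixtures of incoherent product states so that joint convexity and contractivity bound the divergence by $D^{\mathcal{I},\mathcal{M}}(\Theta,\Lambda)$. The only presentational differences are that you work directly with a near-optimal $\Lambda\in\mathcal{C}$ rather than routing through $C_{\rm MIO}$, and you absorb the paper's auxiliary-system lemma (their Lem.~\ref{lem:obvious}) into your Step 2; neither changes the substance.
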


As stated in the following Theorem, the (generalized) controlled NOT operation, which acts on two subsystems of equal dimension $d$ as 
\begin{align}
	U_\mathrm{CNOT}=\sum_{i}\sum_{j} \ketbra{i}{i}\otimes\ketbra{\mathrm{mod}(i+j,d)}{j},
\end{align}
allows to attain this bound.
Obviously $U_\mathrm{CNOT}$ is unitary (we denote the corresponding CPTP map by $\mathcal{U}_\mathrm{CNOT}$) and contained in SIO.
\begin{theorem}\label{thm:inverse}
	For $S(\rho,\sigma)$ the relative entropy and $\mathrm{dim} B=\mathrm{dim} A$,
	\begin{align}
		C_{{\mathcal{C}},S}^{\mathcal{I},{\rm no}}\left(\Theta\right)   \le E_{{\mathcal{E}}^{A|B},S}^{\mathcal{W},{\rm no}}\left(\mathcal{U}_\mathrm{CNOT}\left(\Theta^A\otimes \id^B\right)\Delta\right) \nonumber
	\end{align}
	holds.
\end{theorem}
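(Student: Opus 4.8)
The statement is the reverse of the (no-measurement) bound in Theorem~\ref{thm:inequality}, so together with that result (taken with $\Phi_1=\id$ and $\Phi_2=\mathcal{U}_\mathrm{CNOT}\in\mathrm{SIO}\subseteq\mathcal{C}$) it upgrades the inequality to an equality; it therefore suffices to prove $E_{\mathcal{E}^{A|B},S}^{\mathcal{W},\mathrm{no}}(\Xi)\ge C_{\mathcal{C},S}^{\mathcal{I},\mathrm{no}}(\Theta)$ for $\Xi:=\mathcal{U}_\mathrm{CNOT}(\Theta^A\otimes \id^B)\Delta$.

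First I would put the coherence side into closed form. Writing $\eta_\sigma:=(\Theta^A\otimes\id^E)\sigma$ for $\sigma\in\mathcal{I}$, every comparison operation $\Lambda\in\mathcal{C}\subseteq\mathrm{MIO}$ obeys $(\Lambda\otimes\id)\sigma\in\mathcal{I}$, so each term is bounded below by the relative entropy of coherence $S(\eta_\sigma,\Delta\eta_\sigma)=\min_{\delta\in\mathcal{I}}S(\eta_\sigma,\delta)$. Crucially, this lower bound is met for all $\sigma$ at once by the single free operation $\Lambda^\star=\Delta\Theta\Delta\in\mathrm{SIO}\subseteq\mathcal{C}$: since $\Delta\sigma=\sigma$ one has $(\Lambda^\star\otimes\id)\sigma=\Delta\eta_\sigma$, the incoherent state closest to $\eta_\sigma$. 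Hence no minimax interchange is needed and
\[
C_{\mathcal{C},S}^{\mathcal{I},\mathrm{no}}(\Theta)=\sup_{\sigma\in\mathcal{I}}S\!\left(\eta_\sigma,\Delta\eta_\sigma\right).
\]

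For the entanglement side I would lower-bound the infimum over free $\Gamma\in\mathcal{E}$ by restricting the supremum over $\mathcal{W}$ to the product probes $\tau_\sigma:=\sigma_{AE}\otimes\ketbra{0}{0}_B$, which are separable across $AE|B$. Using $\Delta\tau_\sigma=\tau_\sigma$ and that $\ket{0}_B$ is incoherent, a direct computation gives $(\Xi\otimes\id)\tau_\sigma=\hat\eta_\sigma:=(\mathcal{U}_\mathrm{CNOT}\otimes\id)(\eta_\sigma\otimes\ketbra{0}{0}_B)$. Because every $\Gamma\in\mathcal{E}$ sends separable states to separable states even on a subsystem, $(\Gamma\otimes\id)\tau_\sigma$ is separable, so $S((\Xi\otimes\id)\tau_\sigma,(\Gamma\otimes\id)\tau_\sigma)\ge\min_{\zeta\,\mathrm{sep}}S(\hat\eta_\sigma,\zeta)$. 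The right-hand side does not depend on $\Gamma$, so taking $\sup_\sigma$ and then $\inf_\Gamma$ yields $E_{\mathcal{E}^{A|B},S}^{\mathcal{W},\mathrm{no}}(\Xi)\ge\sup_\sigma\min_{\zeta\,\mathrm{sep}}S(\hat\eta_\sigma,\zeta)$.

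It then remains to prove the per-probe state inequality $\min_{\zeta\,\mathrm{sep}}S(\hat\eta_\sigma,\zeta)\ge S(\eta_\sigma,\Delta\eta_\sigma)$, which is the operational incarnation of the fact (at the level of states, Ref.~\cite{Streltsov2015}) that a maximally correlated state has equal relative entropies of entanglement and of coherence. I expect this to be the main obstacle: the naive route of inverting $\mathcal{U}_\mathrm{CNOT}$ and decoding $B$ fails because separable states, and free entanglement operations, may themselves carry coherence on $B$. The key observation that saves the argument is that, precisely because the probe $\sigma$ is incoherent, $\eta_\sigma$---and hence $\hat\eta_\sigma$---is block diagonal in the computational basis of the spectator $E$. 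Data processing under dephasing of this flag lets me restrict the minimization to $\zeta$ that are block diagonal in $E$ with separable blocks; additivity of the relative entropy over the orthogonal $E$-blocks then reduces the claim to the single-system identity $\min_{\zeta\,\mathrm{sep}}S(\mathcal{U}_\mathrm{CNOT}(\rho\otimes\ketbra{0}{0}),\zeta)=S(\rho,\Delta\rho)$ of Ref.~\cite{Streltsov2015}, applied blockwise. Chaining the three displays gives $E_{\mathcal{E}^{A|B},S}^{\mathcal{W},\mathrm{no}}(\Xi)\ge\sup_\sigma S(\eta_\sigma,\Delta\eta_\sigma)=C_{\mathcal{C},S}^{\mathcal{I},\mathrm{no}}(\Theta)$, as required.
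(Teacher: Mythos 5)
Your proposal is correct and follows essentially the same route as the paper: reduce the coherence side to the coherence power $\sup_{\sigma\in\mathcal{I}}S(\Theta\sigma,\Delta\Theta\sigma)$ via the free comparison channel $\Delta\Theta\Delta\in{\rm SIO}$ (the paper's Lemma on $C_{\mathcal{C},S}^{\mathcal{I},{\rm no}}$), restrict the entanglement side to product probes $\sigma\otimes\ketbra{0}{0}^B$ so that the output is the maximally correlated state, and lower-bound its relative entropy of entanglement by $S(\Delta\Theta\sigma)-S(\Theta\sigma)$. The only cosmetic differences are that you keep the spectator $E$ and dispose of it by dephasing the flag and blockwise additivity, where the paper removes ancillas up front via its Lemmas 8 and 9, and that you invoke the state-level identity of Streltsov \emph{et al.} where the paper cites the bound $R_E^{A|B}(\rho^{AB})\ge S(\rho^A)-S(\rho^{AB})$ directly.
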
 
Combining Thms.~\ref{thm:inequality} and~\ref{thm:inverse}, we arrive at one of our main results.
\begin{theorem}\label{thm:main}
	For $S(\rho,\sigma)$ the relative entropy,
	\begin{align}
		C_{{\mathcal{C}},S}^{\mathcal{I},{\rm no}}\left(\Theta\right)   = \supr{\Phi_1,\Phi_2 \in \mathcal{C}}\  E_{{\mathcal{E}}^{A|B},S}^{\mathcal{W},{\rm no}}\left(\Phi_2\left(\Theta^A\otimes \id^B\right)\Phi_1\Delta\right). \nonumber
	\end{align}
	The supremum is achieved for $\mathrm{dim} B=\mathrm{dim} A$, $\Phi_1=\id$ and $\Phi_2=\mathcal{U}_\mathrm{CNOT}$. 
\end{theorem}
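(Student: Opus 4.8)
The plan is to deduce Theorem~\ref{thm:main} directly from the two preceding results by a sandwich argument; essentially all of the analytic content is already contained in Theorems~\ref{thm:inequality} and~\ref{thm:inverse}, so what remains is bookkeeping. First I would record the feasibility of the claimed optimizer: the identity channel lies in $\mathrm{SIO}$ (a single Kraus operator equal to $\id$ is strictly incoherent), $\mathcal{U}_\mathrm{CNOT}\in\mathrm{SIO}$ as already noted just before Theorem~\ref{thm:inverse}, and $\mathrm{SIO}$ is contained in each of $\mathrm{LOP},\mathrm{IO},\mathrm{DIO},\mathrm{MIO}$ (see Fig.~\ref{fig::VennDiagrams}). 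Hence, for every admissible choice of $\mathcal{C}$, the pair $(\Phi_1,\Phi_2)=(\id,\mathcal{U}_\mathrm{CNOT})$ together with an ancilla $B$ of dimension $\dim B=\dim A$ is one of the configurations over which the supremum on the right-hand side of Theorem~\ref{thm:main} is taken.

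Next, for the bound $C_{\mathcal{C},S}^{\mathcal{I},{\rm no}}(\Theta)\ge\sup_{\Phi_1,\Phi_2\in\mathcal{C}}E_{\mathcal{E}^{A|B},S}^{\mathcal{W},{\rm no}}\bigl(\Phi_2(\Theta^A\otimes\id^B)\Phi_1\Delta\bigr)$, I would apply the first (no-measurement) inequality of Theorem~\ref{thm:inequality} with the divergence specialized to $D=S$: for arbitrary $\Phi_1,\Phi_2\in\mathcal{C}$ and arbitrary ancilla $B$,
\[
C_{\mathcal{C},S}^{\mathcal{I},{\rm no}}(\Theta)\ \ge\ E_{\mathcal{E}^{A|B},S}^{\mathcal{W},{\rm no}}\bigl(\Phi_2(\Theta^A\otimes\id^B)\Phi_1\Delta\bigr).
\]
Since the left-hand side does not depend on $\Phi_1,\Phi_2$ (nor on $B$), taking the supremum over all such configurations on the right-hand side leaves the inequality intact, which is exactly the desired bound.

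For the reverse bound I would invoke Theorem~\ref{thm:inverse}, which for $\dim B=\dim A$ states
\[
C_{\mathcal{C},S}^{\mathcal{I},{\rm no}}(\Theta)\ \le\ E_{\mathcal{E}^{A|B},S}^{\mathcal{W},{\rm no}}\bigl(\mathcal{U}_\mathrm{CNOT}(\Theta^A\otimes\id^B)\Delta\bigr),
\]
and observe that its right-hand side is precisely the objective of Theorem~\ref{thm:main} evaluated at the feasible point $(\Phi_1,\Phi_2)=(\id,\mathcal{U}_\mathrm{CNOT})$ with $\dim B=\dim A$ (here $\Phi_1\Delta=\Delta$); it is therefore no larger than the supremum. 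Chaining the two displayed bounds forces equality throughout, and in particular shows that the supremum is attained, and attained at exactly the configuration named in the statement. The main obstacle has, in effect, already been overcome in Theorems~\ref{thm:inequality} and~\ref{thm:inverse}; the only points requiring (minor) care here are that the supremum in Theorem~\ref{thm:main} is to be read over the same family of configurations — including the freedom to choose the ancilla system $B$ — for which Theorem~\ref{thm:inequality} was proved and which contains the optimizer supplied by Theorem~\ref{thm:inverse}, together with the membership check for $(\id,\mathcal{U}_\mathrm{CNOT})$ carried out above.
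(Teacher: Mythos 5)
Your proposal is correct and is essentially identical to the paper's own proof, which simply combines Theorems~\ref{thm:inequality} and~\ref{thm:inverse} in the same sandwich fashion; your added feasibility check for $(\Phi_1,\Phi_2)=(\id,\mathcal{U}_\mathrm{CNOT})$ is a harmless elaboration of what the paper leaves implicit.
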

This Theorem shows that dynamical entanglement is intimately connected to dynamical  coherence: the coherence is in one-to-one correspondence to the entanglement that can be generated from it by the protocol depicted in Fig.~\ref{fig::generalScheme} on the left, which  only involves auxiliary operations that are free from the coherence perspective. Moreover, as shown in Fig.~\ref{fig::generalScheme} on the right, the optimal generation scheme does not require a pre-processing and only a fixed post-processing. This should be compared to Refs.~\cite{Piani2011,Gharibian2011,Streltsov2011, Streltsov2015}, where the controlled NOT operation plays a central role too.
As a Corollary, we find
\begin{corollary}\label{cor:conversion}
	An operation $\Theta$ can be converted to an operation outside $\mathcal{E}$ with operations in $\mathcal{C}$ if and only if $\Theta$ is not in $\rm MIO$. 
\end{corollary}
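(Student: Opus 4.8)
The plan is to prove the two implications separately; the common thread is that the conversion protocol of Fig.~\ref{fig::generalScheme} is preceded by the total dephasing $\Delta$. \emph{If $\Theta\notin\mathrm{MIO}$, then $\Theta$ can be converted out of $\mathcal{E}$:} By Prop.~\ref{prop:monotones}, $C_{\mathrm{MIO},S}^{\mathcal{I},\mathrm{no}}$ is a faithful resource measure, so $C_{\mathrm{MIO},S}^{\mathcal{I},\mathrm{no}}(\Theta)>0$, and since $\mathcal{C}\subseteq\mathrm{MIO}$ the infimum defining $C_{\mathcal{C},S}^{\mathcal{I},\mathrm{no}}$ runs over a smaller set, whence $C_{\mathcal{C},S}^{\mathcal{I},\mathrm{no}}(\Theta)\ge C_{\mathrm{MIO},S}^{\mathcal{I},\mathrm{no}}(\Theta)>0$. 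Choosing $B$ with $\dim B=\dim A$ and invoking Thm.~\ref{thm:inverse} gives $E_{\mathcal{E}^{A|B},S}^{\mathcal{W},\mathrm{no}}\big(\mathcal{U}_{\mathrm{CNOT}}(\Theta^A\otimes\id^B)\Delta\big)\ge C_{\mathcal{C},S}^{\mathcal{I},\mathrm{no}}(\Theta)>0$. Were $\mathcal{U}_{\mathrm{CNOT}}(\Theta^A\otimes\id^B)\Delta$ in $\mathcal{E}^{A|B}$, using it as $\Lambda$ in the infimum defining $E_{\mathcal{E}^{A|B},S}^{\mathcal{W},\mathrm{no}}$ would make that quantity vanish; so it lies outside $\mathcal{E}^{A|B}$. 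Since $\Delta\in\mathcal{C}$ and $\mathcal{U}_{\mathrm{CNOT}}\in\mathrm{SIO}\subseteq\mathcal{C}$, this is exactly a conversion of $\Theta$ out of $\mathcal{E}$ using operations in $\mathcal{C}$ (with $\Phi_1=\id$ and $\Phi_2=\mathcal{U}_{\mathrm{CNOT}}$).

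\emph{If $\Theta\in\mathrm{MIO}$, then $\Theta$ cannot be converted out of $\mathcal{E}$:} Fix $\Phi_1,\Phi_2\in\mathcal{C}$ and any auxiliary $B$, and set $\Psi:=\Phi_2(\Theta^A\otimes\id^B)\Phi_1\Delta$. As $\mathcal{C}\subseteq\mathrm{MIO}$, as $\mathrm{MIO}$ is closed under composition, and as $\Theta^A\otimes\id^B\in\mathrm{MIO}^{AB}$ whenever $\Theta^A\in\mathrm{MIO}^A$ (tensoring with the identity preserves diagonality in the product incoherent basis), the map $\Psi':=\Phi_2(\Theta^A\otimes\id^B)\Phi_1$ is in $\mathrm{MIO}^{AB}$, so $\Psi=\Psi'\circ\Delta$ with $\Delta$ the total dephasing of $AB$ in the product incoherent basis. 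It therefore suffices to show $\mathrm{MIO}^{AB}\circ\Delta\subseteq\mathrm{LOCC}^{A|B}$, since $\mathrm{LOCC}\subseteq\overline{\mathrm{LOCC}}\subseteq\mathrm{SEP}$ and hence $\mathrm{LOCC}^{A|B}\subseteq\mathcal{E}^{A|B}$ for every admissible $\mathcal{E}$. For this, write $\Psi(\rho)=\Psi'\big(\sum_{ij}\langle ij|\rho|ij\rangle\,\ketbra{ij}{ij}\big)=\sum_{ij}\langle ij|\rho|ij\rangle\,\Psi'(\ketbra{ij}{ij})$; since $\Psi'\in\mathrm{MIO}$, each $\Psi'(\ketbra{ij}{ij})$ is an incoherent state $\sum_{kl}q^{(ij)}_{kl}\ketbra{kl}{kl}$ in which every $\ketbra{kl}{kl}$ is a product of computational-basis projectors on $A$ and on $B$. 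Thus $\Psi$ is a measure-and-prepare channel realized by local computational-basis measurements on $A$ and $B$, classical communication of the outcomes, joint classical sampling $(i,j)\mapsto(k,l)$, and local preparation of computational-basis states; such a channel is in $\mathrm{LOCC}^{A|B}$, and it remains so when tensored with the identity on further subsystems. Hence $\Psi\in\mathcal{E}^{A|B}$.

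The inclusion $\mathrm{MIO}^{AB}\circ\Delta\subseteq\mathrm{LOCC}^{A|B}$ is the step that carries real content and is exactly what makes the leading $\Delta$ indispensable: without it, $\mathcal{U}_{\mathrm{CNOT}}$ alone already sends the free operation $\id$ outside $\mathcal{E}$. Everything else is bookkeeping with infima and with inclusions of operation classes. The measure-and-prepare picture renders the crucial step transparent, but I expect the one place needing genuine care is checking that the resulting LOCC implementation survives tensoring with auxiliary systems (equivalently, that the output stays separable across the cut once the auxiliaries are attached to the two parties).
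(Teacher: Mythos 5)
Your proof is correct, and the ``if'' direction follows the paper's own route (faithfulness of $C_{\rm MIO,S}^{\mathcal{I},{\rm no}}$ from Prop.~\ref{prop:monotones}, the inclusion $\mathcal{C}\subseteq{\rm MIO}$, Thm.~\ref{thm:inverse}, and the observation that a free $\Theta$ forces the infimum to vanish). For the ``only if'' direction you genuinely diverge: the paper deduces it from the equality of Thm.~\ref{thm:main} together with the faithfulness of $E_{{\rm SEP},S}^{\mathcal{W},{\rm no}}$ and $E_{{\rm LOCC_{r\ge3}},S}^{\mathcal{W},{\rm no}}$ (which in Prop.~\ref{prop:monotones} rests on a Choi-isomorphism argument over separable input states), whereas you prove the inclusion $\Phi_2(\Theta\otimes\id)\Phi_1\Delta\in{\rm LOCC}^{A|B}$ directly by exhibiting the measure-and-prepare implementation. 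That implementation is in fact the same three-round LOCC protocol that appears inside the paper's proof of Thm.~\ref{thm:inequality}, so you have effectively inlined the relevant piece of Thm.~\ref{thm:inequality} rather than invoking Thm.~\ref{thm:main} plus faithfulness of the entanglement measures. What your route buys is self-containedness and a slightly stronger conclusion stated explicitly (membership in ${\rm LOCC_3}$, hence in every admissible $\mathcal{E}$, without needing the entanglement monotones to be faithful); what the paper's route buys is brevity given the machinery already established, and it automatically covers the quantitative statement that the generated dynamical entanglement is zero, not merely that the output channel is free. Your closing worry about tensoring with auxiliaries is not an actual gap for the corollary as stated: the claim is only that the converted \emph{channel} lies in $\mathcal{E}^{A|B}$, and an LOCC channel tensored with local identities assigned to the respective parties is trivially still LOCC.
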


\begin{figure}[tb!]
	\centering
	\includegraphics[width=\linewidth]{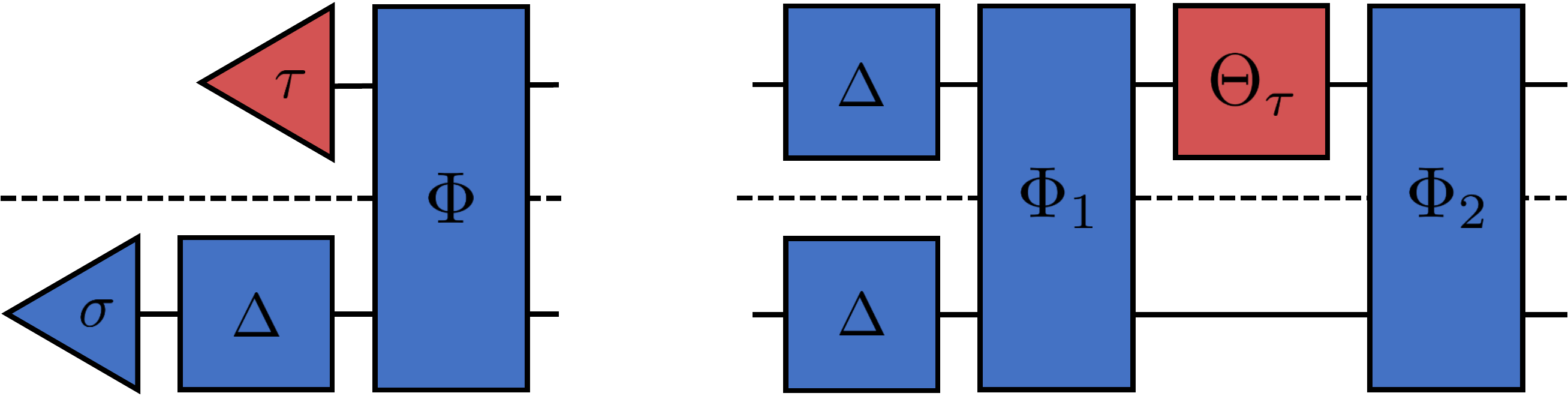}
	\caption{{\bf Reduction to states.} 
	Assume that $\Phi$ is an incoherent operation. With the setup depicted on the left, it is then possible to create static entanglement if and only if the quantum state $\tau$ is coherent~\cite{Streltsov2015}. As discussed in more detail in the SM, for the preparation and replacement channels $\Theta_{\tau}$ with output $\tau$, our setup depicted on the right includes this result on resource theories of states. 
	}
	\label{fig::StateReduction}
\end{figure}
As detailed in the SM and shown in Fig.~\ref{fig::StateReduction}, if $\Theta_\tau$ is a preparation or, more general, a replacement channel with output $\tau$, i.e., $\Theta_\tau \rho=\tau \tr \rho$, by identifying $\tau$ with $\Theta_\tau$, we recover the analog results of the above Theorems and Corollary for resource theories of states which were presented in Ref.~\cite{Streltsov2015}.

{\em Measuring dynamical coherence with dynamical entanglement.} --
Above, we discussed how monotones for dynamical coherence and entanglement bound each other. In this section, we take a complementary approach and define coherence monotones with the help of entanglement monotones. 
\begin{theorem}
	Let $E_{\mathcal{E}^{AE_A|B}}$ be a (convex) resource monotone with respect to the set of operations $\mathcal{E}^{AE_A|B}$. Then 
	\begin{align}\label{eq:defCohWithEnt}
	C_{\mathcal{C}}^{\mathcal{E},E}\left(\Theta\right):=\supr{\Lambda_i\in \mathcal{C}}\ E_{\mathcal{E}^{AE_A|B}}\left(\Lambda_2\left(\Theta^{A} \otimes \id^{E_AB} \right)\Lambda_1\Delta\right) 
	\end{align}
	is a (convex) resource monotone with respect to the operations $\mathcal{C}$. If $E_{\mathcal{E}^{AE_A|B}}$ is in addition faithful, Eq.~(\ref{eq:defCohWithEnt}) defines a measure with respect to $\rm MIO$.
\end{theorem}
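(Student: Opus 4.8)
My plan is to treat the two assertions separately. For \emph{monotonicity}, fix deterministic $\Phi_1,\Phi_2\in\mathcal{C}$ and a CPTP map $\Theta$, and let $F$ denote the ancilla on which the identity acts in $\Phi_2(\Theta\otimes\id^F)\Phi_1$. In the supremum defining $C_{\mathcal{C}}^{\mathcal{E},E}(\Phi_2(\Theta\otimes\id^F)\Phi_1)$ I would use $(\Phi_2(\Theta\otimes\id^F)\Phi_1)\otimes\id^{E_AB}=(\Phi_2\otimes\id^{E_AB})(\Theta^A\otimes\id^{FE_AB})(\Phi_1\otimes\id^{E_AB})$ to absorb the two outer factors into the optimization variables, $\Lambda_2':=\Lambda_2(\Phi_2\otimes\id^{E_AB})$ and $\Lambda_1':=(\Phi_1\otimes\id^{E_AB})\Lambda_1$. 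Since each of the sets $\mathcal{C}\in\{\mathrm{SIO},\mathrm{LOP},\mathrm{IO},\mathrm{DIO},\mathrm{MIO}\}$ is closed under composition and under tensoring with an identity channel, $\Lambda_i'\in\mathcal{C}$, and after relabelling the ancilla $FE_A$ as $E_A$ the operation $\Lambda_2'(\Theta^A\otimes\id^{E_AB})\Lambda_1'\Delta$ is one of those occurring in the supremum for $C_{\mathcal{C}}^{\mathcal{E},E}(\Theta)$; hence $C_{\mathcal{C}}^{\mathcal{E},E}(\Phi_2(\Theta\otimes\id^F)\Phi_1)\le C_{\mathcal{C}}^{\mathcal{E},E}(\Theta)$. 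Convexity follows because, for fixed $\Lambda_1,\Lambda_2$, the map $\Theta\mapsto\Lambda_2(\Theta^A\otimes\id)\Lambda_1\Delta$ is linear, so convexity of $E$ gives $E_{\mathcal{E}^{AE_A|B}}(\Lambda_2((\sum_i p_i\Theta_i)\otimes\id)\Lambda_1\Delta)\le\sum_i p_i\,C_{\mathcal{C}}^{\mathcal{E},E}(\Theta_i)$, and taking the supremum over $\Lambda_1,\Lambda_2$ on the left keeps the bound; non-negativity is inherited from $E$.

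For \emph{faithfulness with respect to $\mathrm{MIO}$} (assuming $E$ faithful), suppose first $\Theta\in\mathrm{MIO}$ and let $\Omega=\Lambda_2(\Theta^A\otimes\id^{E_AB})\Lambda_1\Delta$ be arbitrary with $\Lambda_i\in\mathcal{C}$. Using $\Delta(\rho)=\sum_k\langle k|\rho|k\rangle\ketbra{k}{k}$ one gets $\Omega(\rho)=\sum_k\langle k|\rho|k\rangle\,\omega_k$ with $\omega_k:=\Lambda_2(\Theta^A\otimes\id^{E_AB})\Lambda_1(\ketbra{k}{k})$; since $\mathcal{C}\subseteq\mathrm{MIO}$ and $\mathrm{MIO}$ is closed under composition and under tensoring with $\id$ (a diagonal state of $AE_AB$ being a mixture of products of diagonal states), each $\omega_k$ is diagonal in the computational basis and hence separable across $AE_A|B$, while the $\ket{k}$ form a product basis for that cut. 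Thus $\Omega$ is a measure-and-prepare channel built from a \emph{local} measurement followed by the preparation of separable states, so $\Omega\in\mathrm{LOCC}^{AE_A|B}\subseteq\mathcal{E}^{AE_A|B}$; faithfulness of $E$ gives $E_{\mathcal{E}^{AE_A|B}}(\Omega)=0$, hence $C_{\mathcal{C}}^{\mathcal{E},E}(\Theta)=0$. Conversely, suppose $\Theta\notin\mathrm{MIO}$. Because $\mathcal{C}\subseteq\mathrm{MIO}$, we have $C_{\mathcal{C},S}^{\mathcal{I},\mathrm{no}}(\Theta)\ge C_{\mathrm{MIO},S}^{\mathcal{I},\mathrm{no}}(\Theta)>0$, the last inequality by Proposition~\ref{prop:monotones} (which makes $C_{\mathrm{MIO},S}^{\mathcal{I},\mathrm{no}}$ a measure with respect to $\mathrm{MIO}$); so by Theorem~\ref{thm:main}, $E_{\mathcal{E}^{A|B},S}^{\mathcal{W},\mathrm{no}}(\mathcal{U}_{\mathrm{CNOT}}(\Theta^A\otimes\id^B)\Delta)>0$, whence this operation is not in $\mathcal{E}^{A|B}$. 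Since $\mathcal{U}_{\mathrm{CNOT}}\in\mathrm{SIO}\subseteq\mathcal{C}$, choosing $\Lambda_1=\id$, $\Lambda_2=\mathcal{U}_{\mathrm{CNOT}}$ and the trivial ancilla $E_A$ in Eq.~(\ref{eq:defCohWithEnt}), faithfulness of $E$ forces $C_{\mathcal{C}}^{\mathcal{E},E}(\Theta)\ge E_{\mathcal{E}^{A|B}}(\mathcal{U}_{\mathrm{CNOT}}(\Theta^A\otimes\id^B)\Delta)>0$.

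I expect the crux to be the implication $\Theta\in\mathrm{MIO}\Rightarrow C_{\mathcal{C}}^{\mathcal{E},E}(\Theta)=0$: the key observation is that the dephasing $\Delta$ built into Eq.~(\ref{eq:defCohWithEnt}) forces $\Omega$ to factor through a classical register, so that its preparation step can only output diagonal — hence separable — states, after which membership in $\mathrm{LOCC}$ is immediate. Everything else is routine: the closure properties of $\mathcal{C}$ and $\mathrm{MIO}$, the bookkeeping of the ancilla $E_A$, the fact that the incoherent basis of $AE_AB$ is the product of those of $AE_A$ and $B$ (so that $\Delta$ really is a \emph{local} measurement), and the reverse implication, which is a direct appeal to Proposition~\ref{prop:monotones} and Theorem~\ref{thm:main}.
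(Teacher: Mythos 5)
Your proposal is correct and follows essentially the same route as the paper: the monotonicity and convexity arguments (absorbing $\Phi_1,\Phi_2$ into the optimization variables using closure of $\mathcal{C}$ under composition, relabelling the ancilla, and exploiting linearity of $\Theta\mapsto\Lambda_2(\Theta\otimes\id)\Lambda_1\Delta$) are exactly those in the Supplemental Material. For faithfulness the paper simply cites Corollary~\ref{cor:conversion}; you unfold that citation into its two constituent arguments — the measure-and-prepare/LOCC structure of $\Lambda_2(\Theta\otimes\id)\Lambda_1\Delta$ for $\Theta\in\mathrm{MIO}$ (the same mechanism used in the proof of Theorem~\ref{thm:inequality}) and the CNOT construction via Proposition~\ref{prop:monotones} and Theorem~\ref{thm:main} for the converse — which is the same mathematics made explicit, and handles the ancilla $E_A$ a bit more carefully than the paper's terse reference.
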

This shows that an analog of the entanglement potential discussed for states in Refs.~\cite{Asboth2005,Streltsov2015} also exists on the level of operations: the maximal amount of entanglement that can be generated from an operation by the method depicted in Fig.~\ref{fig::generalScheme} using an additional local auxiliary system serves as a valid measures for the coherence of this operation. 

{\em Discussion.} --  In all the above Theorems, we used conversion schemes as depicted in Fig.~\ref{fig::generalScheme} where initially, we apply a global total dephasing operation.  Whilst one might think that it may be more elegant or natural not to include the total dephasing operation $\Delta$, to our knowledge, it is unavoidable. Moreover, implicitly, it is also necessary in the state case presented in Ref.~\cite{Streltsov2015}: initially, the auxiliary system has to be in an incoherent state, which can be enforced by putting a $\Delta$ as shown on the left of Fig.~\ref{fig::StateReduction}. The same holds true if one considers the transformation of optical non-classicality into entanglement with a beam splitter (or, more generally, a passive linear optics network): the non-classicality of the initial state is only converted faithfully into entanglement if the other input port of the beam splitter is connected to a classical state. 

{\em Conclusions.} --
In this work, we quantitatively connected dynamical entanglement and coherence. 
Our findings not only uncover how two of the most fundamental non-classical traits of quantum mechanics are connected on the level of operations, which is of foundational interest, but also allow to apply findings from the emerging resource theories of dynamical entanglement and coherence to the respective other theory. On a more practical level, our results shed new light on the resources required to obtain operations outside of LOCC. Such operations are a necessary prerequisite to obtain operational advantages in quantum communication. In particular, it might be of interest to apply our findings to quantum key distribution. As shown in Ref.~\cite{Shor2000}, it is not necessary to create entanglement for secret key distribution, but it is sufficient to use channels that could create entanglement in principle, i.e., operations which posses dynamical entanglement. Is it then possible to, e.g., bound key rates with measures of dynamical entanglement? And, using our results, can this be connected to the encoding in non-orthogonal states, i.e., to dynamical coherence? These questions are subject to future investigation.
Moreover, our findings can help to uncover the origin of operational advantages in quantum computation, where entanglement and coherence are widely believed to play a central role. 
As we argued in the introduction, it is natural to investigate the relevance of these resources from a dynamical perspective. Our results suggest that one might want to focus onto dynamical coherence, since it is equivalent to the dynamical entanglement that can be generated from it by a controlled NOT operation, which is frequently used in various quantum algorithms. An improved understanding of the resources responsible for operational advantages in quantum computation in turn will allow for a more systematic construction of quantum algorithms.

\begin{acknowledgments}
	We acknowledge helpful discussions with Ho-Joon Kim, Michele Masini, Ludovico Lami, Dario Egloff, and Sai Vinjanampathy. TT and MBP acknowledge financial support by the ERC Synergy Grant BioQ (grant no 319130). SS acknowledges DAAD WISE and the IQST for financial support of a visit to Ulm University during which this project was initiated.
\end{acknowledgments}

%

\newpage
\clearpage

\onecolumngrid

	\section*{Supplemental Material: Quantifying dynamical coherence with dynamical entanglement}

	In this Supplemental Material we give the proofs of the results presented in the main text and some further details. In particular, we discuss how the monotones introduced in the main text bound each other and how we recover the results of Ref.~\cite{Streltsov2015} for the special case of preparation channels.

	\section{Technical Lemmas}\label{sec:lemmas}
	\setcounter{theorem}{7}
	Here we present three Lemmas that we will use in Sec.~\ref{sec:proofs} to prove our results presented in the main text. The first two will allow us to conclude that the auxiliary systems of some of the monotones introduced in the main text are not necessary in special cases that are relevant for us. The third concerns a simplification of dynamical coherence measures based on relative entropies.
	We begin with a Lemma concerning optimizations over separable states.
	\begin{lemma}\label{lem:obvious} 
		Let $D(\rho,\sigma)$ be a divergence and $\Theta, \Lambda$ CPTP. Then
		\begin{alignat}{1}
			\maxi{\sigma \in \mathcal{W}_{AE_A|BE_B}} &D\left(\left(\Theta^{AB}\Delta^{AB}\otimes\id^{E_A E_B}\right) \sigma, \left(\Lambda^{AB}\Delta^{AB}\otimes\id^{E_A E_B}\right) \sigma\right) \nonumber \\
			&=\maxi{\sigma \in \mathcal{W}_{A|B}} D\left(\Theta^{AB}\Delta^{AB} \sigma, \Lambda^{AB}\Delta^{AB} \sigma\right) 
		\end{alignat}
		and the same holds for the measured version, i.e.,
		\begin{alignat}{1}
			\supr{M \in \mathcal{M}}\ \maxi{\sigma \in \mathcal{W}_{AE_A|BE_B}} &D\left(M\left(\Theta^{AB}\Delta^{AB}\otimes\id^{E_A E_B}\right) \sigma, M \left(\Lambda^{AB}\Delta^{AB}\otimes\id^{E_A E_B}\right) \sigma\right) \nonumber \\
			=&\supr{M \in \mathcal{M}}\ \maxi{\sigma \in \mathcal{W}_{A|B}} D\left(M\Theta^{AB}\Delta^{AB} \sigma,M \Lambda^{AB}\Delta^{AB} \sigma\right),
		\end{alignat}
		where $\mathcal{M}$ can either be the set of free destructive measurements in $\mathcal{E}$ or the set of all destructive measurements.
	\end{lemma}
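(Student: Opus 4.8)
The plan is to prove the two inequalities of the claimed identity separately, keeping a measurement $M$ fixed throughout (or simply omitting $M$, which proves the first identity) and taking the supremum over $M$ only at the very end. Throughout I write $\pi_{\alpha\gamma}:=\ketbra{\alpha}{\alpha}^A\otimes\ketbra{\gamma}{\gamma}^B$, which is an incoherent state and also lies in $\mathcal{W}_{A|B}$. I will also use once and for all the fact that a divergence is invariant under appending (or discarding) a fixed ancillary state: since $\omega\mapsto\omega\otimes\zeta$ is CPTP for any fixed state $\zeta$ and the partial trace is CPTP, contractivity of $D$ in both directions gives $D(\omega\otimes\zeta,\omega'\otimes\zeta)=D(\omega,\omega')$.

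First I would dispose of ``$\ge$'', for which the dephasing plays no role. Given $\sigma'\in\mathcal{W}_{A|B}$, tensor on fixed pure states to form $\hat\sigma'=\sigma'\otimes\ketbra{0}{0}^{E_A}\otimes\ketbra{0}{0}^{E_B}\in\mathcal{W}_{AE_A|BE_B}$. Since $\id^{E_AE_B}$ acts trivially and $M$ acts only on the output of $\Theta$, the state $M\left(\Theta^{AB}\Delta^{AB}\otimes\id^{E_AE_B}\right)\hat\sigma'$ equals $M\Theta^{AB}\Delta^{AB}\sigma'$ tensored with a fixed state, and similarly for $\Lambda$; by the invariance above the value of the right-hand objective at $\sigma'$ is therefore attained by the left-hand objective at $\hat\sigma'$, so the left-hand side is at least the right-hand side.

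The substance is ``$\le$'', and here $\Delta^{AB}$ is essential. Let $\sigma=\sum_k p_k\,\rho_k^{AE_A}\otimes\tau_k^{BE_B}\in\mathcal{W}_{AE_A|BE_B}$. Since the incoherent basis of $AB$ is the product basis, $\Delta^{AB}=\Delta^A\otimes\Delta^B$, so $\left(\Delta^{AB}\otimes\id^{E_AE_B}\right)\sigma$ is obtained by replacing each $\rho_k^{AE_A}$ with $\sum_\alpha\ketbra{\alpha}{\alpha}^A\otimes\bra{\alpha}^A\rho_k^{AE_A}\ket{\alpha}^A$ and each $\tau_k^{BE_B}$ with the analogous expression; pulling the traces of these unnormalized conditional states out as weights, $\left(\Delta^{AB}\otimes\id^{E_AE_B}\right)\sigma$ becomes a convex combination of product states of the form $\pi_{\alpha\gamma}\otimes\eta^{E_A}\otimes\xi^{E_B}$ with $\eta,\xi$ normalized. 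Now apply $\Theta^{AB}\otimes\id^{E_AE_B}$ (resp.\ $\Lambda^{AB}\otimes\id^{E_AE_B}$) followed by $M$, and use joint convexity of $D$ to pull this convex combination in front of $D$. In each resulting term the auxiliary factor $\eta^{E_A}\otimes\xi^{E_B}$ is common to both arguments of $D$ (because $\Theta^{AB}$ and $\Lambda^{AB}$ do not touch $E_AE_B$), so by the invariance above it may be deleted, leaving $D\!\left(M\Theta^{AB}\pi_{\alpha\gamma},\,M\Lambda^{AB}\pi_{\alpha\gamma}\right)$; and since $\pi_{\alpha\gamma}$ is incoherent it is fixed by $\Delta^{AB}$, so this equals $D\!\left(M\Theta^{AB}\Delta^{AB}\pi_{\alpha\gamma},\,M\Lambda^{AB}\Delta^{AB}\pi_{\alpha\gamma}\right)\le\maxi{\sigma'\in\mathcal{W}_{A|B}}D\!\left(M\Theta^{AB}\Delta^{AB}\sigma',\,M\Lambda^{AB}\Delta^{AB}\sigma'\right)$, using $\pi_{\alpha\gamma}\in\mathcal{W}_{A|B}$. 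As the convex weights sum to one, the same bound holds for the original $\sigma$; maximizing over $\sigma\in\mathcal{W}_{AE_A|BE_B}$ and then taking the supremum over $M$ gives ``$\le$'', hence equality. For the first (unmeasured) identity the same argument runs verbatim with $M$ deleted.

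I do not anticipate a real obstacle. The two points requiring a little care are (i) recording that $\Delta^{AB}=\Delta^A\otimes\Delta^B$, so that after the dephasing the state is genuinely classical on $AB$ with the ancillae carrying only classically correlated information, and (ii) the bookkeeping of carrying the auxiliary systems along while extracting the convex decomposition and applying joint convexity. Isolating the append/discard invariance of a general divergence beforehand is what makes the term-by-term removal of the auxiliary factors clean.
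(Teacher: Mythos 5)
Your proof of the first (unmeasured) identity is correct and follows essentially the same route as the paper's: append fixed ancilla states for ``$\ge$'', and for ``$\le$'' use $\Delta^{AB}=\Delta^A\otimes\Delta^B$ to write the dephased separable state as a convex combination of states $\pi_{\alpha\gamma}\otimes\eta^{E_A}\otimes\xi^{E_B}$, then apply joint convexity and discard the ancilla factors. This is exactly the paper's argument (which invokes its Lem.~14 of Ref.~[Theurer2019] for the decomposition you derive by hand).

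However, your ``$\le$'' direction for the \emph{measured} version has a genuine gap. The measurement $M$ on the left-hand side acts on the \emph{joint} output system $A_{\rm out}B_{\rm out}E_AE_B$, whereas on the right-hand side it acts on $A_{\rm out}B_{\rm out}$ alone; these are different objects. Once the global $M$ has been applied to $\left(\Theta\pi_{\alpha\gamma}\right)\otimes\eta\otimes\xi$, the output is a single classical register $\sum_n \tr\!\left(\Pi_n\left(\Theta\pi_{\alpha\gamma}\otimes\eta\otimes\xi\right)\right)\ketbra{x_n}{x_n}$ --- it is \emph{not} of the form $\left(M\Theta\pi_{\alpha\gamma}\right)\otimes(\text{fixed state})$, so the append/discard invariance cannot be used to ``delete the common auxiliary factor'' at that point. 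What is actually true, and what the paper proves explicitly, is that plugging the fixed product state $\eta^{E_A}\otimes\xi^{E_B}$ into the global POVM induces a \emph{reduced} POVM $\tilde\Pi_n=\tr_{E_AE_B}\!\left(\Pi_n\left(\id^{A_{\rm out}B_{\rm out}}\otimes\eta\otimes\xi\right)\right)$ on $A_{\rm out}B_{\rm out}$, giving $D\!\left(\tilde M\Theta\pi_{\alpha\gamma},\tilde M\Lambda\pi_{\alpha\gamma}\right)$ with $\tilde M\neq M$. To conclude, you must additionally verify that this reduced measurement again lies in $\mathcal{M}$ --- trivial for ``all'', but a point that needs checking for the free destructive measurements in $\mathcal{E}$ (a separable or LOCC measurement reduced by a local product ancilla state remains separable or LOCC across $A|B$). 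A smaller instance of the same conflation appears in your ``$\ge$'' direction: tensoring $M$ with the identity on $E_AE_B$ does not yield a destructive measurement of the joint system; one should tensor with a trivial (constant-outcome) destructive measurement, as the paper does, and note that the composite is still free.
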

	\begin{proof}
		This Lemma is a direct consequence of the fact that the dephasing destroys quantum correlations. For their proof, we derive two inequalities each for the version with and without destructive measurements.
		
		a) Let $\tau$ and $\rho$ be quantum states. Using that $D$ is contractive by assumption, we find
		\begin{alignat}{1}
			\maxi{\sigma \in \mathcal{W}_{A|B}}& D\left(\Theta^{AB}\Delta^{AB} \sigma, \Lambda^{AB}\Delta^{AB} \sigma \right) \nonumber \\
			=&\maxi{\sigma \in \mathcal{W}_{A|B}} D\left(\tr_{E_AE_B}\left(\left(\Theta^{AB}\Delta^{AB}\otimes\id^{E_A E_B}\right)\left( \sigma^{AB}\otimes \tau^{E_A}\otimes \rho^{E_B}\right)\right),\right. \nonumber \\
			& \qquad \qquad \qquad \qquad \qquad\left. \tr_{E_AE_B}\left(\left(\Lambda^{AB}\Delta^{AB} \otimes\id^{E_A E_B}\right)\left(\sigma^{AB}\otimes \tau^{E_A}\otimes \rho^{E_B}\right)\right)\right) \nonumber \\
			\le&\maxi{\sigma \in \mathcal{W}_{A|B}} D\left(\left(\Theta^{AB}\Delta^{AB}\otimes\id^{E_A E_B}\right)\left( \sigma^{AB}\otimes \tau^{E_A}\otimes \rho^{E_B}\right), \left( \Lambda^{AB}\Delta^{AB} \otimes\id^{E_A E_B} \right) \left(\sigma^{AB}\otimes \tau^{E_A}\otimes \rho^{E_B}\right)\right) \nonumber \\
			\le&\maxi{\sigma \in \mathcal{W}_{AE_A|BE_B}} D\left(\left(\Theta^{AB}\Delta^{AB}\otimes\id^{E_A E_B}\right) \sigma, \left( \Lambda^{AB}\Delta^{AB}\otimes\id^{E_A E_B}\right) \sigma\right).
		\end{alignat}
		Denoting by $\tilde{M}\in\mathcal{M}$ a CPTP map associated to a destructive measurement with output independent of the input state, we find in complete analogy
		\begin{alignat}{1}
			\supr{M \in \mathcal{M}}\ \maxi{\sigma \in \mathcal{W}_{A|B}} &D\left(M\Theta^{AB}\Delta^{AB} \sigma, M\Lambda^{AB}\Delta^{AB} \sigma \right) \nonumber \\
			=&\supr{M \in \mathcal{M}}\ \maxi{\sigma \in \mathcal{W}_{A|B}}  D\left(\tr_{E_AE_B}\left(\left(M\Theta^{AB}\Delta^{AB}\otimes\tilde{M}^{E_AE_B}\right)\left( \sigma^{AB}\otimes \tau^{E_A}\otimes \rho^{E_B}\right)\right),\right. \nonumber \\
			&  \qquad \qquad \qquad \qquad \qquad \qquad\left. \tr_{E_AE_B}\left(\left(M\Lambda^{AB}\Delta^{AB} \otimes\tilde{M}^{E_AE_B}\right)\left(\sigma^{AB}\otimes \tau^{E_A}\otimes \rho^{E_B}\right)\right)\right) \nonumber \\
			\le&\supr{M \in \mathcal{M}}\ \maxi{\sigma \in \mathcal{W}_{A|B}}  D\left(\left(M\Theta^{AB}\Delta^{AB}\otimes\tilde{M}^{E_AE_B}\right)\left( \sigma^{AB}\otimes \tau^{E_A}\otimes \rho^{E_B}\right),\right. \nonumber \\
			&  \qquad \qquad \qquad\qquad \qquad \qquad\left. \left(M \Lambda^{AB}\Delta^{AB} \otimes\tilde{M}^{E_AE_B} \right) \left(\sigma^{AB}\otimes \tau^{E_A}\otimes \rho^{E_B}\right)\right) \nonumber \\
			\le&\supr{M \in \mathcal{M}}\ \maxi{\sigma \in \mathcal{W}_{AE_A|BE_B}}  D\left(M\left(\Theta^{AB}\Delta^{AB}\otimes\id^{E_A E_B}\right) \sigma, M\left( \Lambda^{AB}\Delta^{AB}\otimes\id^{E_A E_B}\right) \sigma\right).
		\end{alignat}
		b) We use Lem.~14 of Ref.~\cite{Theurer2019} to note that, if $\sigma \in \mathcal{W}_{AE_A|BE_B}$, then
		\begin{align}
			\left(\Delta^{AB}\otimes\id^{E_A E_B}\right) \sigma =&\left(\Delta^{AB}\otimes\id^{E_A E_B}\right) \sum_i r_i \rho_i^{AE_A}\otimes\tau_i^{B E_B} \nonumber \\
			=&\sum_{i,j,k,l,m} r_i q^i_{j|k}p_k^i\ketbra{k}{k}^A \otimes \ketbra{\phi^i_{j|k}}{\phi^i_{j|k}}^{E_A} \otimes\ketbra{m}{m}^B \otimes \tilde{q}^i_{l|m}\tilde{p}_m^i\ketbra{\xi^i_{l|m}}{\xi^i_{l|m}}^{E_B},
		\end{align}
		where $r_i, p_k^i, \tilde{p}_m^i, q^i_{j|k}, \tilde{q}^i_{l|m}$ represent (conditional) probabilities and $\phi^i_{j|k},\xi^i_{l|m}$ normalized quantum states.
		Using that $D$ is jointly convex and contractive, this allows us to conclude that
		\begin{alignat}{2}
			\maxi{\sigma \in \mathcal{W}_{AE_A|BE_B}} &D\left(\left(\Theta^{AB}\Delta^{AB}\otimes\id^{E_A E_B}\right) \sigma, \left(\Lambda^{AB}\Delta^{AB}\otimes\id^{E_A E_B}\right) \sigma\right) \nonumber \\
			=&\maxi{\sigma \in \mathcal{W}_{AE_A|BE_B}}  D\left(  \left(\Theta^{AB}\otimes\id^{E_A E_B}\right) \left( \Delta^{AB}\otimes\id^{E_A E_B}\right) \sigma, \left(\Lambda^{AB}\otimes\id^{E_A E_B}\right)\left( \Delta^{AB}\otimes\id^{E_A E_B} \right) \sigma\right) \nonumber \\
			\le&\maxi{\text{decompositions}} 	\sum_{i,j,k,l,m} r_i q^i_{j|k}p_k^i \tilde{q}^i_{l|m}\tilde{p}_m^i \nonumber \\
			&   D\left(\left(\Theta^{AB}\otimes\id^{E_A E_B}\right)\left( \ketbra{k}{k}^A \otimes \ketbra{\phi^i_{j|k}}{\phi^i_{j|k}}^{E_A}\otimes\ketbra{m}{m}^B \otimes  \ketbra{\xi^i_{l|m}}{\xi^i_{l|m}}^{E_B}\right), \right. \nonumber \\
			&  \qquad \qquad \qquad \qquad\qquad \qquad \left. \left(\Lambda^{AB}\otimes\id^{E_A E_B} \right)\left( \ketbra{k}{k}^A \otimes \ketbra{\phi^i_{j|k}}{\phi^i_{j|k}}^{E_A}\otimes\ketbra{m}{m}^B \otimes  \ketbra{\xi^i_{l|m}}{\xi^i_{l|m}}^{E_B}\right)\right) \nonumber \\
			\le&\maxi{\ket{\phi},\ket{\xi}, k, m}  D\left(\left(\Theta^{AB}\otimes\id^{E_A E_B}\right)\left(  \ketbra{k}{k}^A \otimes \ketbra{\phi}{\phi}^{E_A}\otimes\ketbra{m}{m}^B \otimes \ketbra{\xi}{\xi}^{E_B}\right),\right. \nonumber \\
			&  \qquad \qquad \qquad \qquad\qquad \qquad\left. \left(\Lambda^{AB}\otimes\id^{E_A E_B}\right)\left(  \ketbra{k}{k}^A \otimes \ketbra{\phi}{\phi}^{E_A}\otimes\ketbra{m}{m}^B \otimes \ketbra{\xi}{\xi}^{E_B}\right)\right) \nonumber \\
			=&\maxi{k,m}\  D\left(\Theta^{AB} \left( \ketbra{k}{k}^A \otimes \ketbra{m}{m}^B \right), \Lambda^{AB} \left(\ketbra{k}{k}^A \otimes\ketbra{m}{m}^B\right)\right) \nonumber \\
			=&\maxi{k,m}\  D\left(\Theta^{AB}\Delta^{AB} \left( \ketbra{k}{k}^A \otimes \ketbra{m}{m}^B \right), \Lambda^{AB}\Delta^{AB}\left( \ketbra{k}{k}^A \otimes \ketbra{m}{m}^B \right)\right) \nonumber \\
			\le&\maxi{\sigma \in \mathcal{W}_{A|B}}  \  D\left(\Theta^{AB}\Delta^{AB} \sigma, \Lambda^{AB}\Delta^{AB}\sigma\right).
		\end{alignat}
		
		For the case that includes destructive measurements, use in addition that for a POVM given by the elements $\Pi_i^{AB}$ and a state $\rho_A$, we can define a reduced POVM with elements $\tilde{\Pi}_i^B(\rho)$ acting only on system $B$ by 
		\begin{align}
			\tr\left(\Pi_i^{AB} \left(\rho^{A}\otimes\sigma^{B}\right)\right) = \tr\left(\Pi_i^{AB} \left(\rho^{A}\otimes\id^{B}\right)\left(\id^{A}\otimes\sigma^{B}\right)\right)= \tr\left(\tr_A\left(\Pi_i^{AB} \left(\rho^{A}\otimes\id^{B}\right)\right)\sigma^{B}\right)=:\tr\left(\tilde{\Pi}_i^B(\rho)\sigma^{B}\right),
		\end{align}
		i.e., $\tilde{\Pi}_i^B(\rho)=\tr_A\left(\Pi_i^{AB} \left(\rho^{A}\otimes\id^{B}\right)\right)$.
		With this in mind, we find
		\begin{alignat}{2}
			\supr{M \in \mathcal{M}}&\ \maxi{\sigma \in \mathcal{W}_{AE_A|BE_B}} D\left(M\left(\Theta^{AB}\Delta^{AB}\otimes\id^{E_A E_B}\right) \sigma, M\left(\Lambda^{AB}\Delta^{AB}\otimes\id^{E_A E_B}\right) \sigma\right) \nonumber \\
			=&\supr{M \in \mathcal{M}}\ \maxi{\sigma \in \mathcal{W}_{AE_A|BE_B}}  D\left( M \left(\Theta^{AB}\otimes\id^{E_A E_B}\right) \left( \Delta^{AB}\otimes\id^{E_A E_B}\right) \sigma, M\left(\Lambda^{AB}\otimes\id^{E_A E_B}\right)\left( \Delta^{AB}\otimes\id^{E_A E_B} \right) \sigma\right) \nonumber \\
			\le&\supr{M \in \mathcal{M}}\ \maxi{\text{decompositions}} 	\sum_{i,j,k,l,m} r_i q^i_{j|k}p_k^i \tilde{q}^i_{l|m}\tilde{p}_m^i \nonumber \\
			&   D\left(M\left(\Theta^{AB}\otimes\id^{E_A E_B}\right)\left( \ketbra{\phi^i_{j|k}}{\phi^i_{j|k}}^{E_A}\otimes\ketbra{k}{k}^A \otimes \ketbra{\xi^i_{l|m}}{\xi^i_{l|m}}^{E_B}\otimes\ketbra{m}{m}^B\right), \right. \nonumber \\
			& \qquad \qquad \qquad \qquad \qquad \qquad\left. M\left(\Lambda^{AB}\otimes\id^{E_A E_B} \right)\left( \ketbra{\phi^i_{j|k}}{\phi^i_{j|k}}^{E_A}\otimes\ketbra{k}{k}^A \otimes \ketbra{\xi^i_{l|m}}{\xi^i_{l|m}}^{E_B}\otimes\ketbra{m}{m}^B\right)\right) \nonumber \\
			\le&\supr{M \in \mathcal{M}}\ \maxi{\ket{\phi},\ket{\xi}, k, m}  D\left(M\left(\Theta^{AB}\otimes\id^{E_A E_B}\right)\left(  \ketbra{\phi}{\phi}^{E_A}\otimes\ketbra{k}{k}^A \otimes \ketbra{\xi}{\xi}^{E_B}\otimes\ketbra{m}{m}^B\right),\right. \nonumber \\
			& \qquad \qquad \qquad \qquad \qquad \qquad\left. M\left(\Lambda^{AB}\otimes\id^{E_A E_B}\right)\left( \ketbra{\phi}{\phi}^{E_A}\otimes\ketbra{k}{k}^A \otimes \ketbra{\xi}{\xi}^{E_B}\otimes\ketbra{m}{m}^B\right)\right) \nonumber \\
			=&\supr{\tilde{M}(\ket{\phi}\otimes\ket{\xi}) \in \mathcal{M}}\ \maxi{k,m}\  D\left(\tilde{M}(\ket{\phi}\otimes\ket{\xi})\Theta^{AB} \left( \ketbra{k}{k}^A \otimes \ketbra{m}{m}^B \right), \tilde{M}(\ket{\phi}\otimes\ket{\xi})\Lambda^{AB} \left(\ketbra{k}{k}^A \otimes\ketbra{m}{m}^B\right)\right) \nonumber \\
			\le&\supr{M \in \mathcal{M}}\ \maxi{k,m}\  D\left(M\Theta^{AB}\Delta^{AB} \left( \ketbra{k}{k}^A \otimes \ketbra{m}{m}^B \right), M\Lambda^{AB}\Delta^{AB}\left( \ketbra{k}{k}^A \otimes \ketbra{m}{m}^B \right)\right) \nonumber \\
			\le&\supr{M \in \mathcal{M}}\ \maxi{\sigma \in \mathcal{W}_{A|B}}  \  D\left(M\Theta^{AB}\Delta^{AB} \sigma, M\Lambda^{AB}\Delta^{AB}\sigma\right).
		\end{alignat}
		Having established inequalities in both directions for each version, the proof is finished.
	\end{proof}
	Next we prove a similar result concerning the case where we optimize over incoherent states.
	\begin{lemma}\label{lem:obvious2}
		Let $D(\rho, \sigma)$ be a divergence and $\Theta$, $\Lambda$ CPTP. Then
		\begin{align}
			\maxi{\sigma \in \mathcal{I}}\ D\left(\Theta \sigma,\Lambda \sigma \right)= \maxi{\sigma \in \mathcal{I}}\ D\left(\left(\Theta\otimes\id\right) \sigma,\left(\Lambda\otimes\id\right) \sigma \right).
		\end{align}
	\end{lemma}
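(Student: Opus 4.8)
The plan is to prove the two inequalities $\ge$ and $\le$ separately; the first is the substance and the second is essentially immediate. Throughout, write $A$ for the (possibly composite input/output) system of $\Theta$ and $\Lambda$, $E$ for the auxiliary, and recall that under the conventions fixed in the main text an incoherent state on a composite system $AE$ is precisely one that is diagonal in the product computational basis, i.e.\ of the form $\sigma^{AE}=\sum_{i,j}p_{ij}\,\ketbra{i}{i}^{A}\otimes\ketbra{j}{j}^{E}$ with $(p_{ij})$ a probability distribution. The single technical ingredient used in both directions is the identity $D\big(\alpha\otimes\omega,\beta\otimes\omega\big)=D(\alpha,\beta)$ for any fixed state $\omega$, which follows from contractivity of $D$ applied to the two CPTP maps $\rho\mapsto\rho\otimes\omega$ and $\operatorname{tr}_E$.

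For the direction $\ge$, I would fix an arbitrary $E$ and an arbitrary $\sigma^{AE}\in\mathcal{I}$, written as above. Since $\id^{E}$ acts trivially on basis states, $(\Theta\otimes\id^{E})\sigma=\sum_{ij}p_{ij}\,\Theta(\ketbra{i}{i})\otimes\ketbra{j}{j}$ and likewise for $\Lambda$, with the \emph{same} weights $p_{ij}$ on both sides. Joint convexity of $D$ then gives
\[
D\big((\Theta\otimes\id)\sigma,(\Lambda\otimes\id)\sigma\big)\;\le\;\sum_{ij}p_{ij}\,D\big(\Theta(\ketbra{i}{i})\otimes\ketbra{j}{j},\;\Lambda(\ketbra{i}{i})\otimes\ketbra{j}{j}\big).
\]
Using the tensor identity above to strip off the fixed factor $\ketbra{j}{j}^{E}$, each summand equals $D\big(\Theta(\ketbra{i}{i}),\Lambda(\ketbra{i}{i})\big)$, and since $\ketbra{i}{i}\in\mathcal{I}$ on $A$ this is at most $\max_{\tau\in\mathcal{I}}D(\Theta\tau,\Lambda\tau)$. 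As $\sum_{ij}p_{ij}=1$, the whole right-hand side is bounded by $\max_{\tau\in\mathcal{I}}D(\Theta\tau,\Lambda\tau)$. Taking the supremum over $E$ and over $\sigma\in\mathcal{I}$ on the left yields $\ge$.

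For the direction $\le$, given $\tau\in\mathcal{I}$ on $A$, I would simply take $E$ one-dimensional (which the optimization over auxiliary dimensions permits), so that $D(\Theta\tau,\Lambda\tau)$ itself appears among the quantities over which the right-hand side takes its supremum; alternatively, if one insists on a nontrivial auxiliary, tensor $\tau$ with the incoherent pure state $\ketbra{0}{0}^{E}$ and invoke the tensor identity once more. Maximizing over $\tau$ gives $\le$, and together with $\ge$ the Lemma follows.

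I do not expect a genuine obstacle here; the only point needing a little care is the bookkeeping of which ``incoherent'' set is being optimized over (incoherent states of $AE$ versus of $A$), together with the observation that incoherence on a composite system forces the product-diagonal form, which is exactly what lets the joint-convexity step run with a single distribution of weights. The tensor-with-a-fixed-state identity, used in both directions, is the small reusable fact underneath the argument, and it is also what makes the analogous auxiliary-elimination work in Lemma~\ref{lem:obvious}.
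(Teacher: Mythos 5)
Your proof is correct and follows essentially the same route as the paper: the product-diagonal decomposition of composite incoherent states plus joint convexity for the substantive direction, and contractivity (partial trace / tensoring with a fixed state) for the easy direction. The only cosmetic difference is that the paper phrases the easy direction via $\tr_B$-contractivity rather than via a trivial or $\ketbra{0}{0}$ auxiliary, which is the same identity you use.
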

	\begin{proof}
		This is a simple consequence of the fact that incoherent states on the joint system are separable. Using contractivity, we find
		\begin{align}
			\maxi{\sigma \in \mathcal{I}}\ D\left(\Theta \sigma,\Lambda \sigma \right)=& \maxi{\sigma \in \mathcal{I}}\ D\left(\tr_B\left(\left(\Theta^A\otimes\id^B\right) \sigma\right),\tr_B\left(\left(\Lambda^A\otimes\id^B\right) \sigma\right) \right)\nonumber \\
			\le&\maxi{\sigma \in \mathcal{I}}\ D\left(\left(\Theta\otimes\id\right) \sigma,\left(\Lambda\otimes\id\right) \sigma \right).
		\end{align}
		To show the reverse inequality, we use joint convexity, contractivity, and the fact that we can decompose the incoherent states on two systems as 
		\begin{align}
			\sigma = \sum_{ij} p_{ij} \ketbra{i}{i}\otimes \ketbra{j}{j},
		\end{align}
		where $p_{ij}$ denotes a probability distribution.  We then have
		\begin{align}
			\maxi{\sigma \in \mathcal{I}}\ D\left(\left(\Theta\otimes\id\right) \sigma,\left(\Lambda\otimes\id\right) \sigma \right) = &\maxi{p_{ij}}\ D\left(\left(\Theta\otimes\id\right) \sum_{ij} p_{ij} \ketbra{i}{i}\otimes \ketbra{j}{j},\left(\Lambda\otimes\id\right) \sum_{ij} p_{ij} \ketbra{i}{i}\otimes \ketbra{j}{j} \right) \nonumber \\
			\le& \maxi{p_{ij}}\ \sum_{ij} p_{ij} D\left(\Theta \ketbra{i}{i}\otimes \ketbra{j}{j},\Lambda \ketbra{i}{i}\otimes \ketbra{j}{j} \right) \nonumber \\
			\le& \maxi{i,j} \ D\left(\Theta \ketbra{i}{i}\otimes \ketbra{j}{j},\Lambda \ketbra{i}{i}\otimes \ketbra{j}{j} \right) \nonumber \\
			=&\maxi{i} \ D\left(\Theta \ketbra{i}{i},\Lambda \ketbra{i}{i} \right) \nonumber \\
			\le&\maxi{\sigma \in \mathcal{I}}\ D\left(\Theta \sigma,\Lambda \sigma \right),
		\end{align}
		which finishes the proof.
	\end{proof}

	To conclude this section, we present a Lemma which simplifies certain coherence measures and had been shown previously in Ref.~\cite{Liu2020} for the case of MIO.

	\begin{lemma}\label{lem:semiClosedExt}
		For the relative entropy $S(\rho,\sigma)$,
		\begin{align}\label{eq:Lemma}
			C_{\mathcal{C},S}^{\mathcal{I},{\rm  no}}\left(\Theta\right)=\maxi{\sigma \in \mathcal{I}}\ \left[ S(\Delta \Theta \sigma)-S(\Theta \sigma) \right] \nonumber
		\end{align}	
		holds, where $S(\rho)$ denotes the von Neumann entropy.
	\end{lemma}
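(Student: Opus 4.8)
The plan is to reduce the left-hand side to an optimization without auxiliary system via Lem.~\ref{lem:obvious2}, rewrite the relative entropy using that its second argument is always diagonal, and then recognise the optimal free operation as a ``dephased'' version of $\Theta$.

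First, Lem.~\ref{lem:obvious2} shows that the auxiliary system in $D^{\mathcal{S}}$ plays no role when $\mathcal{S}=\mathcal{I}$, so
\begin{align}
	C^{\mathcal{I},{\rm no}}_{\mathcal{C},S}(\Theta)=\infi{\Lambda\in\mathcal{C}}\ \maxi{\sigma\in\mathcal{I}}\ S(\Theta\sigma,\Lambda\sigma). \nonumber
\end{align}
Since each of SIO, LOP, IO, DIO, MIO maps $\mathcal{I}$ into $\mathcal{I}$, the state $\Lambda\sigma$ is diagonal in the incoherent basis for all $\sigma\in\mathcal{I}$ and all $\Lambda\in\mathcal{C}$. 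For any state $\rho$ and any incoherent state $\delta$ the operator $\log\delta$ is diagonal, hence $\tr(\rho\log\delta)=\tr(\Delta\rho\,\log\delta)$, which gives the identity
\begin{align}
	S(\rho,\delta)=S(\Delta\rho,\delta)+S(\Delta\rho)-S(\rho) \nonumber
\end{align}
with the convention that both sides equal $+\infty$ exactly when $\mathrm{supp}\,\rho\not\subseteq\mathrm{supp}\,\delta$ (one checks that, for incoherent $\delta$, $\mathrm{supp}\,\rho\subseteq\mathrm{supp}\,\delta$ iff $\mathrm{supp}\,\Delta\rho\subseteq\mathrm{supp}\,\delta$; and $\mathrm{supp}\,\rho\subseteq\mathrm{supp}\,\Delta\rho$ always).

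Applying this with $\rho=\Theta\sigma$ and $\delta=\Lambda\sigma$ and dropping the non-negative term $S(\Delta\Theta\sigma,\Lambda\sigma)$ gives $S(\Theta\sigma,\Lambda\sigma)\ge S(\Delta\Theta\sigma)-S(\Theta\sigma)$ for every $\sigma\in\mathcal{I}$ and $\Lambda\in\mathcal{C}$; taking the maximum over $\sigma$ and then the infimum over $\Lambda$ yields $C^{\mathcal{I},{\rm no}}_{\mathcal{C},S}(\Theta)\ge\maxi{\sigma\in\mathcal{I}}[S(\Delta\Theta\sigma)-S(\Theta\sigma)]$. For the reverse inequality I would plug the single channel $\Lambda=\Delta\Theta\Delta$ into the infimum: composing the Kraus operators $\ketbra{i}{i}$ of $\Delta$ with those of $\Theta$ shows that every Kraus operator of $\Delta\Theta\Delta$ is proportional to some $\ketbra{i}{j}$, so $\Delta\Theta\Delta\in{\rm SIO}$ and therefore $\Delta\Theta\Delta\in\mathcal{C}$ for each of the five choices of $\mathcal{C}$. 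Because $\Delta\sigma=\sigma$ for $\sigma\in\mathcal{I}$, the identity above becomes the equality $S(\Theta\sigma,\Delta\Theta\Delta\sigma)=S(\Delta\Theta\sigma)-S(\Theta\sigma)$, so this $\Lambda$ saturates the desired bound. Finally, the maximum over $\sigma\in\mathcal{I}$ is attained since $\mathcal{I}$ is compact and $\sigma\mapsto S(\Delta\Theta\sigma)-S(\Theta\sigma)=S(\Theta\sigma,\Delta\Theta\sigma)$ is continuous and finite (it is the relative entropy of coherence of the output state $\Theta\sigma$).

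The entropic identity and the Kraus-operator bookkeeping are routine; the one point that actually needs care — and that is what upgrades the MIO statement of Ref.~\cite{Liu2020} to all $\mathcal{C}$ — is verifying that the \emph{same} channel $\Delta\Theta\Delta$ lies simultaneously in SIO, LOP, IO, DIO and MIO, together with the brief check that the identity $S(\rho,\delta)=S(\Delta\rho,\delta)+S(\Delta\rho)-S(\rho)$ remains valid, with both sides $+\infty$, when $\Lambda\sigma$ is rank-deficient (where in any case the inequality $\ge$ is trivial).
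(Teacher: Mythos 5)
Your proposal is correct and follows essentially the same route as the paper's own proof: reduce to the no-ancilla form via Lem.~\ref{lem:obvious2}, apply the identity $S(\rho,\delta)=S(\Delta\rho)-S(\rho)+S(\Delta\rho,\delta)$ for incoherent second argument, and saturate the bound with $\Lambda=\Delta\Theta\Delta$, which the Kraus-operator argument places in SIO and hence in every choice of $\mathcal{C}$. The extra care you take with the support/infinity cases and the attainment of the maximum is a harmless refinement of the paper's argument.
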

	This shows that $C_{\mathcal{C},S}^{\mathcal{I},{\rm  no}}\left(\Theta\right)$ is a coherence power, since
	\begin{align}
		C_{\mathcal{C},S}^{\mathcal{I},{\rm  no}}\left(\Theta\right)=&\maxi{\sigma \in \mathcal{I}}\ R_C(\Theta\sigma),
	\end{align}
	where $R_C$ denotes the relative entropy of coherence~\cite{Aberg2006,Baumgratz2014,Winter2016}. In addition, the Lemma shows that these measures can only be faithful for MIO: the right hand side of Eq.~(\ref{eq:Lemma}) is independent of the choice of $\mathcal{C}$ and, by construction, the measure with respect to MIO is zero on all operations in MIO. Since, e.g., IO is a strict subset of MIO, there exist operations outside IO on which the respective measure is zero. 
	
	\begin{proof}
		We begin by reminding the well known fact that for quantum states $\rho$ and $\sigma$, we have
		\begin{align}
			S\left(\rho,\Delta\sigma\right)=S(\Delta \rho)-S(\rho)+S\left(\Delta \rho, \Delta \sigma\right).
		\end{align}
		Together with Lem.~\ref{lem:obvious2} follows
		\begin{alignat}{2}
			C_{\mathcal{C},S}^{\mathcal{I},{\rm no}}\left(\Theta\right)=&\mini{\Lambda \in \mathcal{C}}\ \maxi{\sigma \in \mathcal{I}_{AB}} &&S\left(\left(\Theta^A\otimes \id^B\right)\sigma,\left(\Lambda^A\otimes \id^B\right)\sigma \right) \nonumber\\
			=&\mini{\Lambda \in \mathcal{C}}\ \maxi{\sigma \in \mathcal{I}}\ &&S\left(\Theta\sigma,\Lambda\sigma \right) \nonumber\\
			=&\mini{\Lambda \in \mathcal{C}}\ \maxi{\sigma \in \mathcal{I}}\ &&\left[S(\Delta \Theta \sigma)-S(\Theta \sigma)+S\left(\Delta \Theta \sigma , \Lambda \sigma\right)\right].
		\end{alignat}
		Next we remind that the quantum operation $\Lambda=\Delta \Theta \Delta$ can be represented by Kraus operators $L_{ijn}=\ketbra{j}{j}K_n\ketbra{i}{i}$ if $\Theta$ is given by Kraus operators $K_n$. Therefore such a $\Lambda$ is contained in SIO, which can be characterized by every Kraus operator having at most one non-zero entry per column and row~\cite{Winter2016}. Since SIO is included in LOP, DIO, IO, and MIO, $\Lambda$ is also in all of these sets. 
		Together with the fact that the relative entropy between two quantum states is non-negative and that $S\left(\Delta \Theta \sigma , \Delta \Theta \Delta \sigma\right)=0$, we arrive at our statement.
	\end{proof}

	\section{Proofs of the results in the main text} \label{sec:proofs}
	\setcounter{theorem}{1}
	Here, with the help of the technical Lemmas from Sec.~\ref{sec:lemmas}, we prove the results presented in the main text, which we restate for readability. In addition to the sets of free operations discussed in the main text, we will also consider the subsets of $\rm LOCC$ that can be implemented with $\rm r$ rounds of classical communication ($\rm LOCC_r$)~\cite{Bennett1999}. Since $\rm LOCC_r$ is not closed under concatenation, i.e., for $\Phi_i\in {\rm LOCC_2}$ it is possible that $\Phi_2 \Phi_1 \in {\rm LOCC_4}$, it seems difficult to include $\rm LOCC_r$ into our framework directly: as we will see in the following, to prove that our monotones defined in the main text are indeed monotonic under concatenation with free operations, we rely on the fact that both $\mathcal{C}$ and $\mathcal{E}$ are closed under concatenation. In addition, this implies that, according to the definitions we chose, there cannot be a monotone with respect to $\rm LOCC_r$  which is also faithful. A method to circumvent this problem could be to define monotonicity not with respect to concatenations with free operations, but with respect to so called maximally free superoperations, i.e., linear maps from quantum operations to quantum operations that map $\rm LOCC_r$ to itself. However, it is not clear how powerful this set of maximally free superoperations would be and it does not have an equally strong physical motivation in terms of circuit quantum computation. 
	
	Nevertheless, in our proofs in Sec.~\ref{sec:reduction}, the quantities
	\begin{align} \label{eq:LOCCrMeas}
		&E_{{\rm LOCC_r}^{A|B},D}^{\mathcal{S},\mathcal{M}}\left(\Theta\right):=\infi{\Lambda\in \rm {LOCC_r}^{A|B}}\ D^{\mathcal{S},\mathcal{M}}(\Theta, \Lambda)
	\end{align}
	will be of use, since, as we discuss in Sec.~\ref{sec:relMeasures}, they bound the monotones introduced in the main text and in contrast to $\rm LOCC$, its subsets $\rm LOCC_r$ are closed. Therefore, we will prove some of our results from the main text also for $\rm LOCC_r$.

	\begin{proposition}
		Let $\mathcal{S}$ denote either the set of all quantum states or the set of free quantum states, $\mathcal{M} \in  \left\{ \rm{all,free, no} \right\}$, $\tilde{\mathcal{M} }\in  \left\{ \rm{all, no} \right\}$, $\tilde{\mathcal{E}} \in \left\{\rm{\overline{LOCC}}, \rm{SEP} \right\}$, and $\tilde{\mathcal{C}} \in \{\rm LOP, IO ,MIO \}$.
		Then 
		\begin{align*}
			E_{\tilde{\mathcal{E} }^{A|B},D}^{\mathcal{S},\mathcal{M}}\left(\Theta\right),\ C_{{\rm MIO},D}^{\mathcal{S},\mathcal{M}}\left(\Theta\right),\ 
			C_{{\mathcal{C}},D}^{\mathcal{D},\tilde{\mathcal{M}}}\left(\Theta\right),\ \text{and}\ C_{{\tilde{\mathcal{C} }},D}^{\mathcal{D},{\rm free}}\left(\Theta\right)
		\end{align*}
		are convex resource measures. The remaining functionals from Def.~1 are convex resource monotones, with those defined via destructive DIO/SIO measurements vanishing on all operations.
	\end{proposition}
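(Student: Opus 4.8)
The plan is to treat all four families of functionals simultaneously, using that each of them has the common form $F(\Xi)=\infi{\Lambda\in\mathcal F}\ D^{\mathcal S,\mathcal M}(\Xi,\Lambda)$, where $\mathcal F$ is one of the free sets $\mathcal C\in\{\mathrm{SIO},\mathrm{LOP},\mathrm{IO},\mathrm{DIO},\mathrm{MIO}\}$ or $\mathcal E\in\{\mathrm{LOCC},\overline{\mathrm{LOCC}},\mathrm{SEP}\}$, $\mathcal S$ is all states or the free states of the respective theory ($\mathcal W$ for entanglement, $\mathcal I$ for coherence), and $\mathcal M\in\{\mathrm{no},\mathrm{all},\mathrm{free}\}$. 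First I would prove the three generic properties — non-negativity, monotonicity, and convexity — which already give ``convex resource monotone'', and only afterwards refine the argument to pin down faithfulness (the ``measure'' cases) and the identical vanishing of the DIO/SIO-measurement functionals. Non-negativity is immediate from $D\ge 0$.

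\emph{Monotonicity.} The key lemma I would establish is that, for free $\Phi_1,\Phi_2$ and arbitrary CPTP $\Theta,\Lambda$,
\[
 D^{\mathcal S,\mathcal M}\!\bigl(\Phi_2(\Theta\otimes\id)\Phi_1,\ \Phi_2(\Lambda\otimes\id)\Phi_1\bigr)\ \le\ D^{\mathcal S,\mathcal M}(\Theta,\Lambda).
\]
The proof fixes a test state $\sigma\in\mathcal S$ and a measurement $M\in\mathcal M$ appearing on the left, pushes $\Phi_1\otimes\id$ onto the state to get $\tilde\sigma=(\Phi_1\otimes\id)\sigma$ — which still lies in $\mathcal S$, since free operations send free states to free states and tensoring with $\id$ preserves this (nothing to check if $\mathcal S=\mathcal D$; Lems.~\ref{lem:obvious}--\ref{lem:obvious2} may be invoked to drop auxiliary systems) — and then absorbs $\Phi_2\otimes\id$ either into $M$, using that $M\circ(\Phi_2\otimes\id)$ is again a destructive measurement with POVM $(\Phi_2\otimes\id)^\dagger(\Pi_n)$, which is free whenever $M$ is because $\mathcal C$ and $\mathcal E$ are closed under composition, or — in the case $\mathcal M=\mathrm{no}$ — directly into $D$ via contractivity. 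Given this inequality, for the operation $\Phi_2(\Theta\otimes\id)\Phi_1$ I let $\Lambda$ range over $\mathcal F$: since $\mathcal F$ is closed under composition and under tensoring with $\id$, the comparison operation $\Phi_2(\Lambda\otimes\id)\Phi_1$ stays in $\mathcal F$, so $F\bigl(\Phi_2(\Theta\otimes\id)\Phi_1\bigr)\le D^{\mathcal S,\mathcal M}(\Theta,\Lambda)$ for each such $\Lambda$, and the infimum yields $F\bigl(\Phi_2(\Theta\otimes\id)\Phi_1\bigr)\le F(\Theta)$. This is exactly the point where closure of $\mathcal C$, $\mathcal E$, and $\mathrm{LOCC}$ under composition is used (and why $\mathrm{LOCC}_r$ has to be excluded). \emph{Convexity} is the standard argument: for $\Theta=\sum_kp_k\Theta_k$, pick near-optimal free $\Lambda_k$; then $\sum_kp_k\Lambda_k\in\mathcal F$ by convexity of the free set, and joint convexity of $D$ (with linearity of $M$ and of the $\Xi\otimes\id$) gives $D^{\mathcal S,\mathcal M}(\sum_kp_k\Theta_k,\sum_kp_k\Lambda_k)\le\sum_kp_kD^{\mathcal S,\mathcal M}(\Theta_k,\Lambda_k)$, hence $F$ is convex.

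\emph{Faithfulness and vanishing.} One direction is trivial: $\Lambda=\Theta$ gives $F(\Theta)=0$ when $\Theta$ is free. For the converse I would use that every free set in the ``measure'' list is topologically closed — which is exactly why $\mathrm{LOCC}$ (not closed) is absent — so the infimum is attained at some $\Lambda^\ast\in\mathcal F$ with $M(\Theta\otimes\id)\sigma=M(\Lambda^\ast\otimes\id)\sigma$ for all admissible $\sigma,M$. Three regimes then arise. (i) If $\mathcal S=\mathcal D$ with $\mathcal M\in\{\mathrm{no},\mathrm{all}\}$, or $\mathcal M=\mathrm{free}$ for $\tilde{\mathcal E}\in\{\overline{\mathrm{LOCC}},\mathrm{SEP}\}$ or $\tilde{\mathcal C}\in\{\mathrm{LOP},\mathrm{IO},\mathrm{MIO}\}$ — where the free destructive measurements are tomographically complete (products of local informationally complete POVMs in the bipartite case; every destructive measurement for $\mathrm{IO}$ and $\mathrm{MIO}$) — this forces $(\Theta\otimes\id)\sigma=(\Lambda^\ast\otimes\id)\sigma$ for all $\sigma$, hence $\Theta=\Lambda^\ast\in\mathcal F$. (ii) If $\mathcal S=\mathcal I$ with $\mathcal C=\mathrm{MIO}$, equality only on incoherent inputs still forces $\Theta\ketbra{i}{i}=\Lambda^\ast\ketbra{i}{i}\in\mathcal I$ for all $i$, i.e.\ $\Theta\in\mathrm{MIO}$; conversely every $\Theta\in\mathrm{MIO}$ agrees on incoherent inputs with the strictly incoherent map $\rho\mapsto\sum_i\langle i|\rho|i\rangle\,\Theta\ketbra{i}{i}$, which explains why $\mathcal S=\mathcal I$ produces a measure only for $\mathrm{MIO}$ and, for $\mathcal C\subsetneq\mathrm{MIO}$, merely a (non-faithful) monotone that vanishes on $\mathrm{MIO}\setminus\mathcal C$. (iii) A free destructive $\mathrm{DIO}$- or $\mathrm{SIO}$-measurement $M(\rho)=\sum_n\tr(\Pi_n\rho)\ketbra{n}{n}$ necessarily has diagonal POVM elements — for $\mathrm{DIO}$ straight from $\Delta M=M\Delta$, and for $\mathrm{SIO}$ because a strictly incoherent Kraus decomposition of $M$ forces each Kraus operator $\ketbra{n}{\psi}$ to have $\ket{\psi}$ proportional to a basis vector — so on incoherent inputs $M$ only probes the diagonal of $\Theta\ketbra{i}{i}$, reproduced exactly by the strictly incoherent channel $\rho\mapsto\sum_i\langle i|\rho|i\rangle\,\Delta\Theta\ketbra{i}{i}$; taking this as $\Lambda$ shows those functionals are identically zero.

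The main obstacle I expect is not any single inequality but the bookkeeping in the faithfulness step: verifying tomographic completeness of the free destructive measurements inside each entanglement theory and inside $\mathrm{LOP}/\mathrm{IO}/\mathrm{MIO}$, and keeping careful track of for which triples $(\mathcal S,\mathcal M,\mathcal C\text{ or }\mathcal E)$ none of the three obstructions — non-closedness of the free set, testing coherence only on incoherent inputs, and using coarse-grained (diagonal-POVM) measurements — is present, so that the list of genuine measures comes out exactly as stated. Everything else is a routine chaining of contractivity, joint convexity, and the closure and convexity properties of the free sets.
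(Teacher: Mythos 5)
Your overall route---a data-processing lemma plus closure of the free sets for monotonicity, joint convexity for convexity, and a faithfulness case analysis via attainment of the infimum, informational completeness of the allowed measurements, and maximality of the free set---is the same as the paper's, and the monotonicity and convexity parts are fine. The genuine gap is in the faithfulness of the entanglement measures $E_{\tilde{\mathcal{E}}^{A|B},D}^{\mathcal{W},\mathcal{M}}$, i.e., precisely the case where the test states are restricted to $\mathcal{W}_{AE_A|BE_B}$. Your case (i) concludes ``$(\Theta\otimes\id)\sigma=(\Lambda^{\ast}\otimes\id)\sigma$ for all $\sigma$, hence $\Theta=\Lambda^{\ast}$'', but when $\mathcal{S}=\mathcal{W}$ the premise only holds for separable $\sigma$, and your maximality argument (case (ii)) covers $\mathrm{SEP}$ but not $\overline{\mathrm{LOCC}}$: since $\overline{\mathrm{LOCC}}\subsetneq\mathrm{SEP}$, there exist operations outside $\overline{\mathrm{LOCC}}$ that map separable states to separable states completely, so a non-free $\Theta$ need not send any separable input to an entangled output. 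The paper closes exactly this case by observing that $\mathcal{W}_{AE_A|BE_B}$ contains the state $\ket{\Phi_+^{A|E_A}}\otimes\ket{\Phi_+^{B|E_B}}$, which is maximally entangled between $AB$ and $E_AE_B$, so that agreement of $\Theta\otimes\id$ and $\Lambda^{\ast}\otimes\id$ on separable inputs already fixes the Choi matrix and forces $\Theta=\Lambda^{\ast}$. You need this (or an equivalent ``separable states span the operator space'' argument) to finish.

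A secondary point: your argument (iii) for the vanishing of the functionals built from destructive DIO/SIO measurements is explicitly restricted to incoherent inputs, so as written it only covers $\mathcal{S}=\mathcal{I}$, whereas the statement also includes $\mathcal{S}=\mathcal{D}$; for a coherence-detecting $\Theta$ your candidate $\Lambda=\Delta\Theta\Delta$ does not reproduce the diagonal of $\Theta\rho$ on coherent inputs $\rho$, so that case requires a separate justification (the paper's own argument here is the one-line assertion that SIO/DIO can implement arbitrary population dynamics). Everything else---the monotonicity chain, the absorption of $\Phi_2$ into the measurement for $\mathcal{M}\neq\mathrm{no}$, the convexity estimate, the exclusion of $\mathrm{LOCC}$ from the list of measures via non-closedness, and the MIO/$\mathcal{I}$ faithfulness case---matches the paper's proof.
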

	
	\begin{proof}
		a) monotonicity: We begin by proving monotonicity for $E_{\mathcal{E}^{A|B},D}^{\mathcal{S},\rm{no}}\left(\Theta\right)$. Note that this has been shown in Refs.~\cite{Liu2020,Liu2019,Gour2019a} before, but for completeness, we repeat the proof here. 	
		Let $\Phi_1^{AE_A|BE_B},\Phi_2^{AE_A|BE_B} \in \mathcal{E}^{AE_A|BE_B}$. Then we have the following chain of inequalities, which are explained below:
		\begin{subequations}
			\begin{align}
				E&_{\mathcal{E}^{AE_A|BE_B},D}^{\mathcal{S},\rm{no}}\left(\Phi_2^{AE_A|BE_B} \left(\Theta^{AB} \otimes \id^{E_A E_B}\right)\Phi_1^{AE_A|BE_B}\right)\\
				&=\infi{\Lambda \in \mathcal{E}^{AE_A|BE_B}}\ \maxi{\sigma \in \mathcal{S}}\ D \left(\left( \Phi_2^{AE_A|BE_B} \left(\Theta^{AB} \otimes \id^{E_A E_B}\right)\Phi_1^{AE_A|BE_B}\otimes\id^{\tilde{A}\tilde{B}}\right)\sigma, \left(\Lambda^{AE_A|BE_B}\otimes\id^{\tilde{A} \tilde{B}}\right)\sigma \right) \label{eq:e1} \\
				&\le \infi{\Lambda \in \mathcal{E}^{A|B}}\ \maxi{\sigma \in \mathcal{S}}\ D \left(\left( \Phi_2^{AE_A|BE_B} \left(\Theta^{AB} \otimes \id^{E_A E_B}\right)\Phi_1^{AE_A|BE_B}\otimes\id^{\tilde{A}\tilde{B}}\right)\sigma, \right. \nonumber \\
				&\left.\qquad \qquad \qquad\qquad \qquad \qquad \qquad \qquad\quad \left( \Phi_2^{AE_A|BE_B} \left(\Lambda^{A|B} \otimes \id^{E_A E_B}\right)\Phi_1^{AE_A|BE_B}\otimes\id^{\tilde{A}\tilde{B}}\right)\sigma \right) \label{eq:e2}\\
				&\le \infi{\Lambda \in \mathcal{E}^{A|B}}\ \maxi{\sigma \in \mathcal{S}}\ D \left(\left( \Phi_2^{AE_A|BE_B} \left(\Theta^{AB} \otimes \id^{E_A E_B}\right)\otimes\id^{\tilde{A}\tilde{B}}\right)\sigma, \left( \Phi_2^{AE_A|BE_B} \left(\Lambda^{A|B} \otimes \id^{E_A E_B}\right)\otimes\id^{\tilde{A}\tilde{B}}\right)\sigma \right) \label{eq:e3}\\
				&\le \infi{\Lambda \in \mathcal{E}^{A|B}}\ \maxi{\sigma \in \mathcal{S}}\ D \left(  \left(\Theta^{AB} \otimes \id^{E_A E_B \tilde{A} \tilde{B}}\right)\sigma,  \left(\Lambda^{A|B}\otimes\id^{E_A E_B \tilde{A} \tilde{B}}\right)\sigma \right) \label{eq:e4} \\
				&=E_{\mathcal{E}^{A|B},D}^{\mathcal{S},\rm{no}}\left(\Theta^{AB}\right). \label{eq:e5}
			\end{align}
		\end{subequations}
		In (\ref{eq:e1}), we repeated the definition and in (\ref{eq:e2}), we used the fact that, by replacing the second argument, we take the infimum over a smaller set: by assumption $\Phi_2\left(\Lambda\otimes\id\right)\Phi_1$ is a free operation, since the set of free operations is closed under concatenation.  Line (\ref{eq:e3}) follows from an increase of the set of states over which we maximize, since $\Phi_1$ maps free states to free states. In (\ref{eq:e4}), we used contractivity of $D$ and, merging the spaces $E_A$ and $\tilde{A}$ as well as $E_B$ and $\tilde{B}$, (\ref{eq:e5}) follows again from the definition.

		For $E_{\mathcal{E}^{A|B},D}^{\mathcal{S},\mathcal{M}}\left(\Theta^{AB}\right)$ with $\mathcal{M} \in  \left\{ \rm{all,free} \right\}$ we find in complete analogy
		\begin{subequations}
			\begin{align}
				E_{\mathcal{E}^{AE_A|BE_B},D}^{\mathcal{S},\mathcal{M}}&\left(\Phi_2^{AE_A|BE_B} \left(\Theta^{AB} \otimes \id^{E_A E_B}\right)\Phi_1^{AE_A|BE_B}\right)\\
				=&\infi{\Lambda \in \mathcal{E}^{AE_A|BE_B}}\ \supr{M \in \mathcal{M}}\ \maxi{\sigma \in \mathcal{S}}\ D \left(M\left( \Phi_2^{AE_A|BE_B} \left(\Theta^{AB} \otimes \id^{E_A E_B}\right)\Phi_1^{AE_A|BE_B}\otimes\id^{\tilde{A}\tilde{B}}\right)\sigma, \right.\nonumber \\ 
				&\left.\qquad \qquad \qquad \qquad \qquad \qquad \qquad M\left(\Lambda^{AE_A|BE_B}\otimes\id^{\tilde{A} \tilde{B}}\right)\sigma \right) \label{eq:ee1} \\
				\le& \infi{\Lambda \in \mathcal{E}^{A|B}}\ \supr{M \in \mathcal{M}}\ \maxi{\sigma \in \mathcal{S}}\ D \left(M\left( \Phi_2^{AE_A|BE_B} \left(\Theta^{AB} \otimes \id^{E_A E_B}\right)\Phi_1^{AE_A|BE_B}\otimes\id^{\tilde{A}\tilde{B}}\right)\sigma, \right. \nonumber \\
				&\left.\qquad \qquad \qquad \qquad \qquad \qquad \qquad M\left( \Phi_2^{AE_A|BE_B} \left(\Lambda^{A|B} \otimes \id^{E_A E_B}\right)\Phi_1^{AE_A|BE_B}\otimes\id^{\tilde{A}\tilde{B}}\right)\sigma \right) \label{eq:ee2}\\
				\le& \infi{\Lambda \in \mathcal{E}^{A|B}}\ \supr{M \in \mathcal{M}}\ \maxi{\sigma \in \mathcal{S}}\ D \left(M  \left(\Theta^{AB} \otimes \id^{E_A E_B \tilde{A} \tilde{B}}\right)\sigma,  M\left(\Lambda^{A|B}\otimes\id^{E_A E_B \tilde{A} \tilde{B}}\right)\sigma \right) \label{eq:ee4} \\
				=&E_{\mathcal{E}^{A|B},D}^{\mathcal{S},\mathcal{M}}\left(\Theta^{AB}\right), \label{eq:ee5}
			\end{align}
		\end{subequations}
		where we used in addition in (\ref{eq:ee4}) that the (free) destructive measurements which are preceded by the free operation $\Phi_2$ form a subset of the (free) destructive measurements.
		The proofs for the monotones with respect to $\mathcal{C}$ use exactly the same arguments, which is why we will not repeat them here.
		
		b) faithfulness (of the functionals we claim to be measures): 
		For the proofs of faithfulness, let us begin with some general remarks. With the exception of LOCC, both the sets of free operations as well as the sets of free destructive measurements are closed and the infimum and supremum are achieved (therefore we can replace them with a minimum and a maximum). In addition, if $\Theta$ is in the set of free operations under consideration, we can choose $\Lambda=\Theta$ and the respective monotones are obviously zero. To prove the other direction, we consider different cases separately.

		First we consider the monotones in which we are maximizing over all states. If $\Theta$ is not free, for every fixed free $\Lambda$, there exist at least one state that transforms differently under the two operations (otherwise they would be the same, which is a contradiction). Using our assumptions on $D$, this proves faithfulness in the cases where we do not optimize over destructive measurements. In case we optimize over sets of informationally complete destructive measurements, there will necessarily exist an allowed destructive measurement that leads to different statistics for two different states. Next we recall that LOP contains all POVMs (and is a subset of IO and MIO)~\cite{Egloff2018}. Since all destructive measurements and all the sets of restricted destructive measurements in $\mathcal{E}$ are informationally complete~\cite{Caves2002,Piani2009} (see Refs.~\cite{Matthews2009,Lami2018} for some explicit bounds on the restricted trace distances), this proves again faithfulness for the respective monotones. 
		
		We now proceed to the monotones which include a maximization over free states and consider first the case that the free operations are either MIO or SEP. Since these form the maximal sets of operations that do transform free states to free states, if $\Theta$ is not free, there exists at least one free state such that $\Theta$ converts it to a non-free one. This is impossible by every free operation $\Lambda$ and to prove that the respective monotones where we maximize only over free states are faithful, we can follow up with the same arguments we used above.
		
		The remaining case is when we optimize over separable states and consider ${\rm \overline{LOCC}}$ as free operations. In addition to the arguments above, we now have to show that operations outside ${\rm \overline{LOCC}}$ cannot be simulated \textit{on separable states} by operations inside ${\rm \overline{LOCC}}$, which we will do now. The states which are separable with respect to the division into $AE_A$ and $BE_B$ over which we optimize contain the maximally entangled state $\ket{\Phi_+^{A B| E_A E_B}}=\frac{1}{\sqrt{d_A d_B}}\sum_{ij}\ket{ii}^{AE_A} \otimes\ket{jj}^{BE_B}$ \textit{between $AB$ and $E_AE_B$}. Using the above arguments, it therefore follows that, if the monotone is zero, then there exists a free $\Lambda$ such that
		\begin{align}
			\left(\Theta^{AB}\otimes\id^{E_A E_B}\right) \ketbra{\Phi_+^{A B| E_A E_B}}{\Phi_+^{A B| E_A E_B}} = \left(\Lambda^{A|B}\otimes\id^{E_A E_B }\right) \ketbra{\Phi_+^{A B| E_A E_B}}{\Phi_+^{A B| E_A E_B}}, 
		\end{align}
		which, using the Choi isomorphism, is equivalent to saying that $\Theta=\Lambda$. Thus the monotones are faithful. This argument was also used in Ref.~\cite{Gour2019a}. 	
		At this point, we remark that the same arguments allow to show that $E_{{\rm LOCC_r}^{A|B},D}^{\mathcal{S},\mathcal{M}}\left(\Theta\right)$ as defined in Eq.~(\ref{eq:LOCCrMeas}) is faithful, i.e., it is zero if and only if $\Theta \in {\rm LOCC_r}$. 
		
		We also note that measurements which are in DIO cannot detect coherence in the sense that the measurement statistics are determined by the populations alone. Since SIO and DIO can be used to implement arbitrary transformations on the populations, the monotones where we only allow for destructive SIO/DIO measurements are always zero, i.e., useless.
		
		c) convexity: This is an immediate consequence of joint convexity of $D$ and the convexity of the free operations. We will show the proof for the example of $E_{\mathcal{E}^{A|B},D}^{\mathcal{S},\mathcal{M}}\left(\Theta^{AB}\right)$ with $\mathcal{M}\ne \rm no$. For all other monotones, the proofs are exactly analogous, which is why we will not repeat them here. 
		For $0\le t\le1$, we find
		\begin{align}
			E_{\mathcal{E}^{A|B},D}^{\mathcal{S},\mathcal{M}}&\left(t \Theta_1^{AB}+(1-t) \Theta_2^{AB}\right) \nonumber\\
			=&\infi{\Lambda \in \mathcal{E}^{A|B}}\ \supr{M \in \mathcal{M}}\ \maxi{\sigma \in \mathcal{S}}\ D \left(M\left(\left(t \Theta_1^{AB}+(1-t) \Theta_2^{AB}\right)\otimes\id^{E_A E_B}\right)\sigma, M\left(\Lambda^{A|B}\otimes\id^{E_A E_B }\right)\sigma \right)  \nonumber\\
			\le&\infi{\Lambda_i \in \mathcal{E}^{A|B}}\ \supr{M \in \mathcal{M}}\ \maxi{\sigma \in \mathcal{S}}\ D \left(M\left(\left(t \Theta_1^{AB}+(1-t) \Theta_2^{AB}\right)\otimes\id^{E_A E_B}\right)\sigma, M\left(\left(t \Lambda_1^{A|B}+(1-t) \Lambda_2^{A|B}\right)\otimes\id^{E_A E_B }\right)\sigma \right)  \nonumber\\
			\le&\infi{\Lambda_i \in \mathcal{E}^{A|B}}\ \supr{M \in \mathcal{M}}\ \maxi{\sigma \in \mathcal{S}} \left[t D \left(M \left( \Theta_1^{AB}\otimes\id^{E_A E_B}\right)\sigma, M\left( \Lambda_1^{A|B}\otimes\id^{E_A E_B }\right)\sigma \right)  \right.  \nonumber\\ 
			&\left.\qquad \qquad \qquad \qquad \qquad+(1-t) D \left(M \left( \Theta_2^{AB}\otimes\id^{E_A E_B}\right)\sigma, M\left( \Lambda_2^{A|B}\otimes\id^{E_A E_B }\right)\sigma \right)    \right]  \nonumber\\
			\le&t\ \infi{\Lambda_1 \in \mathcal{E}^{A|B}}\ \supr{M \in \mathcal{M}}\ \maxi{\sigma \in \mathcal{S}} \left[ D \left(M \left( \Theta_1^{AB}\otimes\id^{E_A E_B}\right)\sigma, M\left( \Lambda_1^{A|B}\otimes\id^{E_A E_B }\right)\sigma \right)  \right]  \nonumber\\ 
			&\qquad \qquad \qquad \qquad \qquad +(1-t)\ \infi{\Lambda_2 \in \mathcal{E}^{A|B}}\  \supr{M \in \mathcal{M}}\ \maxi{\sigma \in \mathcal{S}} \left[  D \left(M \left( \Theta_2^{AB}\otimes\id^{E_A E_B}\right)\sigma, M\left( \Lambda_2^{A|B}\otimes\id^{E_A E_B }\right)\sigma \right)    \right]  \nonumber\\
			=&t \ E_{\mathcal{E}^{A|B},D}^{\mathcal{S},\mathcal{M}}\left(\Theta^{AB}_1\right) + (1-t) \ E_{\mathcal{E}^{A|B},D}^{\mathcal{S},\mathcal{M}}\left(\Theta^{AB}_2\right),
		\end{align}
		which finishes the proof of convexity.
	\end{proof}
	
	\begin{theorem}
		Let $\Phi_i\in \mathcal{C}$. Then	
		\begin{align}
			&C_{{\mathcal{C}},D}^{\mathcal{I},{\rm no}}\left(\Theta\right)   \ge E_{{\mathcal{E}}^{A|B},D}^{\mathcal{W},{\rm no}}\left(\Phi_2\left(\Theta^A\otimes \id^B\right)\Phi_1\Delta\right) \nonumber
		\end{align}
		and
		\begin{align}
			&C_{{\mathcal{C}},D}^{\mathcal{I},{\rm all}}\left(\Theta\right)   \ge E_{{\mathcal{E}}^{A|B},D}^{\mathcal{W},{\rm all}}\left(\Phi_2\left(\Theta^A\otimes \id^B\right)\Phi_1\Delta\right). \nonumber
		\end{align}
	\end{theorem}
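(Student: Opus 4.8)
The plan is to prove both inequalities at once by comparing $\Xi:=\Phi_2(\Theta^A\otimes\id^B)\Phi_1\Delta$ with the free operation obtained by swapping $\Theta$ for a free $\Lambda$: for every $\Lambda\in\mathcal{C}$ put $\Xi_\Lambda:=\Phi_2(\Lambda^A\otimes\id^B)\Phi_1\Delta$. I will establish two facts, (i) $\Xi_\Lambda\in\mathcal{E}^{A|B}$ for each admissible $\mathcal{E}\in\{\mathrm{LOCC},\overline{\mathrm{LOCC}},\mathrm{SEP}\}$, and (ii) $D^{\mathcal{W},\mathcal{M}}(\Xi,\Xi_\Lambda)\le D^{\mathcal{I},\mathcal{M}}(\Theta,\Lambda)$ for $\mathcal{M}\in\{\mathrm{no},\mathrm{all}\}$. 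Given these, $\Xi_\Lambda$ is a feasible point of the infimum defining $E_{\mathcal{E}^{A|B},D}^{\mathcal{W},\mathcal{M}}(\Xi)$, so $E_{\mathcal{E}^{A|B},D}^{\mathcal{W},\mathcal{M}}(\Xi)\le D^{\mathcal{W},\mathcal{M}}(\Xi,\Xi_\Lambda)\le D^{\mathcal{I},\mathcal{M}}(\Theta,\Lambda)$, and taking the infimum over $\Lambda\in\mathcal{C}$ yields $E_{\mathcal{E}^{A|B},D}^{\mathcal{W},\mathcal{M}}(\Xi)\le C_{\mathcal{C},D}^{\mathcal{I},\mathcal{M}}(\Theta)$, which is the claim (the first inequality is $\mathcal{M}=\mathrm{no}$, the second is $\mathcal{M}=\mathrm{all}$).

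For (i), note that $\mathcal{C}\subseteq\mathrm{MIO}$, that MIO is closed under composition, and that $\Lambda^A\otimes\id^B\in\mathrm{MIO}$ whenever $\Lambda\in\mathrm{MIO}$ (an incoherent state of $AB$ is a mixture of $\ketbra{i}{i}^A\otimes\ketbra{j}{j}^B$, which remains incoherent); hence $\Phi_2(\Lambda^A\otimes\id^B)\Phi_1\in\mathrm{MIO}$ and $\Xi_\Lambda=(\mathrm{MIO})\circ\Delta$. The leading $\Delta$ is what makes $\Xi_\Lambda$ free. On the one hand, applied to a separable state of $AE_A|BE_B$ with arbitrary ancillae, $\Delta^{AB}\otimes\id^{E_AE_B}$ produces a state that is incoherent on $AB$ and still separable across $AE_A|BE_B$ — this is exactly the structural form derived in part b) of the proof of Lem.~\ref{lem:obvious} (Lem.~14 of Ref.~\cite{Theurer2019}) — and the subsequent MIO map on $AB$ preserves both properties, so $\Xi_\Lambda$ maps separable states to separable states in the complete sense, i.e.\ $\Xi_\Lambda\in\mathrm{SEP}$. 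On the other hand, restricted to incoherent inputs, $\Phi_2(\Lambda^A\otimes\id^B)\Phi_1$ acts as a classical channel on the pair of computational-basis labels, so $\Xi_\Lambda$ is implemented by Alice and Bob each measuring their register in the computational basis, exchanging outcomes (two rounds of classical communication), jointly sampling the new label pair, and locally preparing the corresponding basis states; this is an LOCC protocol, so $\Xi_\Lambda\in\mathrm{LOCC}\subseteq\overline{\mathrm{LOCC}}$ as well. This covers all three choices of $\mathcal{E}$.

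For (ii) I would invoke Lem.~\ref{lem:obvious}: since $\Xi$ and $\Xi_\Lambda$ are both of the form $(\,\cdot\,)\Delta$, the supremum over $\sigma\in\mathcal{W}_{AE_A|BE_B}$, ancillae, and (in the measured case) destructive measurements $M$ that appears in $D^{\mathcal{W},\mathcal{M}}(\Xi,\Xi_\Lambda)$ collapses to a maximum over $\sigma\in\mathcal{W}_{A|B}$. For such a $\sigma$ one has $\Phi_1\Delta\sigma\in\mathcal{I}_{AB}$, so after regrouping the now diagonal $B$-register it equals $\sum_j\mu_j^A\otimes\ketbra{j}{j}^B$ with $\mu_j$ subnormalized incoherent states on $A$; feeding this through $\Theta^A\otimes\id^B$ (respectively $\Lambda^A\otimes\id^B$), then $\Phi_2$, then $M$, contractivity of $D$ under the common CPTP post-processing — which for each fixed value of the classical $B$-register restricts either to an honest destructive measurement $N_j$ on $A$ (measured case) or to a mere ancilla-appending map (unmeasured case) — together with joint convexity bounds $D^{\mathcal{W},\mathcal{M}}(\Xi,\Xi_\Lambda)$ by $\maxi{j}\,D(N_j\Theta\hat\mu_j,N_j\Lambda\hat\mu_j)$ over normalized incoherent $\hat\mu_j$ on $A$. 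Inserting the trivial ancilla into the definition of $D^{\mathcal{I},\mathcal{M}}(\Theta,\Lambda)$ shows this last quantity is $\le D^{\mathcal{I},\mathcal{M}}(\Theta,\Lambda)$, completing (ii).

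The main obstacle is step (i): one must resist arguing by ``composition of free factors'', since $\Phi_1$ and $\Phi_2$ need not individually be separable — for instance $\Phi_2$ could be $\mathcal{U}_\mathrm{CNOT}$, which lies in SIO but is far from separable with respect to $A|B$ — and more generally SIO or IO is not contained in SEP relative to the bipartition; the proof genuinely relies on the initial $\Delta$ collapsing the whole sandwich to a classical channel between the two parties, and on this being simultaneously an element of SEP, of $\overline{\mathrm{LOCC}}$, and of LOCC. A secondary, bookkeeping-type point appears in the measured case of (ii): one has to check that conjugating a destructive measurement on $AB$ by $\Phi_2$ acting on an input whose $B$-part is already classical yields, for each $B$-outcome, a genuine destructive measurement on $A$, so that $D^{\mathcal{I},\mathrm{all}}(\Theta,\Lambda)$ on the right-hand side can absorb it.
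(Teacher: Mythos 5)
Your proof is correct and takes essentially the same route as the paper's: the candidate free operation $\Phi_2\left(\Lambda\otimes\id\right)\Phi_1\Delta$ is shown to lie in LOCC (hence in every admissible $\mathcal{E}$) because the leading dephasing collapses the whole sandwich to a classical channel implementable with local computational-basis measurements, shared randomness, and local preparations, while Lem.~\ref{lem:obvious}, contractivity, and the incoherence of $\Phi_1\Delta\sigma$ give the divergence bound. The only cosmetic differences are that the paper chains the inequality through $C_{\rm MIO}$ and $E_{{\rm LOCC}_3}$ and absorbs $M\Phi_2$ into a single destructive measurement on $AB$ rather than conditioning on the classical $B$ register, and it counts three rounds of communication where you count two — immaterial for the theorem as stated.
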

	Note: As will be apparent from the proof, the Theorem also holds true for $\mathcal{E}={\rm LOCC_{r\ge3}}$ as defined in Eq.~(\ref{eq:LOCCrMeas}).
	\begin{proof}
		Using Lem.~\ref{lem:obvious} and our assumptions on $D$, we find
		\begin{subequations}
			\begin{align}
				C_{{\rm MIO},D}^{\mathcal{I},{\rm no}}\left(\Theta\right)=&\mini{\Lambda \in {\rm MIO} }\ \maxi{\sigma \in \mathcal{I}_{AB}}  D\left(\left(\Theta^A\otimes \id^B\right)\sigma,\left(\Lambda^A\otimes \id^B\right)\sigma \right) \\
				\ge& \mini{\Lambda \in {\rm MIO} }\ \maxi{\sigma \in \mathcal{I}_{AB}}  D\left(\Phi^{AB}_2\left(\Theta^A\otimes \id^B\right)\Phi_1^{AB} \sigma, \Phi^{AB}_2\left(\Lambda^A\otimes \id^B\right)\Phi_1^{AB} \sigma   \right)  \\
				=& \mini{\Lambda \in {\rm MIO} }\ \maxi{\sigma \in \mathcal{I}_{AB}}  D\left(\Phi^{AB}_2\left(\Theta^A\otimes \id^B\right)\Phi_1^{AB} \Delta^{AB} \sigma, \Phi^{AB}_2\left(\Lambda^A\otimes \id^B\right)\Phi_1^{AB} \Delta^{AB} \sigma   \right) \\
				=& \mini{\Lambda \in {\rm MIO} }\ \maxi{\sigma \in \mathcal{W}_{A|B}}  D\left(\Phi^{AB}_2\left(\Theta^A\otimes \id^B\right)\Phi_1^{AB} \Delta^{AB} \sigma,  \Phi^{AB}_2\left(\Lambda^A\otimes \id^B\right)\Phi_1^{AB}  \Delta^{AB} \sigma   \right)  \label{eq:expl}\\
				=&\mini{\Lambda \in {\rm MIO} }\  \maxi{\sigma \in \mathcal{W}_{AE_A|BE_B}}  D\left(\left(\Phi^{AB}_2\left(\Theta^A\otimes \id^B\right)\Phi_1^{AB} \Delta^{AB}\otimes\id^{E_A E_B}\right) \sigma, \right. \nonumber \\
				&  \qquad \qquad \qquad \qquad\qquad \qquad \qquad \qquad \qquad \qquad \left. \left(\Phi^{AB}_2\left(\Lambda^A\otimes \id^B\right)\Phi_1^{AB} \Delta^{AB}\otimes\id^{E_A E_B} \right)\sigma\right)  \\
				\ge&\mini{\Lambda \in {\rm LOCC_3}^{A|B}}\  \maxi{\sigma \in \mathcal{W}_{AE_A|BE_B}}  D\left(\left(\Phi^{AB}_2\left(\Theta^A\otimes \id^B\right)\Phi_1^{AB} \Delta^{AB}\otimes\id^{E_A E_B}\right) \sigma, \left(\Lambda^{A|B}\otimes\id^{E_A E_B}\right) \sigma\right),
			\end{align}
		\end{subequations}
		where we used in Eq.~(\ref{eq:expl}) that $\Delta \sigma=\Delta \Delta \sigma$ and in the last line that 
		\begin{align}\label{eq:oneRound}
			\Phi^{AB}_2\left(\Lambda^A\otimes \id^B\right)\Phi_1^{AB} \Delta^{AB}
		\end{align}
		can be implemented using local operations and three rounds of classical communication: since $\Phi_i$ and $\Lambda$ are maximally incoherent, the states
		\begin{align}
			\sigma_{ij}:=\sum_{kl}p_{kl}^{ij} \ketbra{k}{k}\otimes\ketbra{l}{l}:=\Phi^{AB}_2\left(\Lambda^A\otimes \id^B\right)\Phi_1^{AB} \ketbra{i}{i}\otimes\ketbra{j}{j}
		\end{align}
		are separable. Therefore, the operation in Eq.~(\ref{eq:oneRound}) can be implemented by performing local projective measurements in the incoherent bases, sharing the outcomes, and preparing the corresponding states $\sigma_{ij}$. To do this, one might need shared randomness, which can be established whilst sharing the measurement outcomes. In detail, first Alice performs her projective measurement and shares the outcome as well as some randomness with Bob. This constitutes the first round of the protocol. In the second round, Bob then does his projective measurement, creates his local state (conditioned on the measurement outcomes and the randomness), and communicates his outcome to Alice. In the third round, Alice prepares  her local state, again conditioned on the measurement outcomes and the randomness.
		Together with the discussion in Sec.~\ref{sec:relMeasures}, this finishes the proof of the first part of our statement (and also proves the note we added above this proof). The second part follows in complete analogy, making again use of Lem.~\ref{lem:obvious} and the fact that $M \Phi_2$ is again a destructive measurement. 
	\end{proof}

	\begin{theorem}
		For $S(\rho,\sigma)$ the relative entropy and $\mathrm{dim} B=\mathrm{dim} A$,
		\begin{align}
			C_{{\mathcal{C}},S}^{\mathcal{I},{\rm no}}\left(\Theta\right)   \le E_{{\mathcal{E}}^{A|B},S}^{\mathcal{W},{\rm no}}\left(\mathcal{U}_\mathrm{CNOT}\left(\Theta^A\otimes \id^B\right)\Delta\right) \nonumber
		\end{align}
		holds.
	\end{theorem}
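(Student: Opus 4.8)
The plan is to collapse this operation-level statement onto the state-level result of Ref.~\cite{Streltsov2015}, which says that applying $\mathcal{U}_\mathrm{CNOT}$ to $\rho^{A}\otimes\ketbra{0}{0}^{B}$ converts the relative entropy of coherence of $\rho$ into at least as much relative entropy of entanglement of the resulting maximally correlated state. First I would rewrite the left-hand side: by Lemma~\ref{lem:semiClosedExt} and the remark following it, $C_{\mathcal{C},S}^{\mathcal{I},\mathrm{no}}(\Theta)=\maxi{\sigma\in\mathcal{I}}R_C(\Theta\sigma)$, which is in particular independent of the choice of $\mathcal{C}$. Since $\mathcal{I}$ is compact and $\sigma\mapsto R_C(\Theta\sigma)$ is continuous, the maximum is attained at some $\sigma^{*}\in\mathcal{I}$; writing $\rho:=\Theta\sigma^{*}$, we have $C_{\mathcal{C},S}^{\mathcal{I},\mathrm{no}}(\Theta)=R_C(\rho)=S(\Delta\rho)-S(\rho)$.

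Next, set $\Xi:=\mathcal{U}_\mathrm{CNOT}(\Theta^{A}\otimes\id^{B})\Delta$. Since $\mathrm{LOCC}\subseteq\overline{\mathrm{LOCC}}\subseteq\mathrm{SEP}$, the infimum over $\Lambda$ defining $E_{\mathcal{E}^{A|B},S}^{\mathcal{W},\mathrm{no}}(\Xi)$ runs over a smaller set for $\mathcal{E}\in\{\mathrm{LOCC},\overline{\mathrm{LOCC}}\}$ than for $\mathrm{SEP}$, so $E_{\mathcal{E}^{A|B},S}^{\mathcal{W},\mathrm{no}}(\Xi)\ge E_{\mathrm{SEP}^{A|B},S}^{\mathcal{W},\mathrm{no}}(\Xi)$, and it is enough to show $R_C(\rho)\le E_{\mathrm{SEP}^{A|B},S}^{\mathcal{W},\mathrm{no}}(\Xi)$. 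For this, fix an arbitrary $\Lambda\in\mathrm{SEP}^{A|B}$ and evaluate the supremum over $\mathcal{W}$ on the particular product (hence separable) input $\sigma^{*}\otimes\ketbra{0}{0}^{B}$ with trivial auxiliary systems. Because $\sigma^{*}$ and $\ketbra{0}{0}$ are incoherent, $\Delta$ acts as the identity on this state, and $U_\mathrm{CNOT}$ sends $\ket{k}^{A}\otimes\ket{0}^{B}\mapsto\ket{k}^{A}\otimes\ket{k}^{B}$ (this is where $\dim B=\dim A$ enters), so $\Xi(\sigma^{*}\otimes\ketbra{0}{0})$ equals the maximally correlated state $\rho^{\mathrm{mc}}:=\sum_{kl}\rho_{kl}\,\ketbra{k}{l}^{A}\otimes\ketbra{k}{l}^{B}$, with $\rho_{kl}$ the entries of $\rho$ in the incoherent basis. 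Since $\Lambda$ creates no entanglement, $\Lambda(\sigma^{*}\otimes\ketbra{0}{0})$ is separable, whence
\[
\supr{\sigma\in\mathcal{W}}\ S\big((\Xi\otimes\id)\sigma,(\Lambda\otimes\id)\sigma\big)\ \ge\ S\big(\rho^{\mathrm{mc}},\Lambda(\sigma^{*}\otimes\ketbra{0}{0})\big)\ \ge\ \infi{\tau\ \mathrm{sep.}}\ S(\rho^{\mathrm{mc}},\tau);
\]
taking the infimum over $\Lambda$ gives $E_{\mathrm{SEP}^{A|B},S}^{\mathcal{W},\mathrm{no}}(\Xi)\ge\infi{\tau\ \mathrm{sep.}}S(\rho^{\mathrm{mc}},\tau)$, the relative entropy of entanglement of $\rho^{\mathrm{mc}}$.

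It then remains to invoke $\infi{\tau\ \mathrm{sep.}}S(\rho^{\mathrm{mc}},\tau)\ge R_C(\rho)$. This is exactly the state-level inequality of Ref.~\cite{Streltsov2015}; alternatively one can argue directly that $\infi{\tau\ \mathrm{sep.}}S(\rho^{\mathrm{mc}},\tau)\ge E_{D}(\rho^{\mathrm{mc}})\ge S\big(\tr_{A}\rho^{\mathrm{mc}}\big)-S(\rho^{\mathrm{mc}})=S(\Delta\rho)-S(\rho)=R_C(\rho)$, using that the distillable entanglement $E_{D}$ is upper bounded by the relative entropy of entanglement and lower bounded by the coherent information (hashing bound), together with $\tr_{A}\rho^{\mathrm{mc}}=\Delta\rho$ and the fact that $\rho^{\mathrm{mc}}$ and $\rho$ have the same spectrum (they are related by the isometry $\ket{k}\mapsto\ket{k}\ket{k}$). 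Chaining all the inequalities yields $C_{\mathcal{C},S}^{\mathcal{I},\mathrm{no}}(\Theta)=R_C(\rho)\le E_{\mathcal{E}^{A|B},S}^{\mathcal{W},\mathrm{no}}(\Xi)$ for every admissible $\mathcal{C}$ and $\mathcal{E}$.

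The main obstacle is precisely this last ingredient, the state-level bound $\infi{\tau\ \mathrm{sep.}}S(\rho^{\mathrm{mc}},\tau)\ge R_C(\rho)$; everything else is bookkeeping, but that bookkeeping does require care: one must check that the test input $\sigma^{*}\otimes\ketbra{0}{0}^{B}$ is separable, that it remains separable after any operation in $\mathcal{E}$ (so that comparing against the full set of separable states is legitimate), and that $\Delta$ followed by $\mathcal{U}_\mathrm{CNOT}$ produces exactly the maximally correlated state, so that the operation-level quantity genuinely reduces to the known state-level one.
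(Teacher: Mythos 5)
Your proof is correct and follows essentially the same route as the paper's: evaluate the supremum at the test input $\sigma^{*}\otimes\ketbra{0}{0}^{B}$, recognize the output of $\mathcal{U}_{\rm CNOT}(\Theta\otimes\id)\Delta$ as the maximally correlated state, reduce to the case $\mathcal{E}={\rm SEP}$, lower bound by the relative entropy of entanglement, and finish with $R_E^{A|B}(\rho^{AB})\ge S(\rho^{A})-S(\rho^{AB})$ combined with Lem.~\ref{lem:semiClosedExt}. The only cosmetic differences are that you fix $\Lambda$ and bound pointwise rather than invoking the max--min inequality, and that you justify the key state-level inequality via the hashing bound and $E_D\le R_E$ instead of citing it directly as the paper does.
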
 
	\begin{proof}
		Let us denote the relative entropy of entanglement~\cite{Vedral1998} with respect to the bi-partition into parties $A$ and $B$ by $R_E^{A|B}$, i.e.,
		\begin{align}
			R_E^{A|B}(\rho)=\mini{\sigma \in \mathcal{W}_{A|B}}S\left(\rho,\sigma\right).
		\end{align}
		For readability we also define 
		\begin{align}
			\Theta \sigma =: \rho_\sigma=\sum_{i,j} \rho_{ij}^\sigma\ketbra{i}{j}.
		\end{align}
		Then, with Lem.~\ref{lem:obvious}, and applying a technique used in Ref.~\cite{Streltsov2015}, we find
		\begin{subequations}
			\begin{align}
				E_{{\rm SEP}^{A|B},S}^{\mathcal{W},{\rm no}}\left(\mathcal{U}_\mathrm{CNOT}^{AB}\left(\Theta^A\otimes \id^B\right)\Delta^{AB}\right)
				=& \mini{\Lambda \in {\rm SEP}^{A|B}}\  \maxi{\sigma \in \mathcal{W}_{A|B}}  S\left(\mathcal{U}_\mathrm{CNOT}^{AB}\left(\Theta^A\otimes \id^B\right)\Delta^{AB} \sigma, \Lambda^{A|B} \sigma   \right) \label{eq:inv1}  \\
				\ge&  \maxi{\sigma \in \mathcal{W}_{A|B}} \ \mini{\Lambda \in {\rm SEP}^{A|B}}  S\left(\mathcal{U}_\mathrm{CNOT}^{AB}\left(\Theta^A\otimes \id^B\right)\Delta^{AB} \sigma, \Lambda^{A|B} \sigma   \right)  \label{eq:inv2} \\
				=&\maxi{\sigma \in \mathcal{W}_{A|B}}\  R_E^{A|B}\left(\mathcal{U}_\mathrm{CNOT}^{AB}\left(\Theta^A\otimes \id^B\right)\Delta^{AB} \sigma  \right) \label{eq:inv3}  \\
				\ge& \maxi{\sigma \in \mathcal{I}_A}\    R_E^{A|B}\left(\mathcal{U}_\mathrm{CNOT}^{AB}\left(\Theta^A\otimes \id^B\right)\Delta^{AB} \left(\sigma^A \otimes \ketbra{0}{0}^B\right)  \right) \label{eq:inv4}  \\
				=&  \maxi{\sigma \in \mathcal{I}_A}\    R_E^{A|B}\left(\sum_{i,j} \rho_{ij}^\sigma\ketbra{i}{j}^A\otimes  \ketbra{i}{j}^B\right) \label{eq:inv5}  \\
				\ge& \maxi{\sigma \in \mathcal{I}_A} \  S\left(\Delta \Theta \sigma\right)-S\left(\Theta \sigma\right), \label{eq:inv6}
			\end{align}
		\end{subequations}
		where we used the max-min inequality in Eq.~(\ref{eq:inv2}) and in Eq.~(\ref{eq:inv3}), we used that with SEP, we can prepare an arbitrary separable state, but no state outside the set of separable states if we have a separable input state.
		In the last line, we used that~\cite{Plenio2000}
		\begin{align}
			R_E^{A|B}\left(\rho^{AB}\right)\ge S\left(\rho^A\right)-S\left(\rho^{AB}\right).
		\end{align}
		Together with Lem.~\ref{lem:semiClosedExt} and the discussion in Sec.~\ref{sec:relMeasures}, this finishes the proof. Note that all our arguments also hold if we take $\mathcal{E}={ \rm LOCC_{r}}$. Thus the Theorem also holds in this case.
	\end{proof}
	
	\begin{theorem}
		For $S(\rho,\sigma)$ the relative entropy,
		\begin{align}
			C_{{\mathcal{C}},S}^{\mathcal{I},{\rm no}}\left(\Theta\right)   = \supr{\Phi_1,\Phi_2 \in \mathcal{C}}\  E_{{\mathcal{E}}^{A|B},S}^{\mathcal{W},{\rm no}}\left(\Phi_2\left(\Theta^A\otimes \id^B\right)\Phi_1\Delta\right). \nonumber
		\end{align}
		The supremum is achieved for $\mathrm{dim} B=\mathrm{dim} A$, $\Phi_1=\id$ and $\Phi_2=\mathcal{U}_\mathrm{CNOT}$. 
	\end{theorem}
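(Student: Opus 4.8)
The plan is to obtain the identity by sandwiching: it is essentially a corollary of Theorems~\ref{thm:inequality} and~\ref{thm:inverse}. Theorem~\ref{thm:inequality} will yield one inequality after taking a supremum over the free pre- and post-processings, while Theorem~\ref{thm:inverse} will yield the opposite one, because the operation it uses as a witness, $\mathcal{U}_{\mathrm{CNOT}}\left(\Theta^A\otimes\id^B\right)\Delta$, is itself a member of that family (namely the one with $\Phi_1=\id$, $\Phi_2=\mathcal{U}_{\mathrm{CNOT}}$). So the argument is short, and all of the substance has already been established in those two Theorems.

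First I would apply Theorem~\ref{thm:inequality} with the divergence $D$ specialized to the relative entropy $S$: for \emph{every} pair $\Phi_1,\Phi_2\in\mathcal{C}$ one has
\[
C_{\mathcal{C},S}^{\mathcal{I},{\rm no}}\left(\Theta\right)\ \ge\ E_{\mathcal{E}^{A|B},S}^{\mathcal{W},{\rm no}}\left(\Phi_2\left(\Theta^A\otimes\id^B\right)\Phi_1\Delta\right).
\]
Since the left-hand side is independent of $\Phi_1$, $\Phi_2$ and of $\mathrm{dim}\,B$, taking the supremum over $\Phi_1,\Phi_2\in\mathcal{C}$ on the right — a supremum which (as for every functional of Def.~\ref{def:Meas}) tacitly also ranges over the compatible dimensions of the auxiliary system $B$ — gives
\[
C_{\mathcal{C},S}^{\mathcal{I},{\rm no}}\left(\Theta\right)\ \ge\ \sup_{\Phi_1,\Phi_2\in\mathcal{C}}\ E_{\mathcal{E}^{A|B},S}^{\mathcal{W},{\rm no}}\left(\Phi_2\left(\Theta^A\otimes\id^B\right)\Phi_1\Delta\right).
\]

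For the reverse inequality I would first note that $\id\in\mathcal{C}$ and, since $\mathcal{U}_{\mathrm{CNOT}}\in{\rm SIO}$ and ${\rm SIO}$ is contained in ${\rm LOP}$, ${\rm IO}$, ${\rm DIO}$ and ${\rm MIO}$, also $\mathcal{U}_{\mathrm{CNOT}}\in\mathcal{C}$ for every admissible $\mathcal{C}$; hence the choice $\Phi_1=\id$, $\Phi_2=\mathcal{U}_{\mathrm{CNOT}}$ with $\mathrm{dim}\,B=\mathrm{dim}\,A$ is admissible in the supremum. Writing $\mathcal{U}_{\mathrm{CNOT}}\left(\Theta^A\otimes\id^B\right)\Delta=\mathcal{U}_{\mathrm{CNOT}}\left(\Theta^A\otimes\id^B\right)\,\id\,\Delta$ and invoking Theorem~\ref{thm:inverse} (whose hypothesis $\mathrm{dim}\,B=\mathrm{dim}\,A$ is then met) gives
\[
C_{\mathcal{C},S}^{\mathcal{I},{\rm no}}\left(\Theta\right)\ \le\ E_{\mathcal{E}^{A|B},S}^{\mathcal{W},{\rm no}}\left(\mathcal{U}_{\mathrm{CNOT}}\left(\Theta^A\otimes\id^B\right)\,\id\,\Delta\right)\ \le\ \sup_{\Phi_1,\Phi_2\in\mathcal{C}}\ E_{\mathcal{E}^{A|B},S}^{\mathcal{W},{\rm no}}\left(\Phi_2\left(\Theta^A\otimes\id^B\right)\Phi_1\Delta\right).
\]
Combining the two chains forces equality throughout. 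In particular the term of the supremum corresponding to $\Phi_1=\id$, $\Phi_2=\mathcal{U}_{\mathrm{CNOT}}$, $\mathrm{dim}\,B=\mathrm{dim}\,A$ equals $C_{\mathcal{C},S}^{\mathcal{I},{\rm no}}\left(\Theta\right)$, which by the first chain dominates every other term of the supremum; so the supremum is attained there, as claimed.

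I do not expect a real obstacle, since all the work sits in Theorems~\ref{thm:inequality} and~\ref{thm:inverse}. The only points requiring (minor) care are the bookkeeping above: that $\mathcal{U}_{\mathrm{CNOT}}$ is free in each of the five coherence theories (immediate from ${\rm SIO}\subseteq{\rm LOP},{\rm IO},{\rm DIO},{\rm MIO}$), and that the equal-dimension hypothesis of Theorem~\ref{thm:inverse} is compatible with the unrestricted supremum over $\Phi_1,\Phi_2$ (it is, because that supremum already includes all dimensions of $B$). One may additionally remark, via Lemma~\ref{lem:semiClosedExt}, that $C_{\mathcal{C},S}^{\mathcal{I},{\rm no}}$ does not actually depend on the choice of $\mathcal{C}$, which is consistent with the right-hand side of the identity relying on a fixed ${\rm SIO}$ witness.
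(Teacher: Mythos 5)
Your proposal is correct and matches the paper's proof, which likewise obtains the identity by combining Theorem~\ref{thm:inequality} (supremized over $\Phi_1,\Phi_2$) with Theorem~\ref{thm:inverse} evaluated at the admissible choice $\Phi_1=\id$, $\Phi_2=\mathcal{U}_{\mathrm{CNOT}}$, $\dim B=\dim A$. The paper states this in one line; your version simply spells out the sandwiching and the membership $\mathcal{U}_{\mathrm{CNOT}}\in{\rm SIO}\subseteq\mathcal{C}$, both of which are exactly the ingredients the authors rely on.
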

	\begin{proof}
		The proof follows directly from Thms.~\ref{thm:inequality} and~\ref{thm:inverse}. As we noted in the respective proofs, these two Theorems also hold if we choose $\mathcal{E}= {\rm LOCC_{r\ge3}}$, which is why also this Theorem holds for $\mathcal{E}= {\rm LOCC_{r\ge3}}$.
	\end{proof}
	
	\begin{corollary}
		An operation $\Theta$ can be converted to an operation outside $\mathcal{E}$ with operations in $\mathcal{C}$ if and only if $\Theta$ is not in $\rm MIO$. 
	\end{corollary}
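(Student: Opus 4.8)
The plan is to get both implications essentially for free from the results already in place.

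First I would prove the ``if'' direction: assume $\Theta\notin\mathrm{MIO}$. Then some pure incoherent input $\ketbra{i}{i}$ is mapped outside $\mathcal{I}$, so $\Theta\ketbra{i}{i}\neq\Delta\Theta\ketbra{i}{i}$, and Lemma~\ref{lem:semiClosedExt} gives $C_{\mathcal{C},S}^{\mathcal{I},{\rm no}}(\Theta)=\maxi{\sigma\in\mathcal{I}}\,[S(\Delta\Theta\sigma)-S(\Theta\sigma)]=\maxi{\sigma\in\mathcal{I}}\,S(\Theta\sigma,\Delta\Theta\sigma)>0$, using that the relative entropy is strictly positive off the diagonal. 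Fixing $\dim B=\dim A$, Theorem~\ref{thm:inverse} then yields $E_{\mathcal{E}^{A|B},S}^{\mathcal{W},{\rm no}}(\mathcal{U}_\mathrm{CNOT}(\Theta^A\otimes\id^B)\Delta)\ge C_{\mathcal{C},S}^{\mathcal{I},{\rm no}}(\Theta)>0$. On the other hand, every $\Lambda\in\mathcal{E}^{A|B}$ gives the value $0$ to this functional (take $\Lambda$ itself as the infimizer in Definition~\ref{def:Meas} and use that $D$ vanishes on equal arguments), so $\mathcal{U}_\mathrm{CNOT}(\Theta^A\otimes\id^B)\Delta\notin\mathcal{E}^{A|B}$. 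Since $\Delta,\mathcal{U}_\mathrm{CNOT}\in\mathrm{SIO}\subseteq\mathcal{C}$, this is exactly a conversion of $\Theta$ to an operation outside $\mathcal{E}$ by operations in $\mathcal{C}$ (with $\Phi_1=\id$, $\Phi_2=\mathcal{U}_\mathrm{CNOT}$).

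For the ``only if'' direction, assume $\Theta\in\mathrm{MIO}$ and let $\Phi_1,\Phi_2\in\mathcal{C}\subseteq\mathrm{MIO}$ and the ancilla $B$ be arbitrary. The key observation is that $\Theta^A\otimes\id^B$ sends every incoherent state of $AB$ --- a convex mixture of $\ketbra{k}{k}^A\otimes\ketbra{l}{l}^B$ --- to an incoherent state of $AB$, because $\Theta\ketbra{k}{k}$ is diagonal; since $\mathrm{MIO}$ is closed under composition, $\Phi_2(\Theta^A\otimes\id^B)\Phi_1$ maps incoherent states of $AB$ to incoherent, hence separable, states of $AB$. This is precisely the structural property of the composition exploited in the proof of Theorem~\ref{thm:inequality}, so that argument carries over verbatim with $\Lambda$ replaced by $\Theta$: $\Phi_2(\Theta^A\otimes\id^B)\Phi_1\Delta^{AB}$ can be implemented by local projective measurements in the incoherent bases, classical exchange of the outcomes, and local preparation of the resulting separable states, hence lies in ${\rm LOCC}_3\subseteq{\rm LOCC}\subseteq\overline{\mathrm{LOCC}}\subseteq\mathrm{SEP}$, i.e.\ in $\mathcal{E}^{A|B}$ for every admissible $\mathcal{E}$. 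As $\Phi_1,\Phi_2,B$ were arbitrary, $\Theta$ admits no conversion to an operation outside $\mathcal{E}$, completing the equivalence.

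I expect the ``only if'' direction to be the only delicate point. Two things there need care: first, the leading $\Delta$ in the conversion scheme is essential --- otherwise the statement is false, since e.g.\ $\Theta=\id\in\mathrm{MIO}$ composed with $\Phi_2=\mathcal{U}_\mathrm{CNOT}\in\mathcal{C}$ produces $\mathcal{U}_\mathrm{CNOT}\notin\mathcal{E}$; and second, one must verify the conclusion for the \emph{smallest} set $\mathcal{E}=\mathrm{LOCC}$, which forces an explicit finite-round LOCC protocol rather than merely the observation that the image of every incoherent (hence separable) state is separable. Everything else (Lemma~\ref{lem:semiClosedExt}, Theorem~\ref{thm:inverse}, and the easy half of faithfulness from Proposition~\ref{prop:monotones}) is routine.
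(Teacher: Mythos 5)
Your proof is correct, and the ``if'' direction coincides in substance with the paper's: positivity of the MIO coherence measure on non-MIO operations (which you make explicit via Lemma~\ref{lem:semiClosedExt} and the identity $S(\Delta\rho)-S(\rho)=S(\rho,\Delta\rho)$, rather than citing the faithfulness clause of Prop.~\ref{prop:monotones}), followed by Thm.~\ref{thm:inverse} and the trivial vanishing of $E^{\mathcal{W},{\rm no}}_{\mathcal{E},S}$ on members of $\mathcal{E}$. Where you genuinely diverge is the ``only if'' direction. The paper argues quantitatively: by Thm.~\ref{thm:main}, $\Theta\in{\rm MIO}$ forces $\sup_{\Phi_i}E^{\mathcal{W},{\rm no}}_{\mathcal{E},S}\bigl(\Phi_2(\Theta\otimes\id)\Phi_1\Delta\bigr)=0$, and then the \emph{faithfulness} of $E^{\mathcal{W},{\rm no}}_{{\rm SEP},S}$ and $E^{\mathcal{W},{\rm no}}_{{\rm LOCC}_{r\ge3},S}$ (the Choi-state argument in the proof of Prop.~\ref{prop:monotones}) converts ``zero entanglement'' back into membership in $\mathcal{E}$, with the ${\rm LOCC}_{r\ge 3}$ variant needed precisely because LOCC is not closed. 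You instead bypass the measures entirely and show membership structurally, by rerunning the ${\rm LOCC}_3$ protocol from the proof of Thm.~\ref{thm:inequality} with $\Lambda$ replaced by $\Theta$. Your route is more elementary --- it needs neither the hard half of faithfulness of the entanglement monotones nor any care about the non-closedness of LOCC, since you land directly in ${\rm LOCC}_3\subseteq{\rm LOCC}\subseteq\overline{\rm LOCC}\subseteq{\rm SEP}$ --- while the paper's route has the virtue of being a two-line consequence of results already proved. Your two cautionary remarks (the leading $\Delta$ is indispensable, as the $\Theta=\id$, $\Phi_2=\mathcal{U}_{\rm CNOT}$ example shows, and the LOCC case demands an explicit finite-round protocol rather than a ``separable-to-separable'' observation) are both accurate and are exactly the points the paper addresses in its Discussion section and in the ${\rm LOCC}_3$ construction, respectively.
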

	\begin{proof}
		According to Prop.~\ref{prop:monotones} and its proof, the two measures 
		\begin{align}
			C_{{\rm MIO},S}^{\mathcal{I},{\rm no}},  E_{{\rm SEP}^{A|B},S}^{\mathcal{W},{\rm no}} \nonumber
		\end{align}	
		as well as
		\begin{align*}
			E_{{\rm LOCC_{r\ge3}}^{A|B},S}^{\mathcal{W},{\rm no}}
		\end{align*}
		are faithful, i.e., they are zero if and only if they are evaluated on free operations. In addition, the inequalities discussed in Sec.~\ref{sec:relMeasures} hold. Together with Thm.~\ref{thm:main} (and in particular that the optimal conversion scheme is contained in SIO) this finishes the proof and shows that the Corollary also holds for $\mathcal{E}={\rm LOCC_{r\ge3}}$.
	\end{proof}
	
	\begin{theorem}
		Let $E_{\mathcal{E}^{AE_A|B}}$ be a (convex) resource monotone with respect to the set of operations $\mathcal{E}^{AE_A|B}$. Then 
		\begin{align}\label{eq:defCohWithEntSM}
			C_{\mathcal{C}}^{\mathcal{E},E}\left(\Theta\right):=\supr{\Lambda_i\in \mathcal{C}}\ E_{\mathcal{E}^{AE_A|B}}\left(\Lambda_2\left(\Theta^{A} \otimes \id^{E_AB} \right)\Lambda_1\Delta\right) 
		\end{align}
		is a (convex) resource monotone with respect to the operations $\mathcal{C}$. If $E_{\mathcal{E}^{AE_A|B}}$ is in addition faithful, Eq.~(\ref{eq:defCohWithEntSM}) defines a measure with respect to $\rm MIO$.
	\end{theorem}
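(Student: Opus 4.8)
The plan is to check the three parts in turn --- that $C_{\mathcal{C}}^{\mathcal{E},E}$ is monotone under concatenation with operations in $\mathcal{C}$, that it inherits convexity from $E$, and (when $E$ is faithful) that it is a measure with respect to $\mathrm{MIO}$ --- each time recycling machinery already in place. Throughout, I read the supremum in \eqref{eq:defCohWithEntSM} as also optimizing over the dimensions of the auxiliary systems $E_A,B$, as for the other quantities in the main text. For monotonicity, fix $\Phi_1,\Phi_2\in\mathcal{C}$ acting on $AA'$ and set $\Theta'=\Phi_2(\Theta^A\otimes\id^{A'})\Phi_1$. Expanding $C_{\mathcal{C}}^{\mathcal{E},E}(\Theta')$, the channel fed into $E$ is
\[
\Lambda_2\bigl((\Phi_2\otimes\id^{E_{AA'}B})(\Theta^A\otimes\id^{A'E_{AA'}B})(\Phi_1\otimes\id^{E_{AA'}B})\bigr)\Lambda_1\Delta
\]
with $\Lambda_i\in\mathcal{C}$. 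I would absorb the $\Phi_i$ into the $\Lambda_i$, i.e.\ set $\tilde\Lambda_2:=\Lambda_2(\Phi_2\otimes\id^{E_{AA'}B})$ and $\tilde\Lambda_1:=(\Phi_1\otimes\id^{E_{AA'}B})\Lambda_1$, which are again in $\mathcal{C}$ since $\mathcal{C}$ is closed under composition and under tensoring with an identity channel. Relabelling $E_A:=A'E_{AA'}$, the channel becomes $\tilde\Lambda_2(\Theta^A\otimes\id^{E_AB})\tilde\Lambda_1\Delta$, one of the channels in the supremum defining $C_{\mathcal{C}}^{\mathcal{E},E}(\Theta)$, so its $E$-value is at most $C_{\mathcal{C}}^{\mathcal{E},E}(\Theta)$; taking the supremum over $\Lambda_1,\Lambda_2$ gives the claim. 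This is the same ``absorb the free maps'' step used for the entanglement monotones in the proof of Prop.~\ref{prop:monotones}.

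Convexity is immediate: $\Xi\mapsto\Lambda_2(\Xi\otimes\id)\Lambda_1\Delta$ is linear, so convexity of $E$ together with splitting the supremum over the common $\Lambda_i$ into two separate suprema gives $C_{\mathcal{C}}^{\mathcal{E},E}(t\Theta_1+(1-t)\Theta_2)\le t\,C_{\mathcal{C}}^{\mathcal{E},E}(\Theta_1)+(1-t)\,C_{\mathcal{C}}^{\mathcal{E},E}(\Theta_2)$, exactly as in the convexity part of Prop.~\ref{prop:monotones}. For faithfulness, assume $E$ is faithful, so $E$ vanishes precisely on $\mathcal{E}$. If $\Theta\in\mathrm{MIO}$, then for all $\Lambda_1,\Lambda_2\in\mathcal{C}\subseteq\mathrm{MIO}$ the composition $\Lambda_2(\Theta^A\otimes\id^{E_AB})\Lambda_1$ is in $\mathrm{MIO}$, so --- exactly as in the proof of Thm.~\ref{thm:inequality} --- the channel $\Lambda_2(\Theta^A\otimes\id^{E_AB})\Lambda_1\Delta$ can be realized by measuring the input in the incoherent product basis (a local measurement on $AE_A$ and one on $B$), broadcasting the outcome, and preparing the corresponding output state, which is incoherent hence separable across $AE_A|B$; thus it lies in $\mathrm{LOCC}\subseteq\mathcal{E}^{AE_A|B}$, so $E$ is zero on it and $C_{\mathcal{C}}^{\mathcal{E},E}(\Theta)=0$.

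For the converse direction of faithfulness, suppose $\Theta\notin\mathrm{MIO}$. I would choose $\dim B=\dim A$, $\Lambda_1=\id$, and $\Lambda_2=\mathcal{U}_\mathrm{CNOT}\otimes\id^{E_A}\in\mathcal{C}$ (legitimate since $\mathcal{U}_\mathrm{CNOT}\in\mathrm{SIO}\subseteq\mathcal{C}$), which produces the channel $\bigl[\mathcal{U}_\mathrm{CNOT}(\Theta^A\otimes\id^B)\Delta^{AB}\bigr]\otimes\Delta^{E_A}$. By Lem.~\ref{lem:semiClosedExt}, $C_{\mathcal{C},S}^{\mathcal{I},{\rm no}}(\Theta)=\max_{\sigma\in\mathcal{I}}\bigl[S(\Delta\Theta\sigma)-S(\Theta\sigma)\bigr]>0$ precisely because $\Theta$ maps some incoherent state to a coherent one, and by Thm.~\ref{thm:inverse} this is a lower bound on $E_{\mathcal{E}^{A|B},S}^{\mathcal{W},{\rm no}}(\mathcal{U}_\mathrm{CNOT}(\Theta^A\otimes\id^B)\Delta^{AB})$, which is faithful by Prop.~\ref{prop:monotones}; hence $\mathcal{U}_\mathrm{CNOT}(\Theta^A\otimes\id^B)\Delta^{AB}\notin\mathrm{SEP}^{A|B}$. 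Since a channel of the form $\Xi_1^{A|B}\otimes\Xi_2^{E_A}$ is separable across $AE_A|B$ only if $\Xi_1$ is separable across $A|B$ --- feed a fixed state into the input of $\Xi_2$ and trace out its output, both local operations on the $A$-side, and recall that pre- and post-composing a separable channel with local channels keeps it separable --- the full channel is outside $\mathcal{E}^{AE_A|B}$, and faithfulness of $E$ forces $C_{\mathcal{C}}^{\mathcal{E},E}(\Theta)>0$. Combined with the previous paragraph this is exactly faithfulness with respect to $\mathrm{MIO}$.

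The main obstacle will not be any individual estimate but the bookkeeping: making the auxiliary labels and the bipartition line up cleanly in the monotonicity argument, and --- in the faithfulness proof --- certifying that the explicitly constructed channel genuinely lies outside $\mathcal{E}$. For the latter I do not attempt a direct computation but instead borrow Thm.~\ref{thm:inverse}, Lem.~\ref{lem:semiClosedExt}, and the faithfulness of the relative-entropy-based entanglement monotone from Prop.~\ref{prop:monotones}, together with the elementary fact that tensoring a non-separable channel with a local channel on the $A$-side cannot make it separable.
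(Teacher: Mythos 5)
Your proposal is correct and follows essentially the same route as the paper: the monotonicity argument (absorbing the $\Phi_i$ into the $\Lambda_i$ and relabelling $\tilde A E_A$ as the new auxiliary) and the convexity argument (linearity of $\Xi\mapsto\Lambda_2(\Xi\otimes\id)\Lambda_1\Delta$ plus splitting the supremum) are exactly those in the Supplemental Material. For faithfulness the paper simply cites Cor.~\ref{cor:conversion}, whereas you unpack that corollary explicitly via Thm.~\ref{thm:inverse}, Lem.~\ref{lem:semiClosedExt}, and Prop.~\ref{prop:monotones}, and additionally take care of the auxiliary factor $\Delta^{E_A}$ by noting that tensoring with a local channel on Alice's side cannot render a non-separable channel separable across $AE_A|B$ --- a detail the paper's one-line citation glosses over, and which your version handles correctly.
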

	\begin{proof}
		Non-negativity is obviously inherited and monotonicity is a simple consequence of the fact that $\mathcal{C}$ is closed under concatenation. Let $\Phi_i \in \mathcal{C}$. Then we have
		\begin{align}
			C_{\mathcal{C}}^{\mathcal{E},E} \left( \Phi_2 \left(\Theta^A \otimes \id^{\tilde{A}} \right) \Phi_1 \right) =& \supr{\Lambda_i\in \mathcal{C}}\ E_{\mathcal{E}^{A\tilde{A}E_A|B}}\left(\Lambda_2\left(\Phi_2 \left(\Theta^A \otimes \id^{\tilde{A}} \right) \Phi_1  \otimes \id^{E_AB} \right)\Lambda_1\Delta\right) \nonumber \\
			=& \supr{\Lambda_i\in \mathcal{C}}\ E_{\mathcal{E}^{A\tilde{A}E_A|B}}\left(\Lambda_2\left(\Phi_2\otimes \id^{E_AB}\right) \left(\Theta^A \otimes \id^{\tilde{A}E_AB}  \right)\left(\Phi_1\otimes \id^{E_AB}\right)\Lambda_1\Delta\right) \nonumber \\
			\le& \supr{\Lambda_i\in \mathcal{C}}\ E_{\mathcal{E}^{A\tilde{A}E_A|B}}\left(\Lambda_2\left( \Theta^A \otimes \id^{\tilde{A}E_AB} \right)\Lambda_1\Delta\right) \nonumber \\
			=&\supr{\Lambda_i\in \mathcal{C}}\ E_{\mathcal{E}^{AE_A|B}}\left(\Lambda_2\left(\Theta^{A} \otimes \id^{E_AB}  \right)\Lambda_1\Delta\right) \nonumber \\
			=&C_{\mathcal{C}}^{\mathcal{E},E}\left(\Theta\right),
		\end{align}
		where we included $\tilde{A}$ into $E_A$ in the second last line.
		Also the proof that convexity can be inherited is straightforward. Assume that $E_{\mathcal{E}^{AE_A|B}}$ is convex. Then we find
		\begin{align}
			C_{\mathcal{C}}^{\mathcal{E},E}\left(\sum_k p_k \Theta_k\right)=&\supr{\Lambda_i\in \mathcal{C}}\ E_{\mathcal{E}^{AE_A|B}}\left(\Lambda_2\left(\sum_k p_k \Theta_k^{A} \otimes \id^{E_AB} \right)\Lambda_1\Delta\right) \nonumber \\
			=&\supr{\Lambda_i\in \mathcal{C}}\ E_{\mathcal{E}^{AE_A|B}}\left(\sum_k p_k \Lambda_2\left( \Theta_k^{A} \otimes \id^{E_AB} \right)\Lambda_1\Delta\right) \nonumber \\
			\le&\supr{\Lambda_i\in \mathcal{C}}\ \sum_k p_k\ E_{\mathcal{E}^{AE_A|B}}\left( \Lambda_2\left( \Theta_k^{A} \otimes \id^{E_AB} \right)\Lambda_1\Delta\right) \nonumber \\
			\le& \sum_k p_k\ \supr{\Lambda_i\in \mathcal{C}}\  E_{\mathcal{E}^{AE_A|B}}\left( \Lambda_2\left( \Theta_k^{A} \otimes \id^{E_AB} \right)\Lambda_1\Delta\right) \nonumber \\
			=& \sum_k p_k\ C_{\mathcal{C}}^{\mathcal{E},E}\left(\Theta_k\right),
		\end{align}
		i.e., $	C_{\mathcal{C}}^{\mathcal{E},E}$ is convex too.
		The statement about faithfulness in the case of $\rm MIO$ is a direct consequence of Cor.~\ref{cor:conversion}.
	\end{proof}
	
	\section{Comparison of the monotones}\label{sec:relMeasures}
	
	In this Section, we discuss how the different monotones introduced in the main text and the analog quantity from Eq.~(\ref{eq:LOCCrMeas}) bound each other. Since their definitions involve an infimum, the different sets $\mathcal{E}$ and $\mathcal{C}$ we are considering obey the inclusion relations shown in Fig.~1 in the main text, and $\rm LOCC_r \subset LOCC_{r+1} \subset LOCC$, for $\mathcal{M}\in \left\{\rm no, all\right\}$, we obtain 
	\begin{align}
		&E_{{\rm{LOCC_r}}^{A|B},D}^{\mathcal{S},\mathcal{M}}\left(\Theta\right) \ge E_{{\rm{LOCC_{r+1}}}^{A|B},D}^{\mathcal{S},\mathcal{M}}\left(\Theta\right) \ge E_{{\rm{LOCC}}^{A|B},D}^{\mathcal{S},\mathcal{M}}\left(\Theta\right)\ge E_{{\rm{\overline{LOCC}}},D}^{\mathcal{S},\mathcal{M}}\left(\Theta\right) \ge E_{{\rm{SEP}}^{A|B},D}^{\mathcal{S},\mathcal{M}}\left(\Theta\right), \nonumber \\
		&C_{{\rm SIO},D}^{\mathcal{S},\mathcal{M}}\left(\Theta\right) \ge C_{{\rm LOP},D}^{\mathcal{S},\mathcal{M}}\left(\Theta\right) \ge
		C_{{\rm IO},D}^{\mathcal{S},\mathcal{M}}\left(\Theta\right) \ge
		C_{{\rm MIO},D}^{\mathcal{S},\mathcal{M}}\left(\Theta\right),  \nonumber \\
		&C_{{\rm SIO},D}^{\mathcal{S},\mathcal{M}}\left(\Theta\right) \ge C_{{\rm DIO},D}^{\mathcal{S},\mathcal{M}}\left(\Theta\right) \ge
		C_{{\rm MIO},D}^{\mathcal{S},\mathcal{M}}\left(\Theta\right).  
	\end{align}
	For $\mathcal{M}=\rm free$, it is not straightforward to establish a similar relation, since smaller sets of operations (over which we perform infima) also potentially include fewer free destructive measurements (over which we take suprema).
	
	Within one set of operations $\mathcal{E}$ or $\mathcal{C}$, other inequalities emerge from the different choices of states and destructive measurements over which we optimize: the supremum over the set of all states/destructive measurements is not smaller than the supremum over the set of free states/destructive measurements. Therefore, we find
	\begin{align}\label{eq:inequMono1}
		E_{\mathcal{E}^{A|B},D}^{\mathcal{D},\mathcal{M}}\left(\Theta\right) &\ge E_{\mathcal{E}^{A|B},D}^{\mathcal{W},\mathcal{M}}\left(\Theta\right), \nonumber \\
		E_{\mathcal{E}^{A|B},D}^{\mathcal{S},{\rm all}}\left(\Theta\right) &\ge E_{\mathcal{E}^{A|B},D}^{\mathcal{S},{\rm free}}\left(\Theta\right), \nonumber \\
		C_{{\mathcal{C}},D}^{\mathcal{D},\mathcal{M}}\left(\Theta\right) &\ge C_{{\mathcal{C}},D}^{\mathcal{I},\mathcal{M}}\left(\Theta\right), \nonumber \\
		C_{{\mathcal{C}},D}^{\mathcal{S},{\rm all}}\left(\Theta\right) &\ge C_{{\mathcal{C}},D}^{\mathcal{S},{\rm free}}\left(\Theta\right).
	\end{align}
	Finally, since $D$ is contractive by assumption, we also have
	\begin{align}\label{eq:inequMono2}
		E_{\mathcal{E}^{A|B},D}^{\mathcal{S},{\rm no}}\left(\Theta\right) &\ge E_{\mathcal{E}^{A|B},D}^{\mathcal{S},{\rm all}}\left(\Theta\right), \nonumber \\
		C_{{\mathcal{C}},D}^{\mathcal{S},{\rm no}}\left(\Theta\right) &\ge C_{{\mathcal{C}},D}^{\mathcal{S},{\rm all}}\left(\Theta\right).
	\end{align}
	With the same arguments, the inequalities in Eqs.~(\ref{eq:inequMono1},\ref{eq:inequMono2}) also hold for $\mathcal{E}={\rm LOCC_r}$. 
	For the monotones with respect to operations in $\mathcal{E}$, we visualize these inequalities in Fig.~\ref{fig::connectionMeasures}
	
	\begin{figure*}[tb]
		\centering
		\includegraphics[width=\textwidth]{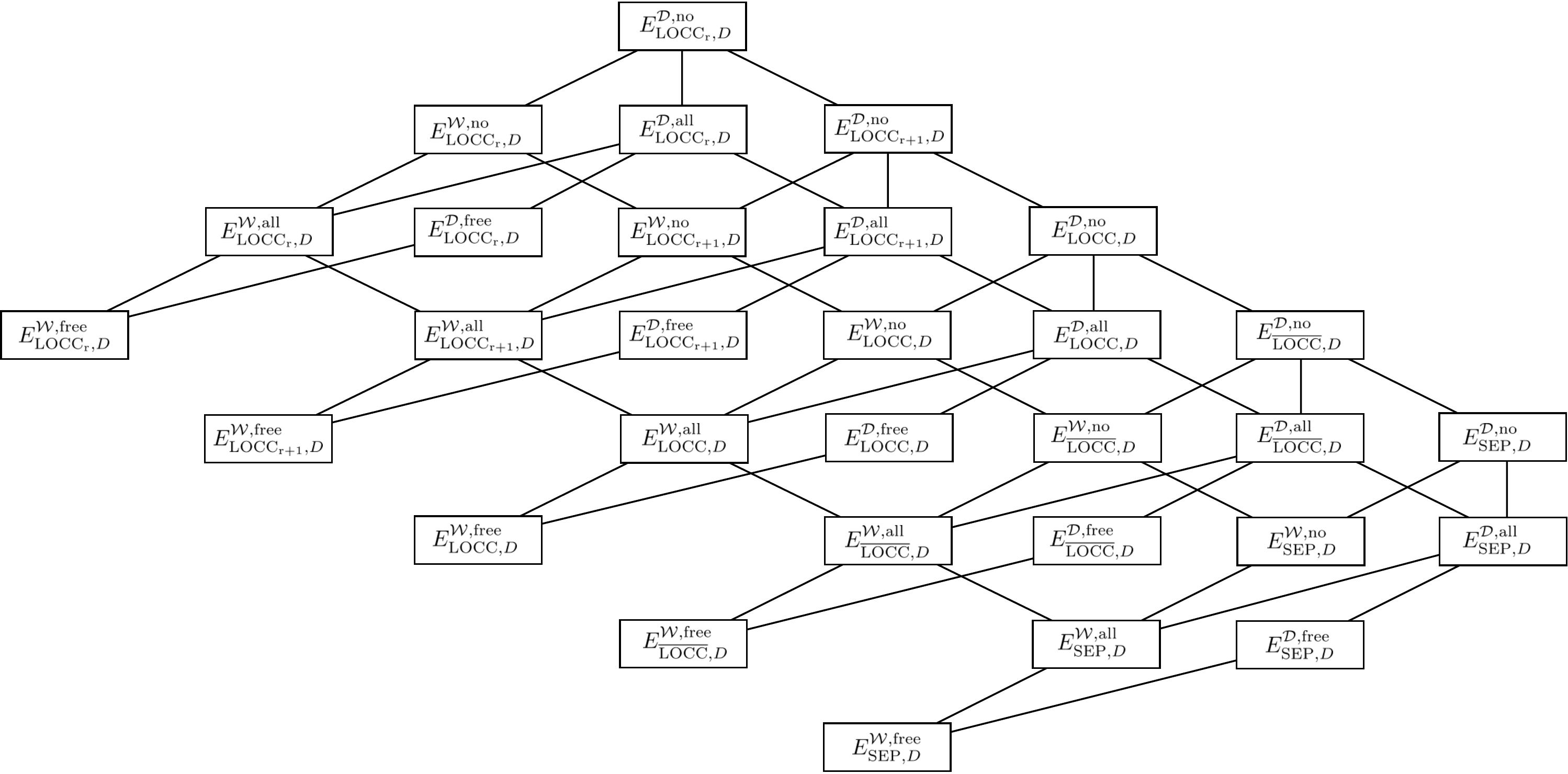}
		\caption{{\bf Bounding the entanglement monotones.} Visualization of the relation of the monotones quantifying the entanglement of operations and their relation to the quantity defined in Eq.~(\ref{eq:LOCCrMeas}). A monotone which is connected to a monotone below it is an upper bound to that one. For a more detailed discussion, see Sec.~\ref{sec:relMeasures}. }
		\label{fig::connectionMeasures}
	\end{figure*}

	\section{Reduction to resource theories of quantum states} \label{sec:reduction}
	In this Section, we discuss in detail how our findings include the results on resource theories of states presented in Ref.~\cite{Streltsov2015}. To this end, we consider preparation and replacement channels.  A replacement channel is a quantum operation with fixed output (which we will use as an index), i.e., $\Theta_\tau \rho=\tau \tr \rho$ for a fixed $\tau$.  For such an operation, we find with the help of Lem.~\ref{lem:obvious2} 
	\begin{align}
		C_{\mathcal{C},D}^{\mathcal{I},{\rm no}}\left(\Theta_\tau\right)=& \mini{\Lambda \in \mathcal{C}}\ \maxi{\sigma \in \mathcal{I}}\ D\left(\Theta_\tau\sigma,\Lambda\sigma \right) \nonumber\\
		=&\mini{\Lambda \in \mathcal{C}}\ \maxi{\sigma \in \mathcal{I}}\ D\left(\tau,\Lambda\sigma \right) \nonumber\\
		\ge& \mini{\sigma \in \mathcal{I}}\ D\left(\tau,\sigma \right),
	\end{align}
	since $\Lambda$ cannot create coherence from an incoherent state. This lower bound, however, can always be reached by the appropriate (free) replacement channel. Therefore, we find
	\begin{align}\label{eq:redToStateCoh}
		C_{\mathcal{C},D}^{\mathcal{I},{\rm no}}\left(\Theta_\tau\right)= \mini{\sigma \in \mathcal{I}}\ D\left(\tau,\sigma \right),
	\end{align}
	which was also shown in Ref.~\cite{Gour2019a} for the relative entropy. In addition, using Lem.~\ref{lem:obvious}, we find
	\begin{align}\label{eq:partRedToStateEnt}
		E_{{\mathcal{E}}^{A|B},D}^{\mathcal{W},{\rm no}}\left(\Phi^{AB}_2\left(\Theta^A_\tau\otimes \id^B\right)\Delta^{AB}\right) = &\infi{\Lambda \in {\mathcal{E}}^{A|B}}\  \maxi{\sigma \in \mathcal{W}_{A|B}} D\left(\Phi^{AB}_2\left(\tau^A\otimes \tr_A\left(\Delta^{AB} \sigma\right)\right) , \Lambda^{A|B} \sigma\right) \nonumber\\
		\ge&\infi{\Lambda \in {\mathcal{E}}^{A|B}}\   D\left(\Phi^{AB}_2\left(\tau^A\otimes  \ketbra{0}{0}^B\right) , \Lambda^{A|B}  \left(\ketbra{0}{0}^A\otimes \ketbra{0}{0}^B\right)\right) \nonumber\\
		=&  \mini{\sigma \in  \mathcal{W}_{A|B}}\ D\left(\Phi^{AB}_2\left(\tau^A\otimes  \ketbra{0}{0}^B\right) ,  \sigma \right),
	\end{align}
	which also holds for $\mathcal{E}= {\rm LOCC_{r\ge3}}$.
	From Thm.~\ref{thm:inequality} and Eqs.~(\ref{eq:redToStateCoh},\ref{eq:partRedToStateEnt}), we recover Thm.~1 of Ref.~\cite{Streltsov2015} as a special case, i.e.,
	\begin{align}
		\mini{\sigma \in \mathcal{I}}\ D\left(\tau,\sigma \right)=C_{\mathcal{C},D}^{\mathcal{I},{\rm no}}\left(\Theta_\tau\right)\ge E_{\mathcal{E}^{A|B},D}^{\mathcal{W},{\rm no}}\left(\Phi^{AB}_2\left(\Theta^A_\tau\otimes \id^B\right)\Delta^{AB}\right) \ge \mini{\sigma \in  \mathcal{W}_{A|B}}\ D\left(\Phi^{AB}_2\left(\tau^A\otimes \ketbra{0}{0}^B\right) , \sigma \right)
	\end{align}
	for all $\Phi_2\in\mathcal{C}$. Since Thm.~\ref{thm:inequality} also holds for $\mathcal{E}= {\rm LOCC_{r\ge3}}$ (which we noted below it), the above statement is also true for $\mathcal{E}= {\rm LOCC_{r\ge3}}$.
	
	Next assume that $\tilde{\mathcal{E}} \in \left\{\rm{LOCC_{r\ge3}}, \rm{\overline{LOCC}}, \rm{SEP} \right\}$. Since all of these sets are convex and closed, with the help of Thm.~2 of Ref.~\cite{Gour2019a} and Lem.~\ref{lem:obvious}, we note that
	\begin{align}\label{eq:upperBoundState}
		E_{{\tilde{\mathcal{E}}}^{A|B},S}^{\mathcal{W},{\rm no}}\left(\mathcal{U}_\mathrm{CNOT}^{AB}\left(\Theta^A_\tau\otimes \id^B\right)\Delta^{AB}\right) 
		=& \mini{\Lambda \in {\tilde{\mathcal{E}}}^{A|B}}\  \maxi{\sigma \in \mathcal{W}_{A|B}} S\left(\mathcal{U}_\mathrm{CNOT}^{AB}\left(\Theta^A_\tau\otimes \id^B\right)\Delta^{AB} \sigma, \Lambda^{A|B} \sigma   \right)  \nonumber\\
		=& \maxi{\sigma \in \mathcal{W}_{A|B}}\ \mini{\Lambda \in {\tilde{\mathcal{E}}}^{A|B}} S\left(\mathcal{U}_\mathrm{CNOT}^{AB}\left(\Theta^A_\tau\otimes \id^B\right)\Delta^{AB} \sigma, \Lambda^{A|B} \sigma   \right) \nonumber \\
		=& \maxi{\sigma \in \mathcal{I}}\ \mini{\rho \in \mathcal{W}_{A|B}} S\left(\mathcal{U}_\mathrm{CNOT}^{AB}\left(\tau^A\otimes \sigma^B\right), \rho   \right) \nonumber \\
		\le& \maxi{\Phi \in \mathcal{C}}\ \mini{\rho \in \mathcal{W}_{A|B}} S\left(\Phi^{AB}\left(\tau^A\otimes \ketbra{0}{0}^B\right), \rho   \right). 
	\end{align}
	Note that the same holds for all other divergences that satisfy Thm.~2 of Ref.~\cite{Gour2019a}, e.g., the trace distance. 
	From the discussion in Sec.~\ref{sec:relMeasures}, we remember that 
	\begin{align}
		E_{{\rm SEP}^{A|B},S}^{\mathcal{W},{\rm no}}\le E_{{\rm LOCC}^{A|B},S}^{\mathcal{W},{\rm no}} \le E_{{\rm LOCC_3}^{A|B},S}^{\mathcal{W},{\rm no}}
	\end{align}
	and therefore Eq.~(\ref{eq:upperBoundState}) is also valid for $\tilde{\mathcal{E}}={\rm LOCC}$. Using Eq.~(\ref{eq:partRedToStateEnt}), we obtain in addition
	\begin{align}
		\maxi{\Phi_1,\Phi_2 \in {\mathcal{C}}} E_{{\mathcal{E}}^{A|B},S}^{\mathcal{W},{\rm no}}\left(\Phi^{AB}_2\left(\Theta^A_\tau\otimes \id^B\right)\Phi^{AB}_1\Delta^{AB}\right)\ge \maxi{\Phi_2 \in {\mathcal{C}}}\ \mini{\sigma \in  \mathcal{W}_{A|B}}\ S\left(\Phi^{AB}_2\left(\tau^A\otimes \ketbra{0}{0}^B\right) , \sigma \right),
	\end{align}
	which, together with Thm.~\ref{thm:main}, allows us to conclude that 
	\begin{align}
		\maxi{\Phi_1,\Phi_2 \in {\mathcal{C}}} E_{{\mathcal{E}}^{A|B},S}^{\mathcal{W},{\rm no}}\left(\Phi^{AB}_2\left(\Theta^A_\tau\otimes \id^B\right)\Phi^{AB}_1\Delta^{AB}\right)=\maxi{\Phi_2 \in {\mathcal{C}}}\ \mini{\sigma \in  \mathcal{W}_{A|B}}\ S\left(\Phi^{AB}_2\left(\tau^A\otimes \ketbra{0}{0}^B\right) , \sigma \right).
	\end{align}
	Therefore, invoking again Thm.~\ref{thm:main}, this time together with Eq.~(\ref{eq:redToStateCoh}), using preparation channels, our results recover the corresponding findings on static resources presented in Eq.~(8) of Ref.~\cite{Streltsov2015}:
	\begin{align}
		\maxi{\Phi_2 \in {\rm IO}}\ \mini{\sigma \in  \mathcal{W}_{A|B}}\ S\left(\Phi^{AB}_2\left(\tau^A\otimes \ketbra{0}{0}^B\right) , \sigma \right)=\mini{\sigma \in \mathcal{I}}\ S\left(\tau,\sigma \right).
	\end{align}
	This also implies that  Cor.~\ref{cor:conversion} includes Thm.~2 of Ref.~\cite{Streltsov2015}: a state $\rho$ can be converted to an entangled state via IO if and only if $\rho$ is coherent. 
	
	All in all, the discussion in this section shows that our results on the quantitative connection of dynamic entanglement and coherence reduce to the the corresponding results on static entanglement and coherence if we consider preparation channels.

\end{document}